\theoremstyle{plain}
\newtheorem{thm}{\protect\theoremname}[section]
  \theoremstyle{plain}
  \newtheorem{prop}[thm]{\protect\propositionname}
  \theoremstyle{definition}
  \newtheorem{example}[thm]{\protect\examplename}
  \theoremstyle{remark}
  \newtheorem{rem}[thm]{\protect\remarkname}
  \theoremstyle{plain}
  \newtheorem{lem}[thm]{\protect\lemmaname}
  \theoremstyle{plain}
  \newtheorem{cor}[thm]{\protect\corollaryname}
  \theoremstyle{definition}
  \newtheorem{problem}[thm]{\protect\problemname}
  \theoremstyle{plain}
  \newtheorem{conjecture}[thm]{\protect\conjecturename}
\DeclareMathOperator{\sgn}{sgn} 
\DeclareMathOperator{\Stab}{Stab} 
\DeclareMathOperator{\NS}{NS} 
\DeclareMathOperator{\Inf}{Inf}
\DeclareMathOperator{\Lev}{Lev}
\DeclareMathOperator{\Deg}{Deg}
\DeclareMathOperator{\Dist}{Dist}
\DeclareMathOperator{\Chow}{Chow}
\DeclareMathOperator{\Gap}{Gap}
\DeclareMathOperator{\maj}{\mathrm{Maj}}
\DeclareMathOperator{\edic}{\mathrm{E-Dict}}
\DeclareMathOperator{\OR}{OR}
\global\long\def\s[#1]{\mathrm{\scriptsize #1}}
\global\long\def\P{\Pr}
\global\long\def\E{\mathbb{E}}
\global\long\def\I{\mathbbm{1}}
\global\long\def\d{\mathrm{d}}
\global\long\def\binent{h}
\global\long\def\bindiv{d}
\global\long\def\Hamd{d_H}
\global\long\def\dfn{:=}
\global\long\def\trre[#1,#2]{\overset{{\scriptstyle (#2)}}{#1}} 
\author{Nir Weinberger and Ofer Shayevitz}
\thanks{The authors are with the Department of EE--Systems, Tel Aviv University, Tel Aviv, Israel. Emails: \{nir.wein@gmail.com, ofersha@eng.tau.ac.il\}. This work was supported by an ERC grant no. 639573. The material in this paper was presented in part at the Information Theory and Applications (ITA) Workshop, San Diego, California, U.S.A., February 2018, and at the IEEE International Symposium on Information Theory (ISIT), Vail, Colorado, U.S.A., June 2018}
  \providecommand{\conjecturename}{Conjecture}
  \providecommand{\corollaryname}{Corollary}
  \providecommand{\examplename}{Example}
  \providecommand{\lemmaname}{Lemma}
  \providecommand{\problemname}{Problem}
  \providecommand{\propositionname}{Proposition}
  \providecommand{\remarkname}{Remark}
\providecommand{\theoremname}{Theorem}
\begin{document}

\title[Self-Predicting Boolean Functions]{Self-Predicting Boolean Functions}

\keywords{Boolean functions, Fourier analysis, optimal prediction, stability.}

\subjclass[2000]{06E30 }
\begin{abstract}
A Boolean function $g$ is said to be an \emph{optimal predictor}
for another Boolean function $f$, if it minimizes the probability
that $f(X^{n})\neq g(Y^{n})$ among all functions, where $X^{n}$
is uniform over the Hamming cube and $Y^{n}$ is obtained from $X^{n}$
by independently flipping each coordinate with probability $\delta$.
This paper is about \emph{self-predicting functions}, which are those
that coincide with their optimal predictor. 
\end{abstract}

\maketitle

\section{Introduction}

One of the most important properties of a Boolean function $f:\{-1,1\}^{n}\to\{-1,1\}$
is its robustness to noise in its inputs. This robustness is traditionally
measured by the \emph{noise sensitivity} of the function 
\begin{equation}
\NS_{\delta}[f]\dfn\P\left(f(X^{n})\neq f(Y^{n})\right),\label{eq: noise sensitivity}
\end{equation}
where $X^{n}\in\{-1,1\}^{n}$ is a uniform Bernoulli vector, and $Y^{n}\in\{-1,1\}^{n}$
is obtained from $X^{n}$ be flipping each coordinate independently
with probability $0<\delta<1/2$. The noise sensitivity of Boolean
functions has been extensively investigated \cite{Bool_book}, most
often in terms of the equivalent notion of \emph{stability} 
\[
\Stab_{\rho}[f]\dfn\E\left[f(X^{n})f(Y^{n})\right],
\]
where $0<\rho<1$ is the correlation parameter, i.e., $\rho\dfn\E(X_{i}Y_{i})=1-2\delta$.
The noise sensitivity of $f$ can also be interpreted as the error
probability of a predictor trying to guess the value of $f(X^{n})$
by simply applying $f$ to the noisy input $Y^{n}$. While this predictor
is intuitively appealing and easy to analyze, it is generally suboptimal.
As a simple example, think of the case where $f$ is biased and the
noise level $\delta$ is sufficiently high; it is easy to see that
a constant predictor would result in a lower error probability than
$f(Y^{n})$ would. 

The optimal predictor, i.e., the one that minimizes the error probability
in predicting $f(X^{n})$ from $Y^{n}$, is clearly given by the sign
of $\E(f(X^{n})\mid Y^{n}=y^{n})$. In general, this function might
be rather different from $f$ itself. However, while using the optimal
predictor is generally superior to using the function itself (albeit
as we shall see, by a factor of two at the most), computing the former
is often very difficult as its value in an point depends on the values
of the function over the entire Hamming cube. It is therefore interesting
to study functions that coincide with their optimal predictor; we
call these functions \emph{self-predicting} (SP). 

Clearly, SP functions exhibit a desirable property - the optimal prediction
of the function is obtained by simply applying it to the noisy inputs.
For example, suppose the function describes a voting rule and the
noise represents possible contamination of the votes (e.g., due to
fraud). If the function is SP, then any mechanism used for computing
the function with clean votes can be used without any modification
in case it turns out that the votes are actually noisy. In this case,
the output of the function is the optimal predictor for its true value.
It should be noted, however, that being SP does not imply anything
about the ordinary stability of the function. For example, all parity
functions (characters) are SP functions, including the least stable
one, to wit, the parity of all inputs (namely, the largest character).
Nonetheless, if, e.g., there are a few alternatives for choosing a
function to be used, and all of these functions have the same stability,
it is sensible to choose one of the SP functions among them (if such
exists).

Nonetheless, a function can be SP at certain noise levels but not
at others. We thus say that a function is \emph{uniformly SP (USP)}
if it is SP at any noise level. For example, in the voting scenario
mentioned above, it may not be realistic to assume that the noise
level is known, yet if the function is USP it can always be used to
obtain the optimal prediction of the true voting result.

In this paper, we introduce and explore self-predictability of Boolean
functions. We derive various properties of SP functions, and specifically
the following:
\begin{itemize}
\item If a function is monotone (resp. odd, resp. symmetric), then so is
the optimal predictor. We use this fact to show that Majority functions
are USP, and that for a monotone function, self-predictability at
dominating boundary points is necessary and sufficient for a function
to be SP. 
\item SP at high correlation: A function with Fourier degree $k$ is SP
for any $\rho>1-1/k^{2}$, and if $f$ is SP for $\rho>1-\varepsilon$
and $n=\Omega(1/\varepsilon)$, then each point $x^{n}$ has a distance-$2$
neighbor with the same function value.
\item SP at low correlation: Any function for which there exists $\rho^{*}$
such that it is SP for all $\rho\in[0,\rho^{*}]$ (abbreviated LCSP)
is spectral threshold, i.e., equal to the sign of its lowest Fourier
level. This simple fact implies many properties: LCSP functions are
either balanced or constant, they have energy at least $1/2$ on their
first level (if any), and a monotone LCSP function is $\sqrt{\frac{2}{\pi n}}$-close
to a linear threshold function.
\item Sharp threshold: All functions are trivially SP for $\rho>1-\frac{2\ln2}{n}+O(n^{-2})$.
However, the fraction of SP functions is doubly-exponential small
with $n$ whenever the correlation parameter is either $\rho=1-\frac{2\alpha}{n}$
for $\alpha>1$ or $\rho=1-2\delta$ for $\delta\in[0,\delta_{\max}]$,
$\delta_{\max}\approx0.097$.
\end{itemize}
The paper is organized as follows. Section \ref{sec:Definitions}
contains basic notation and Fourier-theory facts. The self-predictability
problem and some basic properties are introduced in Section \ref{sec: USP functions},
including the proof that Majority is USP. Section \ref{sec:high-corr}
discusses high-correlation sufficient conditions for SP, and Section
\ref{sec:low-corr} discusses low-correlation SP functions. Section
\ref{sec:stab} provides stability-based necessary conditions for
SP. In Section \ref{sec:Sharp-Threshold}, a \emph{sharp threshold}
phenomenon is proved for the SP property. The paper is concluded in
Section \ref{sec:Open Problems} with a list of open problems.

\section{Preliminaries\label{sec:Definitions}}

\subsection{Notation and Definitions}

We use upper case letters for random variables and random vectors,
and their lower case counterparts for specific realizations. For vectors
we write $x_{i}^{j}=(x_{i},\ldots,x_{j})$ and omit the subscript
whenever $i=1$, and denote a concatenation of vectors by $(x_{i}^{j},x_{k}^{m})=(x_{i},\ldots,x_{j},x_{k},\ldots,x_{m})$.
We denote the cardinality of a set $S$ by $|S|$, the complement
of the set $\mathcal{A}$ by $\mathcal{A}^{c}$, and write $[n]$
for the set $\{1,2,\ldots,n\}$. We define the indicator function
by $\I(\cdot)$, the sign function by $\sgn(z)$ where by convention
$\sgn(0)=0$, unless otherwise stated. Throughout, the logarithm $\log(t)$
is base $2$, while $\ln(t)$ is the natural logarithm. The Hamming
distance between $x^{n}$ and $y^{n}$ is $\Hamd(x^{n},y^{n})$. 

In this paper, $X^{n}$ is a uniformly distributed binary vector,
and $Y^{n}$ is the binary vector obtained by flipping each coordinate
of $X^{n}$ with some given probability $\delta\in[0,1/2]$. We write
$p(x^{n},y^{n})$ to denote the associated joint probability mass
function, and $p(x^{n}\mid y^{n})$, e.g., to denote the conditional
probability mass function. As a binary alphabet, for the most part
we will find it convenient to work with $\{-1,1\}$, in which case
it is more natural to consider the \emph{correlation} parameter $\rho\dfn\E(X_{i}Y_{i})=1-2\delta\in[0,1]$
instead of the crossover probability parameter $\delta$. We will
use the latter notations throughout the paper, with the exception
of a few proofs where we find it more convenient to work with either
$\delta$ or the binary alphabet $\{0,1\}$.

\subsection{Boolean Functions and Fourier Analysis}

In this paper we consider Boolean functions $f:\{-1,1\}^{n}\to\{-1,1\}$.
The \emph{distance} between two Boolean functions $f$ and $g$ is
defined as the fraction of inputs on which they disagree, i.e., $\P(f(X^{n})\neq g(X^{n}))$.
We say that $f$ and $g$ are \emph{$\varepsilon$-close} if their
distance is at most $\varepsilon$.

An inner product between two Boolean functions $f,g$ is defined as
\begin{equation}
\left\langle f,g\right\rangle \dfn\E\left(f(X^{n})g(X^{n})\right).\label{eq: inner product}
\end{equation}
A \emph{character} associated with a set of coordinates $S\subseteq[n]$
is the Boolean function $x^{S}\dfn\prod_{i\in S}x_{i}$, where by
convention $x^{\emptyset}=1$. It can be shown \cite[Chapter 1]{Bool_book}
that the set of all characters form an orthonormal basis with respect
to (w.r.t.) the inner product (\ref{eq: inner product}). Furthermore,
\[
f(x^{n})=\sum_{S\subseteq[n]}\hat{f}_{S}\cdot x^{S},
\]
where $\{\hat{f}_{S}\}_{S\subseteq[n]}$ are the \emph{Fourier coefficients
}of $f$, given by $\hat{f}_{S}=\langle x^{S},f\rangle=\E(X^{S}\cdot f(X^{n}))$.
When $S$ is a singleton $\{i\}\subset[n]$, we use the shorthand
$\hat{f}_{i}=\hat{f}_{\{i\}}$. The \emph{Fourier weight} of $f$
at degree $k$ is 
\[
W^{k}[f]\dfn\sum_{S\subseteq[n]\colon|S|=k}\hat{f}_{S}^{2}.
\]

Instead of the \emph{noise sensitivity} defined in (\ref{eq: noise sensitivity})
it is more common to consider the \emph{stability}, defined as 
\[
\Stab_{\rho}[f]\dfn\E\left(f(X^{n})f(Y^{n})\right),
\]
where the noise sensitivity and stability are trivially related via
\[
\Stab_{\rho}[f]=1-2\NS_{\frac{1-\rho}{2}}[f].
\]
Thus, the stability of a function is directly related to the error
probability of the possibly suboptimal predictor $f(y^{n})$ to the
function's true value $f(x^{n})$. 

The \emph{noise operator} for \emph{$\rho$}-correlated $X^{n}$ and
$Y^{n}$ is defined as
\[
T_{\rho}f(y^{n})\dfn\E\left(f(X^{n})\mid Y^{n}=y^{n}\right),
\]
and, evidently, as $\{(X_{i},Y_{i})\}$ is an i.i.d. sequence, 
\begin{equation}
T_{\rho}f(y^{n})=\E\left(\sum_{S\subseteq[n]}\hat{f}_{S}\cdot X^{S}\mid Y^{n}=y^{n}\right)=\sum_{S\subseteq[n]}\rho^{|S|}\cdot\hat{f}_{S}\cdot y^{S}.\label{eq: Fourier expansion of noise operator}
\end{equation}
The stability can then be expressed using the Fourier coefficients
and the noise operator as 
\begin{align}
\Stab_{\rho}[f] & =\E\left(\E\left(f(X^{n})f(Y^{n})\right)\mid Y^{n}\right)\nonumber \\
 & =\E\left(f(Y^{n})T_{\rho}f(Y^{n})\right)\nonumber \\
 & =\langle f,T_{\rho}f\rangle\nonumber \\
 & =\sum_{S\subseteq[n]}\rho^{|S|}\cdot\hat{f}_{S}^{2}\label{eq: Stability as Fourier}\\
 & =\left\Vert T_{\sqrt{\rho}}f\right\Vert _{2}^{2},\nonumber 
\end{align}
where (\ref{eq: Stability as Fourier}) follows from \emph{Plancherel's
identity }$\langle f,g\rangle=\E(f(X^{n})g(X^{n}))=\sum_{S\subseteq[n]}\hat{f}_{S}\hat{g}_{S}$.

A Boolean function $f$ is called a \emph{linear threshold function
(LTF)} if there exists coefficients $a_{0}^{n}\in\mathbb{R}^{n+1}$
such that 
\[
f(x^{n})=\sgn\left(a_{0}+\sum_{i=1}^{n}a_{i}x_{i}\right).
\]
Note that if $a_{0}=0$ then $f$ is \emph{balanced}, i.e., $\Pr(f(X^{n})=1)=1/2$.
More generally, a function $f$ is a \emph{polynomial threshold function}
(PTF) \cite{bruck1990harmonic} of degree $k$ if there exists $\{\hat{p}_{S}\}$
such that $\max_{S:\hat{p}_{S}\neq0}|S|=k$ and 
\begin{equation}
f(x^{n})=\sgn\left(\sum_{S\subseteq[n]}\hat{p}_{S}\cdot x^{S}\right).\label{eq: PTF}
\end{equation}
A PTF has \emph{sparsity} $s$ if $\{\hat{p}_{S}\}$ is supported
over exactly $s$ terms. For LTF and PTFs, we will always assume that
coefficients are chosen such that the polynomial inside the sign operator
is never identically zero.

\section{Optimal Prediction and Self Predicting (SP) Functions \label{sec: USP functions}}

\subsection{The Optimal Predictor}

Let $f:\{-1,1\}^{n}\to\{-1,1\}$ be some Boolean function. It is easy
to see that the optimal predictor (minimizing the error probability)
of $f(X^{n})$ given that $Y^{n}=y^{n}$ has been observed, is simply
\[
\sgn\E\left(f(X^{n})\mid Y^{n}=y^{n}\right)=\sgn T_{\rho}f(y^{n}).
\]
Note that according to our definition $\sgn(0)=0$, but ties can of
course be broken arbitrarily in any other way. 

The optimal predictor preserves several properties of the function.
We define the natural partial order $\preceq$ over $\mathbb{R}^{k}$,
where $y^{k}\preceq z^{k}$ if and only if $y_{i}\leq z_{i}$ for
all coordinates $i$. We write $\prec$ to denote the case of strict
inequality in at least one of the coordinates. 

Recall that \cite[Definition 2.8.]{Bool_book} a function $f:\{-1,1\}^{n}\to\mathbb{R}$
is called:
\begin{itemize}
\item \emph{Monotone on} $S\subseteq[n]$, if $f(y^{n})\leq f(z^{n})$ whenever
both $y^{S}\preceq z^{S}$ and $y^{[n]\setminus S}=z^{[n]\setminus S}$,
and \emph{monotone }if it is monotone on $[n]$.
\item \emph{Odd (resp. even) if} $f(x^{n})=-f(-x^{n})$ for all $x^{n}\in\{-1,1\}^{n}$
(resp. $f(x^{n})=f(-x^{n})$).
\item \emph{Symmetric }if $f(\pi(x^{n}))=f(x^{n})$ for all $x^{n}\in\{-1,1\}^{n}$
and permutation $\pi\in\mathbb{S}_{n}$ (where $\mathbb{S}_{n}$ is
the symmetric group over the set $[n]$) and $\pi(x^{n})=(x_{\pi(1)},\ldots,x_{\pi(n)})$.
\end{itemize}
\begin{prop}
\label{prop: Optimal predictor preserves OSM}For $\rho\in(0,1]$,
$\sgn T_{\rho}(\cdot)$ preserves monotonicity on any $S\subseteq[n]$,
parity (oddness or evenness) and symmetry.
\end{prop}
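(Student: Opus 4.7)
The plan is to prove each preservation property by first showing that $T_\rho f$ inherits the corresponding property of $f$ as a real-valued function on $\{-1,1\}^n$, and then observing that $\sgn$ (with the convention $\sgn(0)=0$) is itself monotone, odd, and permutation-invariant, so each property passes from $T_\rho f$ to $\sgn T_\rho f$.

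For monotonicity on $S$, the clean route is via the identity
\[
D_i(T_\rho f) \;=\; \rho\cdot T_\rho(D_i f),
\]
where $D_i$ denotes the discrete partial derivative in coordinate $i$; this is immediate from the Fourier expansion $T_\rho f(y^n)=\sum_R \rho^{|R|}\hat f_R\, y^R$ given in \eqref{eq: Fourier expansion of noise operator} by separating terms according to whether $i\in R$. Since $f$ monotone on $S$ is equivalent to $D_i f \ge 0$ pointwise for every $i\in S$, and $T_\rho$ is an averaging operator and so preserves non-negativity, we obtain $D_i(T_\rho f) \ge 0$ for every $i\in S$, which is precisely monotonicity of $T_\rho f$ on $S$.

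For parity, I would observe that $f$ is odd (resp.\ even) iff $\hat f_R = 0$ for all $|R|$ even (resp.\ odd). The Fourier coefficients of $T_\rho f$ are $\rho^{|R|}\hat f_R$, whose zero pattern matches that of $f$, so $T_\rho f$ inherits the parity of $f$. For symmetry, I would invoke invariance of the joint law of $(X^n,Y^n)$ under any simultaneous coordinate permutation $\pi\in\mathbb{S}_n$: for $f$ symmetric,
\[
T_\rho f(\pi y^n) \;=\; \E\!\left[f(X^n)\mid Y^n=\pi y^n\right] \;=\; \E\!\left[f(\pi X^n)\mid Y^n=y^n\right] \;=\; T_\rho f(y^n),
\]
so $T_\rho f$ is symmetric as well.

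The only step that genuinely requires care is monotonicity: the naive coordinate-wise coupling of the conditional laws of $X^n\mid Y^n$ for $y^n\preceq z^n$ is not monotone, because a noise flip reverses the order on the affected coordinate. The Fourier-derivative identity cleanly sidesteps this issue. By contrast, parity and symmetry follow essentially tautologically once one notes that the Fourier multiplier $\rho^{|R|}$ depends only on $|R|$, hence commutes with both the parity selector and with arbitrary coordinate permutations.
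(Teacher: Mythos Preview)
Your proof is correct. Parity is handled exactly as in the paper, and your probabilistic symmetry argument is essentially equivalent to the paper's Fourier-characterization version (the paper notes that $\hat f_S$ depending only on $|S|$ forces $T_\rho f$ to be symmetric, then composes with $\sgn$). The one genuine difference is monotonicity: the paper argues directly by expanding $T_\rho f(z^n)-T_\rho f(y^n)$ for $y^n,z^n$ differing in a single coordinate and invoking $\delta\le 1/2$ together with monotonicity of $f$ on that coordinate. This is in effect the monotone-coupling route you dismissed; it does go through, because one should couple the \emph{conditional} laws of $X_i$ given $Y_i=-1$ versus $Y_i=1$ (which are stochastically ordered for $\delta\le 1/2$) rather than reuse the same noise bit. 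Your alternative via the identity $D_i(T_\rho f)=\rho\,T_\rho(D_i f)$ and positivity of the averaging operator $T_\rho$ is arguably cleaner, makes the role of $\rho>0$ transparent, and generalizes immediately to any nonnegative Fourier multiplier that is nonincreasing in $|S|$; the paper's probabilistic computation, on the other hand, needs no Fourier analysis at all.
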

\begin{proof}
~
\begin{itemize}
\item \emph{Monotonicity}: This property stems from the fact that the operator
$T_{\rho}$ itself preserves monotonicity (for $\rho\in(0,1]$) \cite[Proof of Proposition 4.4]{keller2009probability},
\cite[Claim 2.4. (b)]{kalai2016correlation}. A short proof is given
for the sake of completeness. Assume that $f(y^{n})=1$ and let $z^{n}$
satisfy $y^{S}\preceq z^{S}$ and $y^{[n]\setminus S}=z^{[n]\setminus S}$.
We prove the statement for a singleton $S$, say $S=\{n\}$. The general
case then follows by applying the same argument repeatedly. If $y_{n}=1$
the claim is trivial. Assume $y_{n}=-1$ and let $z^{n}$ agree with
$y^{n}$ except on the $n$th coordinate. Due to monotonicity of $f$,
we have that $f(z^{n})=1$. Then 
\begin{align*}
T_{\rho}f(z^{n}) & =\sum_{x^{n}}p(x^{n}\mid z^{n})f(x^{n})\\
 & =\sum_{x^{n-1}}\sum_{x_{n}}p(x^{n-1}\mid y^{n-1})p(x_{n}\mid1)f(x^{n})\\
 & =\sum_{x^{n-1}}p(x^{n-1}\mid y^{n-1})\left[\delta f(x^{n-1},-1)+(1-\delta)f(x^{n-1},1)\right]\\
 & \geq\sum_{x^{n-1}}p(x^{n-1}\mid y^{n-1})\left[(1-\delta)f(x^{n-1},-1)+\delta f(x^{n-1},1)\right]\\
 & =T_{\rho}f(y^{n})
\end{align*}
where the inequality holds since $f$ is monotone on the $n$th coordinate
(and $\delta\in[0,1/2)$). Hence, $\sgn T_{\rho}f(z^{n})\geq\sgn T_{\rho}f(y^{n})$. 
\item \emph{Parity}: $f$ is odd if and only if $\hat{f}_{S}=0$ for all
$S\subseteq[n]$ such that $|S|$ is even \cite[Exercise 1.8]{Bool_book}.
It follows from the Fourier expansion of $T_{\rho}f$ (\ref{eq: Fourier expansion of noise operator})
that if $f$ is odd then so is $T_{\rho}f$, i.e. $T_{\rho}f(x^{n})+T_{\rho}f(-x^{n})=0$
for all $x^{n}\in\{-1,1\}^{n}$. Thus, $\sgn T_{\rho}f$ is also odd
(utilizing the convention $\sgn(0)=0$). The proof for even functions
is similar.
\item \emph{Symmetry}: $f$ is symmetric if and only if $\hat{f}_{S}$ depends
on $S$ only via $|S|$. Hence (\ref{eq: Fourier expansion of noise operator})
implies that if $f$ is symmetric then so is $T_{\rho}f$. A composition
of scalar function and a symmetric function results in a symmetric
function, and thus $\sgn T_{\rho}f$ is symmetric.
\end{itemize}
\end{proof}
We say that a Boolean function $f$ is \emph{$\rho$-self-predicting
($\rho$-SP) at $y^{n}$}, if the optimal predictor given $y^{n}$
at correlation level $\rho$ coincides with the function itself whenever
it is not tied, i.e., if 
\[
f(y^{n})=\sgn T_{\rho}f(y^{n}),
\]
whenever $T_{\rho}f(y^{n})\neq0$. The function $f$ is called $\rho$-SP
if it is $\rho$-SP for any $y^{n}\in\{-1,1\}^{n}$. We say that $f$
is \emph{uniformly self-predicting (USP)} if it is $\rho$-SP for
any $\rho\in[0,1]$. We also say that $f$ is \emph{low-correlation
self-predicting (LCSP)}, if there exists some $\rho^{*}>0$ such that
$f$ is $\rho$-SP for all $\rho\in[0,\rho^{*})$.

We note in passing that seemingly plausible properties may not hold
in general:
\begin{example}
\label{exa:balanced function may not have balanced predictor}The
optimal predictor of a balanced function may not be balanced. For
example, the function
\begin{align*}
 & \frac{1}{4}(2x_{1}+x{}_{3}-2x{}_{1}x{}_{2}+x{}_{1}x{}_{3}+x{}_{2}x{}_{3}-x{}_{3}x{}_{4}\\
 & \hphantom{=}+x{}_{1}x{}_{2}x{}_{3}+x{}_{1}x{}_{3}x{}_{4}-x{}_{2}x{}_{3}x{}_{4}+x{}_{1}x{}_{2}x{}_{3}x{}_{4})
\end{align*}
is a balanced function, yet $\sgn T_{\rho}f$ is non-balanced when
$\rho=1/2$.
\end{example}
\begin{example}
\label{exa: LTF example}In the following sections we explore functions
that are SP for high or low correlation. However, self-predictability
is not necessarily a monotone property in $\rho$. To wit, if a function
is $\rho_{0}$-SP then it might not be $\rho$-SP for some $\rho\geq\rho_{0}$.
Indeed, there are functions that admit such an ``irregular'' behavior.
We have numerically analyzed LTFs with randomly drawn coefficients,
and found, for example, that the balanced LTF with $n=11$ and coefficients
\[
a_{1}^{11}=(13,43,67,67,67,117,153,165,165,179,179)
\]
is $\rho$-SP only for $\rho\in[0,0.312]\cup(0.544,1]$. 
\end{example}

\subsection{Elementary USP Functions}

The following fact follows easily from the definition. 
\begin{prop}
\label{prop: Characters are USP}All the characters are USP. 
\end{prop}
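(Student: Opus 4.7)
My plan is to verify directly that each character $\chi_S(x^n) \dfn x^S$ satisfies $\chi_S(y^n) = \sgn T_\rho \chi_S(y^n)$ whenever $T_\rho \chi_S(y^n) \neq 0$, for every $\rho \in [0,1]$ and every $y^n \in \{-1,1\}^n$. The key observation is that characters are the eigenfunctions of the noise operator, and they are eigenfunctions with nonnegative eigenvalues; the sign operator is therefore a no-op on them.

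Concretely, I would begin by invoking the Fourier expansion of the noise operator displayed in \eqref{eq: Fourier expansion of noise operator}. Since the Fourier coefficients of the character $\chi_S$ are $\widehat{\chi_S}_T = \I(T = S)$, this expansion collapses to
\[
T_\rho \chi_S (y^n) \;=\; \rho^{|S|}\, y^S \;=\; \rho^{|S|}\,\chi_S(y^n).
\]
The eigenvalue $\rho^{|S|}$ is nonnegative for every $\rho \in [0,1]$, and $\chi_S(y^n) \in \{-1,1\}$. Hence $\sgn T_\rho \chi_S(y^n) = \sgn(\rho^{|S|})\cdot \chi_S(y^n)$.

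There are two cases. If $\rho^{|S|} > 0$ (which covers all $\rho \in (0,1]$, and also $\rho = 0$ when $S = \emptyset$), then $\sgn T_\rho \chi_S(y^n) = \chi_S(y^n)$, so $\chi_S$ is $\rho$-SP at $y^n$. If $\rho^{|S|} = 0$ (which can happen only for $\rho = 0$ and $S \neq \emptyset$), then $T_\rho \chi_S(y^n) = 0$ identically, and the self-predictability condition is vacuous at every $y^n$ by definition. Since this exhausts all $\rho \in [0,1]$ and all $y^n$, the character $\chi_S$ is USP.

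There is no substantial obstacle here: the entire content is the observation that $T_\rho$ acts diagonally in the Fourier basis with nonnegative eigenvalues, which is immediate from \eqref{eq: Fourier expansion of noise operator}, together with the careful handling of the tie case $\rho = 0$, $S \neq \emptyset$ which is covered by the convention in the definition of SP.
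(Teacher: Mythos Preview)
Your proof is correct and follows essentially the same approach as the paper: both compute $T_\rho \chi_S(y^n) = \rho^{|S|} y^S$ directly from the Fourier expansion of the noise operator and observe that the sign is preserved. Your version is slightly more careful in explicitly treating the degenerate case $\rho = 0$, $S \neq \emptyset$, which the paper glosses over (it writes $\sgn(\rho^{|S|} y^S) = \sgn(y^S)$ without comment), but this is a cosmetic difference only.
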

\begin{proof}
Let $f(x^{n})=x^{S}$ for some $S\subseteq[n]$. Then for any $y^{n}$,
\begin{align*}
\sgn T_{\rho}f(y^{n}) & =\sgn\left(\rho^{|S|}\cdot y^{S}\right)\\
 & =\sgn\left(y^{S}\right)\\
 & =f(y^{n}).
\end{align*}
\end{proof}
We next show that Majority (for odd $n$), given by, 
\[
\maj(x^{n})\dfn\sgn\sum_{i\in[n]}x_{i}
\]
is USP. While this property is plausible, it does not stem from only
analyzing the ``local'' behavior of the function. Specifically,
at a boundary point $y^{n}$, i.e., one for which $\sum_{i\in[n]}y_{i}=\pm1$,
there are more neighbors in the immediate neighborhood of $y^{n}$
(say, Hamming distance one or two) who disagree with $y^{n}$ on the
value of the function, than those who agree with it. Thus, any proof
that such a point is SP for all $\rho\in(0,1]$ cannot rely only on
the local values of the function in the vicinity of that point. Rather,
it should take into account the function's value in larger neighborhoods,
or even over the entire Hamming cube.
\begin{thm}
\label{thm: Majority USP}Majority is USP. 
\end{thm}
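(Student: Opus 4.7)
The plan is to combine the three invariances of $\maj$ (symmetry, monotonicity, and, for $n$ odd, oddness) with Proposition \ref{prop: Optimal predictor preserves OSM}: since $\sgn T_{\rho}$ preserves each of these properties, the optimal predictor is forced into such a rigid form that it must agree with $\maj$ wherever it is nonzero. This sidesteps entirely the boundary-point difficulty highlighted in the paragraph preceding the statement, since the whole argument is global rather than local.

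Concretely, I would first check that $\maj$ is symmetric, monotone on $[n]$, and (because $n$ is odd) odd. By Proposition \ref{prop: Optimal predictor preserves OSM}, for any $\rho\in(0,1]$ the predictor $h_\rho(y^n)\dfn\sgn T_\rho\maj(y^n)$ inherits all three properties. Symmetry reduces $h_\rho$ to a function of the signed sum $s\dfn\sum_{i=1}^{n}y_i$, say $h_\rho(y^n)=\eta(s)$ with $\eta:\{-n,-n+2,\ldots,n-2,n\}\to\{-1,0,1\}$. Monotonicity of $h_\rho$ in every coordinate (together with symmetry, which lets us freely permute the positions of the $\pm 1$'s) translates into $\eta$ being non-decreasing, while oddness translates into $\eta(-s)=-\eta(s)$.

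I would then exploit these two constraints to pin down the signs of $\eta$. Monotonicity gives $\eta(-1)\leq\eta(1)$ and oddness gives $\eta(-1)=-\eta(1)$, which together force $\eta(1)\geq 0$; iterating monotonicity yields $\eta(s)\geq 0$ for all $s\geq 1$ and, by oddness, $\eta(s)\leq 0$ for all $s\leq -1$. Since $n$ is odd, $s$ is always a nonzero odd integer, so $\maj(y^n)=\sgn(s)$. Hence whenever $T_\rho\maj(y^n)\neq 0$, i.e.\ $\eta(s)\neq 0$, the bounds above force $\eta(s)=\sgn(s)=\maj(y^n)$, which is exactly the $\rho$-SP condition. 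The endpoints $\rho\in\{0,1\}$ are immediate: at $\rho=1$ one has $T_1\maj=\maj$, and at $\rho=0$ balancedness of $\maj$ for odd $n$ gives $T_0\maj\equiv\E\maj(X^n)=0$, so the SP condition holds vacuously.

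The main pitfall to avoid is the one warned against in the remark just before the statement, namely the temptation to verify $\rho$-SP at boundary points such as those with $s=\pm 1$ via some delicate local estimate of $T_\rho\maj(y^n)$; the route above uses no quantitative estimate at all and relies only on the qualitative invariances supplied by Proposition \ref{prop: Optimal predictor preserves OSM}. The only nontrivial step beyond invoking that proposition is the short monotonicity-plus-oddness calculation yielding $\eta(1)\geq 0$, which is what does all the work.
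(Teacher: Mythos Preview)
Your proof is correct and follows essentially the same approach as the paper's. Both arguments invoke Proposition~\ref{prop: Optimal predictor preserves OSM} to transfer symmetry, monotonicity, and oddness to $\sgn T_\rho\maj$, and then combine monotonicity with oddness to force $\sgn T_\rho\maj(y^n)\geq 0$ whenever $\maj(y^n)=1$; the only cosmetic difference is that you first pass to the one-dimensional function $\eta(s)$, whereas the paper compares the concrete vectors $(1^w,-1^{n-w})$ and $(1^{n-w},-1^w)$ directly.
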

\begin{proof}
Since $\maj$ is monotone, odd and symmetric, then so is $\sgn T_{\rho}\maj$
(Proposition \ref{prop: Optimal predictor preserves OSM}). Hence,
for all $x^{n}\in\{-1,1\}^{n}$
\begin{equation}
\sgn T_{\rho}\maj(x^{n})+\sgn T_{\rho}\maj(-x^{n})=0.\label{eq: sgn T_rho majority is odd explicit}
\end{equation}
Consider without loss of generality $x^{n}$ such that $\maj(x^{n})=1$,
i.e., if $w$ is the number of $1$'s in $x^{n}$, then $w>n-w$.
Then, $\overline{x}=(1^{w},-1^{n-w})$ and $\tilde{x}=(1^{n-w},-1^{w})$
satisfy $\tilde{x}^{n}\preceq\overline{x}^{n}$, and from symmetry,
\[
\sgn T_{\rho}\maj(x^{n})=\sgn T_{\rho}\maj(\overline{x}^{n})\geq\sgn T_{\rho}\maj(\tilde{x}^{n})=\sgn T_{\rho}\maj(-x^{n}).
\]
Hence, (\ref{eq: sgn T_rho majority is odd explicit}) implies that
$\sgn T_{\rho}\maj(x^{n})\geq0$, as was required to be proved.
\end{proof}
\begin{rem}
\label{rem: proof of majority is USP via May's theorem}An indirect
way of proving Theorem \ref{thm: Majority USP} is via May's theorem
\cite[Ex. 2.3]{Bool_book}: Since $\sgn T_{\rho}\maj$ is monotone,
odd and symmetric, it must be the majority function itself.
\end{rem}
By numerically experimenting with simple LTFs one can find that Majority
(and characters) are not the only USP functions, and not even the
only USP LTFs. Specifically:
\begin{example}
\label{exa:USP not majority}The balanced LTFs with $n=5$ and coefficients
$a_{1}^{5}=(1,1,3,3,5)$, with $n=7$ and coefficients $a_{1}^{7}=(1,1,3,3,3,5,7)$,
with $n=9$ and coefficients $a_{1}^{9}=(1,1,3,3,3,5,5,5,7)$, with
$n=11$ and coefficients $a_{1}^{11}=(1,1,3,3,3,3,5,5,5,7,7)$ can
all be verified by direct computation to be USP. 

In the next section we generate classes of USP functions by utilizing
operations which preserve the SP property.
\end{example}

\subsection{SP/USP Preserving Operators}

Let us next discuss several operations that preserve self-predictability.
First, we note that self-predictability is invariant to negation of
inputs. We write $\circ$ for the Hadamard product. 
\begin{prop}
\label{prop:Input flip preserves SP}Let $a^{n}\in\{-1,1\}^{n}$.
Then, $f(x^{n})$ is $\rho$-SP if and only if $f(a^{n}\circ x^{n})$
is $\rho$-SP. 
\end{prop}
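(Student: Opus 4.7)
The plan is to reduce the statement to the pointwise identity
\[
T_\rho g(y^n) \;=\; T_\rho f(a^n\circ y^n), \qquad y^n\in\{-1,1\}^n,
\]
where $g(x^n)\dfn f(a^n\circ x^n)$. Granted this identity, and since also $g(y^n)=f(a^n\circ y^n)$ by definition, we have $g(y^n)=\sgn T_\rho g(y^n)$ (with the same tie‐convention) if and only if $f(a^n\circ y^n)=\sgn T_\rho f(a^n\circ y^n)$. Hence $g$ is $\rho$‐SP at $y^n$ iff $f$ is $\rho$‐SP at $a^n\circ y^n$. Because $a^n\circ a^n=1^n$, the map $y^n\mapsto a^n\circ y^n$ is an involution, hence a bijection, on $\{-1,1\}^n$; letting $y^n$ range over the hypercube shows that $g$ is $\rho$‐SP if and only if $f$ is $\rho$‐SP.

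To establish the displayed identity I would give a probabilistic argument: the noise channel commutes with the deterministic coordinate flip $\phi(z^n)\dfn a^n\circ z^n$. Indeed, conditionally on $X^n=x^n$ each coordinate of $Y^n$ flips independently with probability $\delta$, and this property is preserved under a deterministic sign flip in the corresponding coordinates, so the joint law of $(\phi(X^n),\phi(Y^n))$ equals that of $(X^n,Y^n)$. Since $\phi$ is a deterministic bijection, conditioning on $Y^n=y^n$ is the same as conditioning on $\phi(Y^n)=\phi(y^n)$, and therefore
\[
T_\rho g(y^n) \;=\; \E\!\left[f(\phi(X^n))\mid Y^n=y^n\right] \;=\; \E\!\left[f(X^n)\mid Y^n=\phi(y^n)\right] \;=\; T_\rho f(a^n\circ y^n).
\]

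As a sanity check, the same identity drops out of the Fourier picture: from $g(x^n)=\sum_S \hat f_S\, a^S x^S$ one reads off $\hat g_S=a^S \hat f_S$, and then (\ref{eq: Fourier expansion of noise operator}) gives
\[
T_\rho g(y^n) \;=\; \sum_{S\subseteq[n]}\rho^{|S|}\, a^S\hat f_S\, y^S \;=\; \sum_{S\subseteq[n]}\rho^{|S|}\, \hat f_S\, (a^n\circ y^n)^S \;=\; T_\rho f(a^n\circ y^n).
\]
There is no real obstacle here; the proposition is essentially the invariance of self‐predictability under the obvious symmetry group of the hypercube with respect to which the binary symmetric noise channel is equivariant, and the only technical point is the commutation $T_\rho\circ \phi^{*} = \phi^{*}\circ T_\rho$ verified above (where $\phi^{*}h(y^n)\dfn h(\phi(y^n))$).
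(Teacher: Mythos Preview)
Your proof is correct and is exactly the straightforward argument the paper alludes to when it writes ``The straightforward proof is omitted.'' Both the probabilistic commutation $T_\rho\circ\phi^*=\phi^*\circ T_\rho$ and the Fourier check $\hat g_S=a^S\hat f_S$ are valid routes to the key identity $T_\rho g(y^n)=T_\rho f(a^n\circ y^n)$, after which the bijection $y^n\mapsto a^n\circ y^n$ finishes the job.
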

The straightforward proof is omitted. Next, we consider the case of
separable functions. 
\begin{prop}
\label{prop:seperable functions}Let $f(x^{n})=g(x_{1}^{k})\cdot h(x_{k+1}^{n})$.
Then $f$ is $\rho$-SP if and only if both $g$ and $h$ are $\rho$-SP. 
\end{prop}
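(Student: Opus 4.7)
The plan is to use the independence of the noise across coordinates to factorize $T_{\rho}f$. Since $g$ depends only on $X_{1}^{k}$ and $h$ only on $X_{k+1}^{n}$, conditioning gives
\[
T_{\rho}f(y^{n})=\E\left[g(X_{1}^{k})h(X_{k+1}^{n})\mid Y^{n}=y^{n}\right]=T_{\rho}g(y_{1}^{k})\cdot T_{\rho}h(y_{k+1}^{n}).
\]
In particular, $T_{\rho}f(y^{n})=0$ iff at least one of the two factors vanishes, and when both are nonzero $\sgn T_{\rho}f(y^{n})=\sgn T_{\rho}g(y_{1}^{k})\cdot\sgn T_{\rho}h(y_{k+1}^{n})$. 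The easy direction ($\Leftarrow$) is then immediate: if $g,h$ are $\rho$-SP, then at any $y^{n}$ with $T_{\rho}f(y^{n})\neq 0$ both factors are nonzero, so $g(y_{1}^{k})=\sgn T_{\rho}g(y_{1}^{k})$ and $h(y_{k+1}^{n})=\sgn T_{\rho}h(y_{k+1}^{n})$, and multiplying recovers $f(y^{n})=\sgn T_{\rho}f(y^{n})$.

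For the converse, I first note that for $\rho>0$ a nonzero Boolean function $g$ must satisfy $T_{\rho}g\not\equiv 0$ by (\ref{eq: Fourier expansion of noise operator}) and uniqueness of the Fourier expansion, and similarly for $h$. Picking any $y_{1}^{k}$ and $y_{k+1}^{n}$ at which both factors are nonzero, the SP property of $f$ combined with the factorization gives
\[
g(y_{1}^{k})h(y_{k+1}^{n})=\sgn T_{\rho}g(y_{1}^{k})\cdot\sgn T_{\rho}h(y_{k+1}^{n}),
\]
and since all quantities lie in $\{-1,+1\}$, this rearranges to
\[
g(y_{1}^{k})\,\sgn T_{\rho}g(y_{1}^{k})=h(y_{k+1}^{n})\,\sgn T_{\rho}h(y_{k+1}^{n}).
\]
The left side depends only on $y_{1}^{k}$ and the right only on $y_{k+1}^{n}$, so both sides are equal to a common constant $c\in\{-1,+1\}$ independent of the point chosen; the case $c=+1$ is exactly the SP property for $g$ and $h$.

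The main obstacle is ruling out $c=-1$, which would mean $g$ and $h$ act as anti-optimal predictors of themselves. I would dispatch this case with a stability argument: if $g(y_{1}^{k})=-\sgn T_{\rho}g(y_{1}^{k})$ at every $y_{1}^{k}$ with $T_{\rho}g(y_{1}^{k})\neq 0$, then
\[
\Stab_{\rho}[g]=\langle g,T_{\rho}g\rangle=-\E\bigl|T_{\rho}g(Y_{1}^{k})\bigr|\leq 0,
\]
while (\ref{eq: Stability as Fourier}) gives $\Stab_{\rho}[g]=\sum_{S}\rho^{|S|}\hat{g}_{S}^{2}\geq 0$. The two bounds force $T_{\rho}g\equiv 0$, which for $\rho>0$ contradicts $g$ being $\{-1,1\}$-valued. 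Hence $c=+1$, and both $g$ and $h$ are $\rho$-SP.
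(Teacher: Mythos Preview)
Your proof is correct and follows essentially the same route as the paper: factorize $T_{\rho}f = T_{\rho}g\cdot T_{\rho}h$, and for the converse use that $\Stab_{\rho}[g]=\langle g,T_{\rho}g\rangle\geq 0$ to rule out the ``anti-predictor'' possibility. The only difference is organizational: the paper invokes Lemma~\ref{lem: Any function is SP for some output} to first locate a point at which $h$ is (strictly) $\rho$-SP and then reads off the SP property of $g$, whereas you derive the common constant $c$ and eliminate $c=-1$ by carrying out the same stability computation directly.
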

\begin{proof}
If $g$ and $h$ are both $\rho$-SP then for any $y^{n}$, 
\begin{align*}
\sgn T_{\rho}f(y^{n}) & =\sgn T_{\rho}\left(g(y^{k})\cdot h(y_{k+1}^{n})\right)\\
 & =\sgn\left(T_{\rho}g(y^{k})\cdot T_{\rho}h(y_{k+1}^{n})\right)\\
 & =g(y_{1}^{k})\cdot h(y_{k+1}^{n})\\
 & =f(y^{n}).
\end{align*}

Conversely, suppose that $f$ is $\rho$-SP for some $\rho\in(0,1]$.
Then there must exist at least one point $y_{k+1}^{n}$ at which $h$
is $\rho$-SP, since if this was not the case, then $T_{\rho}h(y_{k+1}^{n})\cdot f(h_{k+1}^{n})\leq0$
holds for all $h_{k+1}^{n}$. This, however, is impossible since
\[
\E\left[T_{\rho}h(Y_{k+1}^{n})\cdot h(Y_{k+1}^{n})\right]=\sum\rho^{|S|}\hat{h}_{S}^{2}>0.
\]
Hence, without loss of generality, we may assume that $h(y_{k+1}^{n})=1$.
Then for any $y^{k}$ 
\begin{align*}
\sgn T_{\rho}g(y^{k}) & =\sgn T_{\rho}g(y^{k})\cdot\sgn T_{\rho}h(y_{k+1}^{n})\\
 & =\sgn\left(T_{\rho}g(y^{k})\cdot T_{\rho}h(y_{k+1}^{n})\right)\\
 & =\sgn T_{\rho}f(y^{n})\\
 & =f(y^{n})\\
 & =g(y^{k})\cdot h(y_{k+1}^{n})\\
 & =g(y^{k}).
\end{align*}
Hence $g$, and symmetrically, also $h$, are $\rho$-SP. 
\end{proof}
Note that Proposition \ref{prop: Characters are USP} also follows
as a simple corollary to Proposition \ref{prop:seperable functions}.
Next, we consider functions of equal-size disjoint characters.\textbf{
} 
\begin{prop}
\label{prop:majority of independent characters}Let $\{S_{\ell}\subseteq[n]\}_{\ell\in[m]}$
be disjoint subsets of equal size $|S_{\ell}|=w$. Let $f:\{-1,1\}^{m}\to\{-1,1\}$
be $\rho^{w}$-SP. Then $f(x^{S_{1}},x^{S_{2}},\ldots,x^{S_{m}})$
is $\rho$-SP. 
\end{prop}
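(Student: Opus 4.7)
The plan is to reduce the noise operator applied to the composite function to a noise operator at correlation $\rho^w$ applied to $f$ itself, after which the assumed $\rho^w$-self-predictability of $f$ closes the argument.

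First, I would set $g(x^n) \dfn f(x^{S_0}, x^{S_1}, \ldots, x^{S_{m-1}})$ and expand $f$ in its Fourier basis over $\{-1,1\}^m$, writing $f(u^m) = \sum_{T \subseteq [m]} \hat{f}_T \prod_{\ell \in T} u_\ell$. Substituting $u_\ell = x^{S_\ell}$ gives the expansion
\[
g(x^n) = \sum_{T \subseteq [m]} \hat{f}_T \prod_{\ell \in T} x^{S_\ell} = \sum_{T \subseteq [m]} \hat{f}_T \, x^{U_T},
\]
where $U_T \dfn \bigcup_{\ell \in T} S_\ell$. Here I use crucially that the $S_\ell$ are \emph{disjoint}, which ensures $\prod_{\ell \in T} x^{S_\ell} = x^{U_T}$ (no cancellation of repeated coordinates) and $|U_T| = w|T|$.

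Next I would apply the noise operator and use the Fourier expansion in~(\ref{eq: Fourier expansion of noise operator}):
\[
T_\rho g(y^n) = \sum_{T \subseteq [m]} \rho^{|U_T|} \hat{f}_T \, y^{U_T} = \sum_{T \subseteq [m]} (\rho^w)^{|T|} \hat{f}_T \prod_{\ell \in T} y^{S_\ell}.
\]
Defining $z_\ell \dfn y^{S_\ell} \in \{-1,1\}$ and comparing with the Fourier expansion of $T_{\rho^w} f$, we obtain the key identity
\[
T_\rho g(y^n) = T_{\rho^w} f(z^m), \qquad \text{with } z_\ell = y^{S_\ell}.
\]

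Finally, since $f$ is $\rho^w$-SP, whenever $T_{\rho^w} f(z^m) \neq 0$ we have $\sgn T_{\rho^w} f(z^m) = f(z^m)$. Unwinding definitions, this reads $\sgn T_\rho g(y^n) = f(y^{S_0}, \ldots, y^{S_{m-1}}) = g(y^n)$, exactly the $\rho$-SP condition for $g$ at $y^n$. Since $y^n$ was arbitrary, $g$ is $\rho$-SP.

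I do not foresee a significant obstacle: the only slightly delicate point is keeping track that disjointness of the $S_\ell$ is precisely what makes the characters multiply cleanly to $x^{U_T}$ with $|U_T| = w|T|$, so that the exponent $\rho^{|U_T|}$ factors as $(\rho^w)^{|T|}$ and the composition collapses to a single application of $T_{\rho^w}$ to $f$.
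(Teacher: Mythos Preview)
Your proposal is correct and follows essentially the same approach as the paper's own proof: both expand $g$ (the paper calls it $h$) in the Fourier basis, use disjointness of the $S_\ell$ to identify its Fourier coefficients with those of $f$ supported on the sets $U_T=\bigcup_{\ell\in T}S_\ell$ of size $w|T|$, and then recognize $T_\rho g(y^n)=T_{\rho^w}f(y^{S_0},\ldots,y^{S_{m-1}})$ before invoking the $\rho^w$-SP hypothesis on $f$. Your explicit introduction of $z_\ell=y^{S_\ell}$ and your remark that disjointness is exactly what makes $|U_T|=w|T|$ are nice clarifying touches, but the argument is otherwise identical.
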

\begin{proof}
By equating coefficients of the Fourier representation (which are
unique), it is readily obtained that the Fourier coefficients of $h(x^{n})=f(x^{S_{1}},x^{S_{2}},\ldots,x^{S_{m}})$
are given by 
\[
\hat{h}_{S}=\begin{cases}
\hat{f}_{T}, & S=\cup_{t\in T}S_{t}\\
0, & \text{otherwise}
\end{cases}.
\]
Hence, 
\begin{align*}
\sgn T_{\rho}h(y^{n}) & =\sgn\sum_{S\subseteq[n]}\rho^{|S|}\hat{h}_{S}y^{S}\\
 & =\sgn\sum_{T\subseteq[m]}\rho^{w|T|}\cdot\hat{h}_{\cup_{t\in T}S_{t}}\cdot y^{\cup_{t\in T}S_{t}}\\
 & =\sgn\sum_{T\subseteq[m]}\rho^{w|T|}\hat{f}_{T}\prod_{t\in T}y^{S_{t}}\\
 & =\sgn T_{\rho^{w}}f(y^{S_{1}},y^{S_{2}},\ldots,y^{S_{m}})\\
 & =f(y^{S_{1}},y^{S_{2}},\ldots,y^{S_{m}})\\
 & =h(y^{n}).
\end{align*}
\end{proof}
\begin{example}
\label{exa:Majority characters and operations}Using the fact that
characters and Majority are USP functions, together with Propositions
\ref{prop:Input flip preserves SP}, \ref{prop:seperable functions}
and \ref{prop:majority of independent characters}, we can construct
many distinct USP functions. For example, the function 
\[
\sgn\left((x_{1}x_{2}+x_{3}x_{4}+x_{5}x_{6})\cdot(x_{7}x_{8}x_{9}-x_{10}x_{11}x_{12}-x_{13}x_{14}x_{15})\cdot x_{16}\right)
\]
is USP.
\end{example}
Nonetheless, there are USP functions that cannot be constructed from
characters and Majority this way. For example, none of these functions
can be an LTF, as the USP functions in Example \ref{exa:USP not majority}. 

\subsection{Closeness to SP and Strong Stability}

How far can a function be from self predicting? We say that a function
is \emph{$\varepsilon$-close to $\rho$-SP}, to mean that $f$ and
its optimal predictor $\sgn T_{\rho}f$ are $\varepsilon$-close. 
\begin{lem}
\label{lem: Any function is SP for some output}Any function $f$
is $\sum_{S\subseteq[n]}(1-\rho^{|S|})\hat{f}_{S}^{2}$-close to $\rho$-SP.
 
\end{lem}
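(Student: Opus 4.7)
The plan is to reduce the ``closeness'' quantity to an $L^2$--type quantity that is straightforward to evaluate via Plancherel, and then bound it by the claimed expression.

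First I would unpack the definition: the closeness of $f$ to its optimal predictor is
\[
\Pr\bigl(f(Y^n)\neq\sgn T_\rho f(Y^n)\bigr),
\]
so the task is to upper bound this probability. The key observation is a pointwise sandwich: for every $y^n\in\{-1,1\}^n$,
\[
\I\bigl(f(y^n)\neq\sgn T_\rho f(y^n)\bigr)\;\leq\;\bigl(f(y^n)-T_\rho f(y^n)\bigr)^2.
\]
This is the crux of the argument. To justify it, I would check the three cases. If $\sgn T_\rho f(y^n)=f(y^n)$, the indicator is $0$ and the right-hand side is nonnegative. If $\sgn T_\rho f(y^n)=-f(y^n)$, then $f(y^n)T_\rho f(y^n)\le 0$, so expanding gives $(f-T_\rho f)^2=1-2f\cdot T_\rho f+(T_\rho f)^2\ge 1$. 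If $T_\rho f(y^n)=0$, then the right-hand side equals $f(y^n)^2=1$ while the indicator is $1$ (recall $\sgn(0)=0\neq f(y^n)$).

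Taking expectation and using Plancherel with the Fourier expansion $T_\rho f=\sum_S\rho^{|S|}\hat f_S\,x^S$ gives
\[
\Pr\bigl(f(Y^n)\neq\sgn T_\rho f(Y^n)\bigr)\;\leq\;\E\bigl[(f(Y^n)-T_\rho f(Y^n))^2\bigr]=\sum_{S\subseteq[n]}(1-\rho^{|S|})^2\hat f_S^{\,2}.
\]
Finally, since $\rho\in[0,1]$ we have $1-\rho^{|S|}\in[0,1]$, hence $(1-\rho^{|S|})^2\leq 1-\rho^{|S|}$, which upgrades the bound to the claimed $\sum_S(1-\rho^{|S|})\hat f_S^{\,2}$.

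There is no real obstacle here; the only subtle point is verifying the pointwise inequality in the tie case $T_\rho f(y^n)=0$, where the convention $\sgn(0)=0$ forces the indicator to equal $1$ but the squared difference is exactly $1$ as well, so the inequality is saved.
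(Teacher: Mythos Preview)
Your proof is correct but follows a different route from the paper. The paper argues directly via the inner product: on points where $f$ is not $\rho$-SP one has $f(y^n)\,T_\rho f(y^n)\le 0$, and everywhere $f(y^n)\,T_\rho f(y^n)\le 1$, so
\[
\Stab_\rho[f]=\E\bigl[f(Y^n)\,T_\rho f(Y^n)\bigr]\le \P\bigl(f(Y^n)=\sgn T_\rho f(Y^n)\bigr),
\]
which immediately gives the bound $1-\Stab_\rho[f]=\sum_S(1-\rho^{|S|})\hat f_S^{\,2}$ on the disagreement probability. Your approach instead dominates the indicator pointwise by $(f-T_\rho f)^2$ and applies Plancherel. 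Both are equally elementary; the paper's argument is a touch shorter and lands exactly on the stated bound, while yours produces the strictly sharper intermediate estimate $\sum_S(1-\rho^{|S|})^2\hat f_S^{\,2}$ before you relax it. Put differently, the paper's pointwise inequality is $\I(\cdot)\le 1-f\,T_\rho f$, yours is $\I(\cdot)\le (f-T_\rho f)^2$; neither dominates the other pointwise, but after averaging yours is at least as good.
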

\begin{proof}
Let $A\subseteq\{-1,1\}^{n}$ be the set of all $y^{n}$ at which
$f$ is $\rho$-SP. Hence for any $y^{n}\not\in A$ it must be that
$f(y^{n})\cdot T_{\rho}f(y^{n})<0$. Noting that $|T_{\rho}f(y^{n})|\leq1$,
we have that 
\[
\E\left(f(Y^{n})\cdot T_{\rho}f(Y^{n})\right)\leq\P\left(Y^{n}\in A\right).
\]
On the other hand, it also holds that 
\[
\E\left(f(Y^{n})\cdot T_{\rho}f(Y^{n})\right)=\sum_{S\subseteq[n]}\rho^{|S|}\hat{f}_{S}^{2}.
\]
The proof now follows by recalling that $\sum_{S\subseteq[n]}\hat{f}_{S}^{2}=1$. 
\end{proof}
For any $n$, functions that depend on all $n$ variables can be found
(even balanced ones), whose distance from their optimal predictor
is larger than some universal constant. The problem with this measure
of closeness to SP is that in many cases the optimal predictor might
be different from the functions on inputs that are very noisy, i.e.,
where the posterior probability of the function value is close to
uniform. Thus, a more practically motivated way of quantifying closeness
to SP is by considering noise sensitivity and stability.

Define the \emph{strong noise sensitivity} of a function $f$ to be
\[
\NS_{\delta}^{*}[f]\dfn\Pr\left(f(X^{n})\neq\sgn T_{\rho}f(Y^{n})\right),
\]
and the associated \emph{strong stability} as 
\[
\Stab_{\rho}^{*}[f]\dfn\E\left(f(X^{n})\cdot\sgn T_{\rho}f(Y^{n})\right).
\]
Of course, just as for the regular noise sensitivity and stability,
we have the trivial connection 
\[
\Stab_{\rho}^{*}[f]=1-2\NS_{\frac{1-\rho}{2}}^{*}[f].
\]
We can also express the strong stability in terms of the noise operator:
\begin{align*}
\Stab_{\rho}^{*}[f] & =\E\left(\E\left(f(X^{n})\cdot\sgn T_{\rho}f(Y^{n})\mid Y^{n}\right)\right)\\
 & =\E\left(T_{\rho}f(Y^{n})\cdot\sgn T_{\rho}f(Y^{n})\right)\\
 & =\E\left|T_{\rho}f(Y^{n})\right|\\
 & =\left\Vert T_{\rho}f\right\Vert _{1}.
\end{align*}
Thus the $1$-norm of $T_{\rho}f$ can be interpreted in terms of
the error probability associated with the optimal predictor for $f$.
Since the optimal predictor $\sgn T_{\rho}f$ can only do better than
$f$ itself, we immediately have: 
\begin{prop}
\label{prop: SP charctarization using stability and noise operator}For
any function $f$ and any $\rho\in[0,1]$ 
\[
\left\Vert T_{\sqrt{\rho}}f\right\Vert _{2}^{2}\leq\left\Vert T_{\rho}f\right\Vert _{1},
\]
with equality if and only if $f$ is $\rho$-SP. 
\end{prop}
The strong stability can also be upper bounded by a regular stability
expression. 
\begin{prop}
$\Stab_{\rho}[f]\leq\Stab_{\rho}^{*}[f]\leq\sqrt{\Stab_{\rho^{2}}[f]}$. 
\end{prop}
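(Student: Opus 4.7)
The plan is to prove the two inequalities separately, both by reducing to the noise-operator identities $\Stab_{\rho}[f]=\langle f,T_{\rho}f\rangle$, $\Stab_{\rho}^{*}[f]=\|T_{\rho}f\|_{1}$ and $\Stab_{\rho}[f]=\|T_{\sqrt{\rho}}f\|_{2}^{2}$ that were already established in equations (\ref{eq: Stability as squared 2-norm}) and (\ref{eq: Stab_star as 1-norm}).

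For the lower bound $\Stab_{\rho}[f]\leq\Stab_{\rho}^{*}[f]$, I would argue pointwise: conditional on $Y^{n}=y^{n}$, the inner expectation of $f(X^{n})\cdot g(y^{n})$ for any candidate predictor $g(y^{n})\in\{-1,1\}$ equals $g(y^{n})\cdot T_{\rho}f(y^{n})$, and this is maximized over the choice of $g(y^{n})$ by taking $g(y^{n})=\sgn T_{\rho}f(y^{n})$, which yields the value $|T_{\rho}f(y^{n})|$. Taking expectation over $Y^{n}$ gives $\E[f(Y^{n})T_{\rho}f(Y^{n})]\leq\E|T_{\rho}f(Y^{n})|$, which is exactly $\Stab_{\rho}[f]\leq\Stab_{\rho}^{*}[f]$. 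Equivalently, this is the Proposition \ref{prop: SP charctarization using stability and noise operator} inequality restated as $\|T_{\sqrt{\rho}}f\|_{2}^{2}\leq\|T_{\rho}f\|_{1}$, but the pointwise $|a|\geq a$ argument is cleaner.

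For the upper bound $\Stab_{\rho}^{*}[f]\leq\sqrt{\Stab_{\rho^{2}}[f]}$, I would apply Jensen's (or equivalently Cauchy--Schwarz with the constant function $1$) to the random variable $T_{\rho}f(Y^{n})$:
\[
\Stab_{\rho}^{*}[f]=\E\left|T_{\rho}f(Y^{n})\right|\leq\sqrt{\E\left[T_{\rho}f(Y^{n})^{2}\right]}=\|T_{\rho}f\|_{2}.
\]
It then remains to identify $\|T_{\rho}f\|_{2}^{2}$ with $\Stab_{\rho^{2}}[f]$, which is immediate by substituting $\rho\mapsto\rho^{2}$ in (\ref{eq: Stability as squared 2-norm}), or directly via the Fourier expansion (\ref{eq: Fourier expansion of noise operator}) combined with Plancherel:
\[
\|T_{\rho}f\|_{2}^{2}=\sum_{S\subseteq[n]}\rho^{2|S|}\hat{f}_{S}^{2}=\Stab_{\rho^{2}}[f].
\]

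There is no real obstacle here: both inequalities are one-line consequences once the correct norm interpretations are invoked. The only subtlety worth noting is tightness, namely that the first inequality is tight precisely when $f$ is $\rho$-SP (matching Proposition \ref{prop: SP charctarization using stability and noise operator}), while the second is tight when $|T_{\rho}f|$ is almost surely constant on the support of $Y^{n}$, a much more restrictive condition.
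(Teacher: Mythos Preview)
Your proof is correct and follows essentially the same approach as the paper: the lower bound is the pointwise optimality of the predictor (exactly as in Proposition~\ref{prop: SP charctarization using stability and noise operator}), and the upper bound is Cauchy--Schwarz applied to $T_{\rho}f$ followed by the identification $\|T_{\rho}f\|_{2}^{2}=\Stab_{\rho^{2}}[f]$. The only cosmetic difference is that the paper phrases Cauchy--Schwarz as $\langle T_{\rho}f,\sgn T_{\rho}f\rangle\leq\|T_{\rho}f\|_{2}\cdot\|\sgn T_{\rho}f\|_{2}$ and obtains $\|T_{\rho}f\|_{2}^{2}=\langle T_{\rho^{2}}f,f\rangle$ via self-adjointness of $T_{\rho}$, whereas you invoke Jensen and Plancherel directly.
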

\begin{proof}
Write 
\begin{align*}
\Stab_{\rho}^{*}[f] & =\left\langle T_{\rho}f,\sgn T_{\rho}f\right\rangle \\
 & \leq\left\Vert T_{\rho}f\right\Vert _{2}\cdot\left\Vert \sgn T_{\rho}f\right\Vert _{2}\\
 & =\sqrt{\langle T_{\rho}f,T_{\rho}f\rangle}\\
 & =\sqrt{\langle T_{\rho^{2}}f,f\rangle}\\
 & =\sqrt{\Stab_{\rho^{2}}[f]}.
\end{align*}
where the inequality is by Cauchy-Schwartz's inequality, the next
equality is since $\|\sgn T_{\rho}f\|_{2}=1$, and the following equality
is since $T_{\rho}f$ is a self-adjoint operator (this follows from
Plancherel's identity: $\langle T_{\rho}f,g\rangle=\sum_{S\subseteq[n]}\rho^{|S|}\hat{f}_{S}\hat{g}_{S}=\langle f,T_{\rho}g\rangle$). 
\end{proof}
An immediate consequence of the above is: 
\begin{cor}
The strong noise sensitivity satisfies: 
\[
\frac{1-\sqrt{\Stab_{\rho^{2}}[f]}}{1-\Stab_{\rho}[f]}\cdot\NS_{\delta}[f]\leq\NS_{\delta}^{*}[f]\leq\NS_{\delta}[f].
\]
\end{cor}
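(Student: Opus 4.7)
The plan is to derive both inequalities directly from the sandwich bound $\Stab_{\rho}[f]\leq\Stab_{\rho}^{*}[f]\leq\sqrt{\Stab_{\rho^{2}}[f]}$ established in the preceding proposition, combined with the linear identity $\Stab_{\rho}^{*}[f]=1-2\NS_{\delta}^{*}[f]$ from (\ref{eq: strong stability and strong sensitivity}) and its unstarred counterpart (\ref{eq: stability and sensitivity}). Since the corollary is stated as an ``immediate consequence,'' I would not expect any new ingredients beyond substitution.

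First, I would handle the upper bound $\NS_{\delta}^{*}[f]\leq\NS_{\delta}[f]$. The left inequality $\Stab_{\rho}[f]\leq\Stab_{\rho}^{*}[f]$ translates, via $\Stab_{\rho}[f]=1-2\NS_{\delta}[f]$ and $\Stab_{\rho}^{*}[f]=1-2\NS_{\delta}^{*}[f]$, into the desired bound after rearrangement. (This also matches the informal statement that the optimal predictor can only do better than $f$ itself.)

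Next, for the lower bound, I would start from $\Stab_{\rho}^{*}[f]\leq\sqrt{\Stab_{\rho^{2}}[f]}$, substitute $\Stab_{\rho}^{*}[f]=1-2\NS_{\delta}^{*}[f]$, and rearrange to obtain
\[
\NS_{\delta}^{*}[f]\geq\frac{1-\sqrt{\Stab_{\rho^{2}}[f]}}{2}.
\]
Then I would multiply and divide the right-hand side by $1-\Stab_{\rho}[f]=2\NS_{\delta}[f]$ to rewrite this as
\[
\NS_{\delta}^{*}[f]\geq\frac{1-\sqrt{\Stab_{\rho^{2}}[f]}}{1-\Stab_{\rho}[f]}\cdot\NS_{\delta}[f],
\]
which is exactly the claimed lower bound.

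There is no real obstacle here; the only thing to be mildly careful about is the case $\Stab_{\rho}[f]=1$ (i.e., $f$ is constant), where the ratio becomes $0/0$ but both sides vanish and the inequality is trivial, so it causes no actual issue. Everything else is a mechanical rewrite of the previous proposition.
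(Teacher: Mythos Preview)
Your proposal is correct and matches the paper's approach: the corollary is stated as an ``immediate consequence'' of the proposition $\Stab_{\rho}[f]\leq\Stab_{\rho}^{*}[f]\leq\sqrt{\Stab_{\rho^{2}}[f]}$, and your derivation via the identities (\ref{eq: stability and sensitivity}) and (\ref{eq: strong stability and strong sensitivity}) is exactly the intended substitution. The brief remark on the degenerate case $\Stab_{\rho}[f]=1$ is a harmless addition.
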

Note that this bound is tight for the characters (and again shows
that they are USP). We can easily derive the following weaker statements: 
\begin{cor}
For any $f$ 
\begin{equation}
\frac{\NS_{\delta}[f]}{2}\leq\NS_{\delta}^{*}[f]\leq\NS_{\delta}[f].\label{eq: simple bounds on NS - half}
\end{equation}
If $f$ is balanced, then 
\begin{equation}
\frac{\NS_{\delta}[f]}{1+\rho}\leq\NS_{\delta}^{*}[f]\leq\NS_{\delta}[f].\label{eq: simple bound on NS - 1 plus rho^2}
\end{equation}
\end{cor}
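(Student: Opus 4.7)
The plan is to deduce both inequalities as direct algebraic consequences of the previous corollary (\ref{eq: prediction gain definition}), which already gives the lower bound
\[
\NS_{\delta}^{*}[f] \geq \frac{1-\sqrt{\Stab_{\rho^{2}}[f]}}{1-\Stab_{\rho}[f]} \cdot \NS_{\delta}[f].
\]
The upper bound $\NS_{\delta}^{*}[f] \leq \NS_{\delta}[f]$ is part of the same statement and needs no further work. So the task reduces to lower bounding the prefactor by $1/2$ unconditionally, and by $1/(1+\rho)$ when $f$ is balanced. (The trivial case $\Stab_{\rho}[f]=1$, where $f$ is constant and both sides vanish, is handled separately.)

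For the general bound, I would use the Fourier representation of the stability to note that $\rho \in [0,1]$ implies $\rho^{2|S|} \leq \rho^{|S|}$ termwise, whence $\Stab_{\rho^{2}}[f] \leq \Stab_{\rho}[f]$. Setting $v \dfn \Stab_{\rho}[f] \in [0,1)$, the factorization $1-x = (1-\sqrt{x})(1+\sqrt{x})$ shows that $x \mapsto (1-\sqrt{x})/(1-x) = 1/(1+\sqrt{x})$ is decreasing on $[0,1)$, so
\[
\frac{1-\sqrt{\Stab_{\rho^{2}}[f]}}{1-\Stab_{\rho}[f]} \geq \frac{1-\sqrt{v}}{1-v} = \frac{1}{1+\sqrt{v}} \geq \frac{1}{2},
\]
which yields (\ref{eq: simple bounds on NS - half}).

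For the balanced case, the key observation is that $\hat{f}_{\emptyset}=0$, so every term in the Fourier expansions of $\Stab_{\rho}[f]$ and $\Stab_{\rho^{2}}[f]$ satisfies $|S|\geq 1$. For such $S$ and any $\rho\in[0,1]$, $\rho^{2|S|}\leq \rho^{|S|+1}$, and summing gives the sharper bound $\Stab_{\rho^{2}}[f]\leq \rho\cdot\Stab_{\rho}[f]$. Plugging this into the prefactor, what remains is
\[
\frac{1-\sqrt{\rho v}}{1-v}\geq \frac{1}{1+\rho},
\]
which after clearing denominators reduces to $\rho+v\geq (1+\rho)\sqrt{\rho v}$. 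This is immediate from the AM--GM inequality $\rho+v\geq 2\sqrt{\rho v}$ combined with $1+\rho\leq 2$ on $[0,1]$.

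The proof involves no real obstacle beyond careful bookkeeping; the only mildly non-obvious step is the sharper termwise bound $\Stab_{\rho^{2}}[f]\leq \rho\cdot\Stab_{\rho}[f]$ that is unlocked by the balanced condition, without which one only recovers the universal constant $1/2$.
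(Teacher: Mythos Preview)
Your argument is correct and is exactly the kind of elementary manipulation the paper has in mind when it says the corollary is ``easily derived'' from (\ref{eq: prediction gain definition}): bound $\Stab_{\rho^{2}}[f]$ above by $\Stab_{\rho}[f]$ (respectively by $\rho\,\Stab_{\rho}[f]$ in the balanced case) and then simplify the resulting prefactor via $1-v=(1-\sqrt v)(1+\sqrt v)$ and AM--GM. One cosmetic remark: in the first step you only replace $\Stab_{\rho^{2}}[f]$ by $v$ in the \emph{numerator}, so the monotonicity of $x\mapsto(1-\sqrt x)/(1-x)$ is not actually used---the inequality $1-\sqrt{\Stab_{\rho^{2}}[f]}\ge 1-\sqrt v$ already suffices.
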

\begin{proof}
The bounds in (\ref{eq: simple bounds on NS - half}) follow from
$\Stab_{\rho^{2}}[f]\leq\Stab_{\rho}[f]$ and $\min_{t\in[0,1]}\frac{1-\sqrt{t}}{1-t}=1/2$.
The bounds in (\ref{eq: simple bound on NS - 1 plus rho^2}) follow
from
\begin{align*}
\frac{1-\sqrt{\Stab_{\rho^{2}}[f]}}{1-\Stab_{\rho}[f]} & \geq\frac{1-\sqrt{\Stab_{\rho^{2}}[f]}}{1-\Stab_{\rho^{2}}[f]}\\
 & =\frac{1}{1+\sqrt{\Stab_{\rho^{2}}[f]}}\\
 & \geq\frac{1}{1+\rho}.
\end{align*}
\end{proof}
We may obtain improved bounds for low correlation values:
\begin{prop}
\label{prop:Prediction gain bounds for low correlation} Suppose $W^{1}[f]>0$.
Then: 
\[
\max\left\{ 1,\frac{1}{\sqrt{2W^{1}[f]}}+O(\rho^{2})\right\} \leq\frac{\Stab_{\rho}^{*}[f]}{\Stab_{\rho}[f]}\leq\frac{1}{\sqrt{W^{1}[f]}}+O(\rho^{2}).
\]
\end{prop}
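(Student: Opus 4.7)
The plan is to extract the leading-order behavior of both $\Stab_{\rho}^{*}[f]$ and $\Stab_{\rho}[f]$ as $\rho\to0^{+}$, and then compare the two via $L^{1}/L^{2}$ moment inequalities applied to the degree-one part of $f$. I will restrict attention to the balanced case $\hat{f}_{\emptyset}=0$, which is the setting where the stated bounds are meaningful (otherwise $\Stab_{\rho}^{*}/\Stab_{\rho}\to 1/|\hat{f}_{\emptyset}|$ as $\rho\to 0$, which does not in general lie between the two claimed constants).

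\emph{Step 1 (expansions of the two stabilities).} From (\ref{eq: Stability as Fourier}), $\Stab_{\rho}[f]=\rho W^{1}[f]+O(\rho^{2})$, with an absolute constant on the remainder since $\sum_{|S|\geq 2}\hat{f}_{S}^{2}\leq 1$. For the strong stability, use (\ref{eq: Fourier expansion of noise operator}) to write $T_{\rho}f(y^{n})=\rho L(y^{n})+\rho^{2}R_{\rho}(y^{n})$, where $L(y^{n})=\sum_{i=1}^{n}\hat{f}_{i}y_{i}$ is the degree-one part and $R_{\rho}(y^{n})=\sum_{|S|\geq 2}\rho^{|S|-2}\hat{f}_{S}y^{S}$ is uniformly bounded on $\{-1,1\}^{n}$ (in both $y^{n}$ and $\rho\in[0,1]$). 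For $\rho$ sufficiently small, $\sgn T_{\rho}f(y^{n})=\sgn L(y^{n})$ whenever $L(y^{n})\neq 0$, so $|T_{\rho}f(y^{n})|=\rho|L(y^{n})|+O(\rho^{2})$ pointwise on $\{L\neq 0\}$; the set $\{L=0\}$ contributes at most $O(\rho^{2})$ to $\|T_{\rho}f\|_{1}$. Combining with (\ref{eq: Stab_star as 1-norm}) yields $\Stab_{\rho}^{*}[f]=\rho\|L\|_{1}+O(\rho^{2})$.

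\emph{Step 2 (the ratio).} Dividing the two expansions and using $W^{1}[f]>0$ to bound the denominator away from zero (after cancelling a factor of $\rho$) gives
\[
\frac{\Stab_{\rho}^{*}[f]}{\Stab_{\rho}[f]}=\frac{\|L\|_{1}}{W^{1}[f]}+O(\rho^{2}).
\]

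\emph{Step 3 (moment inequalities).} For the upper bound, apply Cauchy--Schwarz: $\|L\|_{1}\leq\|L\|_{2}=\sqrt{W^{1}[f]}$, so that $\|L\|_{1}/W^{1}[f]\leq 1/\sqrt{W^{1}[f]}$. For the nontrivial lower bound, apply the sharp Khintchine inequality for Rademacher sums (Szarek's optimal constant $\sqrt{2}$): $\|L\|_{2}\leq\sqrt{2}\,\|L\|_{1}$, hence $\|L\|_{1}/W^{1}[f]\geq 1/\sqrt{2W^{1}[f]}$. The other term in the $\max$, the trivial $1$, follows immediately from Proposition~\ref{prop: SP charctarization using stability and noise operator}, which gives $\Stab_{\rho}^{*}[f]\geq\Stab_{\rho}[f]$.

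The main obstacle is the careful bookkeeping near the zero set $\{L(y^{n})=0\}$, where the leading-order linear term vanishes and the $\rho^{2}R_{\rho}$ contribution takes over; one must verify that its integrated contribution is genuinely $O(\rho^{2})$. A secondary subtlety is the order of the correction term: by parity, $\sgn L$ is odd, so $\E[Q\,\sgn L]=0$ for the degree-two part $Q=\sum_{|S|=2}\hat{f}_{S}y^{S}$, which is what lets one push the correction in $\Stab_{\rho}^{*}$ to $O(\rho^{3})$ and thereby consolidate everything at $O(\rho^{2})$ after division. Finally, the sharpness of the factor $\sqrt{2}$ in the lower bound relies crucially on Szarek's optimal Khintchine constant and cannot be improved without extra structure on $\{\hat{f}_{i}\}$.
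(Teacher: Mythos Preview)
Your approach is essentially the same as the paper's: expand $\Stab_{\rho}^{*}[f]=\|T_{\rho}f\|_{1}$ and $\Stab_{\rho}[f]$ to leading order in $\rho$, then apply Khintchine's inequality to the degree-one part (the paper cites Haagerup for both sides, whereas you invoke Cauchy--Schwarz for the upper bound and Szarek's sharp $L^{1}$ constant for the lower, which amounts to the same thing). Your write-up is in fact more careful than the paper's in two respects: you flag the restriction to the balanced case $\hat{f}_{\emptyset}=0$, and you discuss the handling of the zero set $\{L=0\}$ and the parity cancellation that sharpens the remainder in $\Stab_{\rho}^{*}$; the paper's proof leaves both implicit.
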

\begin{proof}
We have that 
\begin{align*}
\Stab_{\rho}^{*}[f] & =\E\left|T_{\rho}f(Y^{n})\right|\\
 & =\E\left|\sum_{i=1}^{n}\rho\hat{f}_{i}Y_{i}\right|+O(\rho^{2}).
\end{align*}
Khintchine's inequality \cite{haagerup1981best} then implies 
\[
\frac{1}{\sqrt{2}}\cdot\sqrt{W^{1}[f]}\cdot\rho+O(\rho^{2})\leq\Stab_{\rho}^{*}[f]\leq\sqrt{W^{1}[f]}\cdot\rho+O(\rho^{2}),
\]
and the result follows from \cite[Proposition 2.51]{Bool_book}
\[
\Stab_{\rho}[f]=W^{1}[f]\cdot\rho+O(\rho^{2}).
\]
\end{proof}
\begin{cor}
For any balanced LTF $W^{1}[f]\geq1/2$ \cite[Theorem 5.2]{Bool_book},
and so 
\[
\frac{\Stab_{\rho}^{*}[f]}{\Stab_{\rho}[f]}\leq\sqrt{2}+O(\rho^{2}).
\]
\end{cor}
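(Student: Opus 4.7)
The corollary follows almost immediately by combining Proposition \ref{prop:Prediction gain bounds for low correlation} with the quoted lower bound on the degree-one Fourier weight of a balanced LTF. The plan is therefore extremely short, amounting to verifying hypotheses and chaining two inequalities.

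First, I would invoke the classical fact (cited as \cite[Theorem 5.2]{Bool_book}, essentially due to Matulef--O'Donnell--Rubinfeld--Servedio) that any balanced LTF $f$ satisfies $W^{1}[f]\geq 1/2$. In particular, $W^{1}[f]>0$, so the hypothesis of Proposition \ref{prop:Prediction gain bounds for low correlation} is met and we may use its upper bound
\[
\frac{\Stab_{\rho}^{*}[f]}{\Stab_{\rho}[f]}\leq\frac{1}{\sqrt{W^{1}[f]}}+O(\rho^{2}).
\]

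Second, I would substitute the bound $W^{1}[f]\geq 1/2$ directly into the right-hand side, giving
\[
\frac{1}{\sqrt{W^{1}[f]}}\leq\frac{1}{\sqrt{1/2}}=\sqrt{2},
\]
which yields the claimed estimate $\Stab_{\rho}^{*}[f]/\Stab_{\rho}[f]\leq\sqrt{2}+O(\rho^{2})$.

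There is essentially no obstacle in this argument; the only point worth checking is that the $O(\rho^{2})$ remainder in the proposition depends only on Fourier coefficients of $f$ (hence on $n$), so the substitution $W^{1}[f]\geq 1/2$ preserves the form of the error term uniformly across balanced LTFs of a fixed dimension, which is all the corollary claims.
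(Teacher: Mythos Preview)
Your proposal is correct and matches the paper's intended argument: the corollary is stated without proof immediately after Proposition~\ref{prop:Prediction gain bounds for low correlation}, as a direct consequence of plugging the bound $W^{1}[f]\geq 1/2$ into the upper bound $\Stab_{\rho}^{*}[f]/\Stab_{\rho}[f]\leq 1/\sqrt{W^{1}[f]}+O(\rho^{2})$. Your closing remark about the dependence of the $O(\rho^{2})$ term is a reasonable sanity check but not something the paper addresses or needs.
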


\section{High Correlation Sufficient Conditions\label{sec:high-corr}}

In this section, we derive sufficient conditions on a function to
be SP using various arguments. All our conditions will be high correlation
ones, i.e., for $\rho_{0}$ larger than some threshold. To that end,
we will need a simple characterization of monotone SP functions. Recall
that $x^{n}$ is called a \textit{boundary point} of $f$ if the value
of $f(x^{n})$ can be flipped by flipping some single coordinate of
$x^{n}$. We further say that $x^{n}$ is a \textit{dominating boundary
point} of $f$ if $f(x^{n})=1$ (resp. $=-1$) and $f(y^{n})=-1$
(resp. $=1$) for any $y^{n}\prec x^{n}$ (resp. $x^{n}\prec y^{n}$). 

The following is a simple corollary to the fact that monotonicity
is preserved by $\sgn T_{\rho}$ (Proposition \ref{prop: Optimal predictor preserves OSM}).
\begin{prop}
\label{prop:dominating_boundary} A monotone function is $\rho$-SP
if and only if it is $\rho$-SP at all its dominating boundary points.
\end{prop}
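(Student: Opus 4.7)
The $(\Rightarrow)$ direction is immediate from the definition, so the content lies in the $(\Leftarrow)$ direction. My plan is to propagate the SP property from the dominating boundary points to every point of the cube, using the monotonicity of the noise operator $T_\rho$ on monotone functions (established inside the proof of Proposition \ref{prop: Optimal predictor preserves OSM}), which is a slightly stronger statement than the monotonicity of $\sgn T_\rho f$ that Proposition \ref{prop: Optimal predictor preserves OSM} records.

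The first step is a combinatorial reduction: every point of the cube lies above (or below) a dominating boundary point of the same $f$-value. Concretely, if $f(x^n) = +1$, then the set $\{y^n \preceq x^n : f(y^n) = +1\}$ is nonempty (it contains $x^n$), and any $\preceq$-minimal element $z^n$ is a dominating boundary point of $f$ with $z^n \preceq x^n$ and $f(z^n) = +1$; minimality forces $f(y^n) = -1$ for every $y^n \prec z^n$, since otherwise such a $y^n$ would itself lie in the set. The symmetric statement (with $z^n \succeq x^n$ and $f(z^n) = -1$) holds when $f(x^n) = -1$.

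The second step is the propagation itself. Fix any $x^n$ with $T_\rho f(x^n) \neq 0$, assume without loss of generality that $f(x^n) = +1$, and let $z^n \preceq x^n$ be a dominating boundary point as above. The monotonicity of $T_\rho f$ on monotone functions yields $T_\rho f(x^n) \geq T_\rho f(z^n)$, while the SP hypothesis at $z^n$ forces $T_\rho f(z^n) \geq 0$: either $T_\rho f(z^n) = 0$, or $T_\rho f(z^n) \neq 0$ and then $\sgn T_\rho f(z^n) = f(z^n) = +1$. Combined with the non-tie assumption $T_\rho f(x^n) \neq 0$, this gives $T_\rho f(x^n) > 0$, hence $\sgn T_\rho f(x^n) = +1 = f(x^n)$, as required; the case $f(x^n) = -1$ is symmetric.

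The only mildly delicate point is the tie case at $z^n$: when $T_\rho f(z^n) = 0$ the SP hypothesis is vacuous, so monotonicity of $\sgn T_\rho f$ alone is not enough to conclude. Working instead with the signed quantity $T_\rho f$, whose monotonicity is non-strict, and then using the non-tie hypothesis at $x^n$ itself to upgrade $T_\rho f(x^n) \geq 0$ to $T_\rho f(x^n) > 0$, handles this cleanly and is the reason one is allowed to impose the SP condition only at dominating boundary points.
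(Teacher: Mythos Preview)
Your proof is correct and is precisely the argument the paper has in mind; the paper itself gives no details beyond calling the proposition a simple corollary to Proposition~\ref{prop: Optimal predictor preserves OSM}.

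One small remark: your final paragraph slightly overstates the delicacy of the tie case. Monotonicity of $\sgn T_\rho f$ alone already suffices, since $\sgn T_\rho f$ takes values in $\{-1,0,1\}$; thus $\sgn T_\rho f(x^n)\geq \sgn T_\rho f(z^n)=0$ together with the non-tie hypothesis $T_\rho f(x^n)\neq 0$ immediately forces $\sgn T_\rho f(x^n)=+1$. So the appeal to the real-valued monotonicity of $T_\rho f$, while correct, is not strictly necessary. (Also, your minimal element $z^n$ is a genuine \emph{boundary} point only when $f$ is non-constant; the constant case is trivial but worth a one-line mention.)
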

We can now prove the following:
\begin{prop}
\label{thm: Sufficient - no flips}Any function is $\rho$-SP for
$\rho>2^{\nicefrac{(n-1)}{n}}-1$, and there is no better universal
guarantee. 
\end{prop}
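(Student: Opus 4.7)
The plan is to reduce the question to a single-point dominance argument: fix any $y^n$ and ask when the term $x^n=y^n$ in $T_{\rho}f(y^n)=\sum_{x^n}p(x^n\mid y^n)f(x^n)$ alone forces the sign. Writing
\[
T_{\rho}f(y^n) \;=\; p(y^n\mid y^n)\,f(y^n) \;+\; \sum_{x^n\neq y^n} p(x^n\mid y^n)\,f(x^n),
\]
and using $|f|\le 1$, the second sum is bounded in absolute value by $1-p(y^n\mid y^n)=1-(1-\delta)^n$. Hence whenever $(1-\delta)^n>\tfrac12$, the first term dominates and $\sgn T_{\rho}f(y^n)=f(y^n)$. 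Solving $(1-\delta)^n>\tfrac12$ for $\rho=1-2\delta$ gives exactly $\rho>2\cdot 2^{-1/n}-1=2^{(n-1)/n}-1$, which is the claimed threshold. This argument is uniform in $y^n$ and in $f$, so every Boolean function is $\rho$-SP above that threshold.

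For the tightness claim, I would exhibit a single hard function that saturates the bound, namely the ``delta at $1^n$'' function
\[
f(x^n) \;\dfn\; 2\cdot\I(x^n=1^n)-1,
\]
which takes the value $+1$ at $1^n$ and $-1$ elsewhere. A direct computation at $y^n=1^n$ yields
\[
T_{\rho}f(1^n) \;=\; (1-\delta)^n\cdot 1 \;+\; \bigl(1-(1-\delta)^n\bigr)\cdot(-1) \;=\; 2(1-\delta)^n-1,
\]
which is strictly negative exactly when $(1-\delta)^n<\tfrac12$, i.e. when $\rho<2^{(n-1)/n}-1$. Since $f(1^n)=+1$, the function fails to be $\rho$-SP at $1^n$ for every such $\rho$, so the threshold cannot be lowered in a universal bound.

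I do not anticipate any real obstacle; the only cosmetic points to be careful about are: (i) handling the convention $\sgn(0)=0$ correctly (strict inequality in $(1-\delta)^n>\tfrac12$ is what gives a strict matching of signs in the sufficiency direction, and strict inequality in $\rho<2^{(n-1)/n}-1$ is what makes $T_{\rho}f(1^n)$ strictly negative in the tightness direction); and (ii) the elementary algebraic rewriting $\delta<1-2^{-1/n}\Longleftrightarrow\rho>2^{(n-1)/n}-1$, which I would display once to make the correspondence unambiguous.
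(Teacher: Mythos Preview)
Your proof is correct and essentially identical to the paper's. The paper phrases the sufficiency argument the same way (the ``no-flip'' event has probability $(1-\delta)^n>1/2$, hence dominates), and for tightness uses the $\OR$ function, which in the paper's $\{-1,1\}$ convention is exactly your indicator-at-$1^n$ function $f(x^n)=2\cdot\I(x^n=1^n)-1$; the computation at $y^n=1^n$ is the same. The only minor difference is that the paper additionally checks the other dominating boundary points of $\OR$ to verify it is $\rho$-SP \emph{precisely} on the claimed interval, but this extra step is not needed for the proposition as stated.
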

\begin{proof}
This range corresponds to the values of the crossover probability
$\delta\in[0,1-2^{-\nicefrac{1}{n}})$ for which the probability no
bit was flipped $(1-\delta)^{n}$, is at least $1/2$. This bound
is achieved with equality by the OR function $\OR(x^{n})$. To see
this, note that the OR function is monotone and symmetric with a single
dominating boundary point $1^{n}$. For this point 
\[
T_{\rho}\OR(1^{n})=(1-\delta)^{n}\cdot1+\left[1-(1-\delta)^{n}\right]\cdot(-1)
\]
which is non-negative if and only if $\delta\in[0,1-2^{-\nicefrac{1}{n}}]$. 
\end{proof}
Specific properties of the function, may be used to obtain better
sufficient bounds in special cases. For example, suppose that the
\emph{sparsity} of $\hat{f}_{S}$ is $s$, i.e., 
\[
f(x^{n})=\sum_{S\in\mathcal{S}}\hat{f}_{S}\cdot x^{S}
\]
where $\mathcal{S}\subset2^{[n]}$ and $|\mathcal{S}|=s$. Then, an
application of the union bounds leads to 
\begin{align*}
\P\left[f(X^{n})=f(y^{n})\mid Y^{n}=y^{n}\right] & \geq\P\left[\bigcap_{S\in\mathcal{S}}X^{S}=y^{S}\mid Y^{n}=y^{n}\right]\\
 & \geq1-\sum_{S\in\mathcal{S}}\P\left[X^{S}\neq y^{S}\mid Y^{n}=y^{n}\right]\\
 & =1-\sum_{S\in\mathcal{S}}\frac{1-\rho^{|S|}}{2}.
\end{align*}
This probability will be larger than $1/2$ for all $y^{n}\in\{-1,1\}^{n}$
if $\rho$ is larger than the solution to 
\[
\sum_{S\in\mathcal{S}}\rho^{|S|}=s-1.
\]
Similar conditions can be derived for PTFs (\ref{eq: PTF}) of sparsity
$s$. 

The extremal property of the OR function noted above may ostensibly
be attributed to the fact that it is extremely unbalanced. However,
$x_{1}\cdot\OR(x_{2}^{n})$ is balanced, and Propositions \ref{prop:seperable functions}
and \ref{thm: Sufficient - no flips} imply that it is $\rho$-SP
for $\rho>2^{\nicefrac{(n-2)}{(n-1)}}-1=1-\frac{2\ln(2)}{n}+O(n^{-2})$.
The next proposition demonstrates that the statement in Proposition
\ref{thm: Sufficient - no flips} holds even if we restrict ourselves
to balanced LTFs. 
\begin{prop}
\label{prop: balanced but very not SP}Any balanced LTF $f$ is $\rho$-SP
for $\rho>1-\frac{2\ln(2)}{n}+O(n^{-2})$, and there is no better
universal guarantee. 
\end{prop}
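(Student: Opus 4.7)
The sufficiency half of Proposition \ref{prop: balanced but very not SP} is immediate from Proposition \ref{thm: Sufficient - no flips}, which guarantees that every Boolean function (in particular every balanced LTF) is $\rho$-SP whenever $\rho > 2^{(n-1)/n} - 1$; Taylor expanding, $2^{(n-1)/n} - 1 = 2 e^{-(\ln 2)/n} - 1 = 1 - \frac{2\ln 2}{n} + O(n^{-2})$, which matches the stated threshold. The real content of the proposition is therefore the tightness claim: exhibit, for each $n$, a balanced LTF on $n$ variables that fails to be $\rho$-SP for some $\rho$ just below $1 - \frac{2\ln 2}{n}$.

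My proposed extremal example is $f_n(x^n) \dfn \sgn\bigl((n-2) x_1 + \sum_{i=2}^n x_i\bigr)$, which is a genuine LTF since a short parity check (the parities of $n-2$ and of $\sum_{i\geq 2} x_i$ differ) rules out the linear form being zero. The key structural observation is the decomposition $f_n(1, x_2^n) = \OR(x_2^n)$ and $f_n(-1, x_2^n) = \mathrm{AND}(x_2^n)$, obtained by directly comparing $(n-2)(\pm 1) + \sum_{i \geq 2} x_i$ with the threshold $0$. Since on $\{-1,1\}^{n-1}$ the OR takes value $+1$ on $2^{n-1} - 1$ inputs while the AND takes value $+1$ on exactly $1$ input, one finds $|f_n^{-1}(1)| = (2^{n-1} - 1) + 1 = 2^{n-1}$, so $f_n$ is balanced.

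Now consider the point $y^n = (-1, 1, 1, \ldots, 1)$. By the decomposition $f_n(y^n) = \mathrm{AND}(1, \ldots, 1) = 1$, while any $x^n \prec y^n$ forces $x_1 = -1$ together with some $x_j = -1$ for $j \geq 2$, giving $f_n(x^n) = \mathrm{AND}(\ldots) = -1$; hence $y^n$ is a dominating boundary point. Using the decomposition and the conditional independence of the coordinates of $X^n$ given $Y^n$, a direct computation yields $T_\rho f_n(y^n) = (1-\delta)[2(1-\delta)^{n-1} - 1] + \delta[1 - 2\delta^{n-1}] = 2(1-\delta)^n - 2\delta^n - \rho$. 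The SP condition $T_\rho f_n(y^n) \geq 0$ rearranges to $(1-\delta)^n \geq \tfrac12 - \delta + \delta^n$, and absorbing the super-exponentially small $\delta^n$ and solving $(1-\delta)^n = \tfrac12 - \delta$ asymptotically gives the critical value $\delta = \tfrac{\ln 2}{n} + O(n^{-2})$, equivalently a critical $\rho = 1 - \tfrac{2\ln 2}{n} + O(n^{-2})$. For $\rho$ strictly below this threshold one has $T_\rho f_n(y^n) < 0$ while $f_n(y^n) = +1$, so $f_n$ is not $\rho$-SP, which establishes tightness.

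The only genuinely delicate step is the choice of coefficient vector. The simpler weights $(n-1, 1, \ldots, 1)$ produce a degenerate linear form (it vanishes at $(1, -1, \ldots, -1)$ and $(-1, 1, \ldots, 1)$), and the balanced function $x_1 \cdot \OR(x_2^n)$ from the paragraph preceding the proposition can be shown (e.g.\ via the constraints $g(1^n) = g(-1^n) = 1$) to admit no LTF representation. The vector $(n-2, 1, \ldots, 1)$ threads both constraints and, by design, reduces near its dominating boundary point to an $(n-1)$-variable AND, which drives exactly the $(1-\delta)^n \geq \tfrac12$ bottleneck that underlies Proposition \ref{thm: Sufficient - no flips}. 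Once this example is isolated the remaining manipulations are elementary.
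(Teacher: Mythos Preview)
Your proof is correct and uses the same extremal function as the paper (the paper calls it the \emph{enlightened dictator}, $\edic$), analyzed at the same critical point $y^{n}=(-1,1^{n-1})$; your $\OR/\mathrm{AND}$ decomposition is just a repackaging of the paper's direct case analysis, and your computation $T_{\rho}f_{n}(y^{n})=2(1-\delta)^{n}-2\delta^{n}-\rho$ is equivalent to the paper's formula $\P(\edic(X^{n})=1\mid Y^{n}=y^{n})=(1-\delta)^{n}+\delta(1-\delta^{n-1})$. The only substantive difference is that the paper also checks the second type of dominating boundary point $(1,1,-1^{n-2})$ in order to pin down \emph{exactly} the SP range of $\edic$, whereas you correctly observe that for the tightness claim alone it suffices to exhibit a single point at which the SP condition fails just below the threshold.
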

\begin{proof}
Note that the above region is essentially the same as the one in Proposition
\ref{thm: Sufficient - no flips}, hence one direction is clear. We
need to show there exists a balanced function that is not $\rho$-SP
at any point outside this region. To that end, let us introduce the
\emph{enlightened dictator} function, defined for $n\geq3$ to be
\begin{equation}
\edic(x^{n})\dfn\sgn\left((n-2)x_{1}+\sum_{i=2}^{n}x_{i}\right).\label{eq: E-dict def}
\end{equation}
Evidently, $\edic(x^{n})$ is determined by the ``dictator'' $x_{1}$,
unless all the ''subjects'' $x_{2},\ldots,x_{n}$ disagree. It is
easy to verify that $\edic(x^{n})$ is a monotone, odd (and hence
balanced) function. This function is SP at $y^{n}=(-1,1^{n-1})$ if
and only if 
\begin{align}
\P(\edic(X^{n})=1\mid Y^{n}=y^{n}) & =(1-\delta)^{n}+\delta(1-\delta^{n-1})\geq1/2.\label{eq:edict_type1}
\end{align}
The second derivative of the left-hand side (l.h.s.) above is $n(n-1)((1-\delta)^{n-1}-\delta^{n-2})$,
which is non-negative for $\delta\in[0,1/2]$, hence the l.h.s. is
convex inside this interval. It is easy to check that equality in
(\ref{eq:edict_type1}) holds for $\delta=\ln(2)\cdot n^{-1}-O(n^{-2})$
and for $\delta=1/2$, hence by convexity $y^{n}$ is $\delta$-SP
if and only if $\delta<\ln(2)\cdot n^{-1}-O(n^{-2})$, or equivalently,
$\rho>1-2\ln(2)\cdot n^{-1}+O(n^{-2})$.\footnote{Using Proposition \ref{prop:dominating_boundary} it can be verified
that this the true range for which $\edic(\cdot)$ is SP.}
\end{proof}

\subsection{Bounded Degree and Spectral Norm}

Next, we provide a stronger statement that uses the \emph{Fourier-degree}
$\Deg(f)$ of the function $f$, i.e., the maximal degree of the characters
appearing in the Fourier representation of $f$. 
\begin{thm}
\label{thm: Sufficient - Berenstein}Any function $f$ is $\rho$-SP
for 
\[
\rho\geq1-\frac{1}{\Deg(f)\cdot\min\left\{ \Deg(f),\sum_{S\subseteq[n]}\left|\hat{f}_{S}\right|\right\} }.
\]
\end{thm}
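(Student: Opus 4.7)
The plan is to hold $y^n \in \{-1,1\}^n$ fixed and regard
\[
g(\rho) \;:=\; T_\rho f(y^n) \;=\; \sum_{S\subseteq[n]} \hat{f}_S\, y^S\, \rho^{|S|}
\]
as a real polynomial in the single variable $\rho$, of degree at most $d := \Deg(f)$, satisfying $g(1) = f(y^n) \in \{-1,1\}$. By the definition of $\rho$-SP it suffices to show $g(\rho)\cdot g(1) \ge 0$, and for this it is enough to prove the drift estimate $|g(\rho) - g(1)| \le 1$: this forces $g(\rho)$ into $[0,2]$ or $[-2,0]$ according to the sign of $g(1)$, which covers both the strict-sign and the tied cases.

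I will then bound $|g(\rho) - g(1)|$ in two complementary ways and keep whichever is tighter. The first bound is a direct triangle inequality combined with the elementary estimate $1 - \rho^{k} \le k(1-\rho)$ valid for $\rho \in [0,1]$:
\[
|g(\rho) - g(1)| \;\le\; \sum_{S} (1 - \rho^{|S|})\, |\hat{f}_S| \;\le\; (1-\rho)\cdot d \cdot \sum_S |\hat{f}_S|,
\]
which is at most $1$ as soon as $\rho \ge 1 - 1/(d \sum_S |\hat{f}_S|)$.

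The second bound invokes the Markov brothers' inequality, which states that any real polynomial $p$ of degree $d$ satisfies $\|p'\|_{L^\infty[-1,1]} \le d^2\, \|p\|_{L^\infty[-1,1]}$. To apply it, I first verify $|g(\rho)|\le 1$ on all of $[-1,1]$: for $\rho \in [0,1]$ this is because $T_\rho f(y^n)$ is a conditional expectation of a $\pm 1$-valued random variable, and for $\rho\in[-1,0]$ one reduces to the previous case via the Fourier identity $T_\rho f(y^n) = T_{-\rho} f(-y^n)$, obtained by grouping $(-|\rho|)^{|S|} = |\rho|^{|S|}(-1)^{|S|}$. Markov's inequality then gives $|g'(\rho)| \le d^2$ on $[-1,1]$, and the mean value theorem yields $|g(\rho) - g(1)| \le (1-\rho)\,d^2$, which is at most $1$ provided $\rho \ge 1 - 1/d^{2}$. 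Combining the two bounds yields the stated threshold, and since the argument is uniform in $y^n$, it shows $f$ is $\rho$-SP.

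The only non-elementary ingredient is the Markov brothers' inequality; everything else is the Fourier expansion of $T_\rho$ plus a triangle inequality, so no real obstacle is expected. The one subtle point deserving care is the $L^\infty$ bound on $g$ over the full interval $[-1,1]$ (not merely $[0,1]$), which is what licenses the use of Markov's inequality in the form above.
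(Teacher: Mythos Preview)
Your proof is correct and follows the same overall strategy as the paper: fix $y^n$, view $g(\rho)=T_\rho f(y^n)$ as a degree-$d$ polynomial with $g(1)=\pm 1$, and bound $|g(\rho)-g(1)|$ by controlling the derivative. The Markov brothers' step is identical, and you are actually more careful than the paper here: you explicitly verify $|g|\le 1$ on all of $[-1,1]$ via the identity $T_\rho f(y^n)=T_{|\rho|}f(-y^n)$ for $\rho<0$, a point the paper leaves implicit. The one genuine difference is in the spectral-norm branch: the paper invokes Bernstein's inequality on the unit disk to get $\max_{[0,1]}|g'|\le d\sum_S|\hat f_S|$, whereas you obtain the same bound by the elementary estimate $1-\rho^{|S|}\le |S|(1-\rho)\le d(1-\rho)$ and a triangle inequality. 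Your route is simpler and avoids complex analysis altogether; the paper's route has the mild advantage of fitting both bounds into the uniform ``bound the derivative, then bound the largest root'' template.
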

\begin{proof}
Fix any $y^{n}$ and think of $T_{\rho}f(y^{n})$ as a polynomial
in $\rho$. Let $\rho_{0}$ be the largest root of this polynomial
in $[0,1]$ (if there is one, otherwise $\rho_{0}=0$). Since $T_{\rho}f(y^{n})$
equals $f(y^{n})\in\{1,-1\}$ for $\rho=1$, then by continuity, $f$
is $\rho$-SP at $y^{n}$ for any $\rho\geq\rho_{0}$. By the mean
value theorem 
\[
1=T_{1}f(y^{n})-T_{\rho_{0}}f(y^{n})=(1-\rho_{0})\left.\frac{\d}{\d\rho}T_{\rho}f(y^{n})\right|_{\rho=\tilde{\rho}}
\]
for some $\tilde{\rho}\in[\rho_{0},1]$, and so 
\begin{equation}
\rho_{0}\leq1-\frac{1}{\max_{\rho\in[0,1]}\left|\frac{\d}{\d\rho}T_{\rho}f(y^{n})\right|},\label{eq: bound on largest root}
\end{equation}
and so a bound on $\rho_{0}$ may be obtained by bounding the derivative.
To that end, recall that Markov brothers' inequality \cite[Theorem 1.1]{govil1999markov}
states that for any real polynomial $P(t)$ of degree $k$
\[
\max_{t\in[-1,1]}\left|\frac{\d}{\d t}P(t)\right|\leq k^{2}\cdot\max_{t\in[-1,1]}\left|P(t)\right|,
\]
and that Bernstein's inequality \cite[Theorem 1.2]{govil1999markov}
states that for any complex polynomial $Q(z)$ of degree $k$, 
\[
\max_{|z|\leq1}\left|\frac{\d Q(z)}{\d z}\right|\leq k\cdot\max_{|z|\leq1}\left|Q(z)\right|.
\]
The claim then follows from (\ref{eq: bound on largest root}) by
noting that the degree of $T_{\rho}f$ as a polynomial in $\rho$
equals the Fourier degree $\Deg(f)$ , and the bound
\begin{align*}
|T_{\rho}f(y^{n})| & =\left|\sum_{S\subseteq[n]}\rho^{|S|}\cdot\hat{f}_{S}\cdot x^{S}\right|\\
 & \leq\sum_{S\subseteq[n]}|\hat{f}_{S}|
\end{align*}
for any $\rho\in(0,1]$. 
\end{proof}
Theorem \ref{thm: Sufficient - Berenstein} significantly improves
on Theorem \ref{thm: Sufficient - no flips} whenever $\Deg(f)\ll\sqrt{n}$,
e.g., for $n$-dimensional functions $f$ that can be computed by
a decision tree of depth $k\ll n$, in which case $\Deg(f)\leq k$
\cite[Proposition 3.16]{Bool_book}. Functions with low \emph{spectral
norm $\sum_{S\subseteq[n]}|\hat{f}_{S}|$ }are discussed in \cite{shpilka2017structure}
and references therein.

\subsection{Friendly Neighbors}

Given a function $f$, we say that a point $x^{n}$ has a \emph{radius-$d$
friendly neighborhood} w.r.t. $f$ if there exists some $y^{n}$ of
distance at most $d$ that agrees with $x^{n}$, namely, where $\Hamd(x^{n},y^{n})\leq d$
and $f(x^{n})=f(y^{n})$.
\begin{prop}
\label{prop: distance 2 neighbour}Suppose $f$ is $\rho$-SP for
all $\rho>1-\varepsilon$, and $n>\max\{2\varepsilon^{-1},\gamma\}$
where $\gamma$ is a universal constant. Then each point in $\{-1,1\}^{n}$
has a radius-$2$ friendly neighborhood w.r.t. $f$. 
\end{prop}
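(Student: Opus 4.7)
The plan is to argue by contradiction. Suppose some point $x^n \in \{-1,1\}^n$ has no radius-$2$ friendly neighbor; replacing $f$ by $-f$ if needed (which preserves $\rho$-SP), I may assume $f(x^n) = +1$, so $f(y^n) = -1$ for every $y^n$ with $\Hamd(x^n,y^n) \in \{1,2\}$. The idea is then to choose a single $\rho$ in the hypothesized SP window and evaluate $T_\rho f(x^n)$ directly, aiming to show $T_\rho f(x^n) < 0$; this will contradict the requirement $\sgn T_\rho f(x^n) = f(x^n) = +1$ imposed by $\rho$-SP at $x^n$.

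The natural calibration is $\delta = 1/n$, giving $\rho = 1 - 2/n$, which lies in the assumed SP regime $\rho > 1 - \varepsilon$ precisely when $n > 2\varepsilon^{-1}$. Decomposing $T_\rho f(x^n) = \sum_{k=0}^n\delta^k(1-\delta)^{n-k}\sum_{\tilde{x}^n:\,\Hamd(\tilde{x}^n,x^n) = k}f(\tilde{x}^n)$ along Hamming shells, substituting the known values $f = +1$ on shell $0$ and $f = -1$ on shells $1$ and $2$, bounding $|f(\tilde{x}^n)| \leq 1$ on the remaining shells, and using the binomial identity $\sum_{k=0}^n \binom{n}{k}\delta^k(1-\delta)^{n-k} = 1$ to absorb the tail, I obtain the clean upper bound
\[
T_\rho f(x^n) \;\leq\; 1 - 2n\delta(1-\delta)^{n-1} - n(n-1)\delta^2(1-\delta)^{n-2}.
\]
Substituting $\delta = 1/n$ collapses the right-hand side to $1 - 3(1-1/n)^{n-1}$.

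The last step is the elementary fact that $3(1-1/n)^{n-1} > 1$ for all $n$ exceeding some absolute constant $\gamma$: indeed, a short log-derivative computation shows that $(1-1/n)^{n-1}$ is decreasing in $n$ with limit $1/e$, so the quantity is bounded below by $3/e > 1$ for every $n \geq 2$. This delivers $T_\rho f(x^n) < 0$ and closes the contradiction. I expect the only genuinely non-routine point of the argument to be the calibration $\delta = \Theta(1/n)$: this is precisely the scale at which the $0$, $1$, and $2$ Hamming shells all carry mass of order $e^{-1}$, so that the negative contributions from shells $1$ and $2$ just manage to overpower the worst-case positive contribution from the combined tail. Any significantly larger $\delta$ would exit the window $\rho > 1-\varepsilon$, while any significantly smaller $\delta$ would let shell $0$ dominate and lose the contradiction.
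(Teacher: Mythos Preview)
Your argument is correct and essentially identical to the paper's: both assume by contradiction that all distance-$1$ and distance-$2$ neighbors disagree, set $\delta=1/n$ so that $\rho=1-2/n$ lies in the SP window, and bound the shell-$1$ and shell-$2$ contributions to show the predictor flips sign at that point. The only cosmetic difference is that the paper phrases the bound as $\P(f(X^n)\neq f(Y^n)\mid Y^n=y^n)>1/2$ and passes through the asymptotic $(1-\alpha/n)^{n}\to e^{-\alpha}$ (hence the $O(1/n)$ slack and the universal constant $\gamma$), whereas you work directly with $T_\rho f$ and invoke the exact inequality $(1-1/n)^{n-1}\geq 1/e>1/3$, which in fact makes the constant $\gamma$ unnecessary beyond the constraint $n>2\varepsilon^{-1}$.
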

\begin{proof}
Suppose toward contradiction that all the neighbors at Hamming distance
$1$ and $2$ from some $y^{n}$ disagree with it. This implies that
\begin{align*}
\P\left(f(X^{n})\neq f(Y^{n})\mid Y^{n}=y^{n}\right) & \geq\binom{n}{1}\delta(1-\delta)^{n-1}+\binom{n}{2}\delta^{2}(1-\delta)^{n-2}\\
 & =(1-\delta)^{n-2}n\delta\left((1-\delta)+\frac{(n-1)}{2}\delta\right).
\end{align*}
Choosing $\delta=\frac{\alpha}{n}$, and assuming that $n>\frac{2\alpha}{\varepsilon}$
so that we are in the SP region, yields 
\begin{align*}
\P\left(f(X^{n})\neq f(Y^{n})|Y^{n}=y^{n}\right) & \geq\left(1-\frac{\alpha}{n}\right)^{n-2}\alpha\left(1+\frac{\alpha}{2}-\frac{3\alpha}{2n}\right)\\
 & \geq\left(1-\frac{\alpha}{n}\right)^{n-2}\cdot\left(\alpha+\frac{\alpha^{2}}{2}\right)-O\left(\frac{1}{n}\right)\\
 & =e^{-\alpha}\cdot\left(\alpha+\frac{\alpha^{2}}{2}\right)-O\left(\frac{1}{n}\right).
\end{align*}
One can check that, e.g., for $\alpha=1$, $(\alpha+\frac{\alpha^{2}}{2})e^{-\alpha}>1/2$,
and so $f$ cannot be SP if $n$ is larger than some universal constant,
in contradiction. 
\end{proof}
Hence, for a function to be SP even slightly below the guaranteed
high correlation threshold of $\rho>1-\frac{2\ln(2)}{n}+O(n^{-2})$,
every point must admit a radius-$2$ friendly neighborhood. The OR
function, e.g., does not satisfy this property. Furthermore, this
result is tight: for the largest character $x^{[n]}=\prod_{i=1}^{n}x_{i}$,
which is USP, the distance-$1$ neighbors of each point do not agree
with it.

The following corollary, which is not directly related to self-predictability,
is obtained by combining Theorem \ref{thm: Sufficient - Berenstein}
and Proposition \ref{prop: distance 2 neighbour}. 
\begin{cor}
If $\Deg f<\sqrt{n/2}$ and $n$ is larger than a universal constant,
then each point in $\{-1,1\}^{n}$ has a radius-$2$ friendly neighborhood
w.r.t. $f$. 
\end{cor}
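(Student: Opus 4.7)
The plan is to simply chain the two preceding results together and check that the parameter constraints match up.

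First, I would invoke Theorem \ref{thm: Sufficient - Berenstein}. Since $\Deg(f) \leq \min\{\Deg(f), \sum_{S}|\hat{f}_S|\}$ by definition of the minimum, the theorem yields that $f$ is $\rho$-SP for every
\[
\rho \geq 1 - \frac{1}{\Deg(f)^{2}}.
\]
Setting $\varepsilon \dfn 1/\Deg(f)^{2}$, this says that $f$ is $\rho$-SP for all $\rho > 1 - \varepsilon$.

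Next, I would verify that the hypotheses of Proposition \ref{prop: distance 2 neighbour} are satisfied. That proposition requires $n > \max\{2\varepsilon^{-1}, \gamma\}$ for a universal constant $\gamma$. The condition $n > \gamma$ is exactly the assumption that $n$ is larger than a universal constant, while
\[
2\varepsilon^{-1} = 2\,\Deg(f)^{2} < n
\]
is equivalent to the hypothesis $\Deg(f) < \sqrt{n/2}$. Hence both conditions hold, and Proposition \ref{prop: distance 2 neighbour} can be applied to $f$.

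Applying the proposition then gives precisely the conclusion: every $x^{n} \in \{-1,1\}^{n}$ has a radius-$2$ friendly neighborhood w.r.t. $f$. There is no genuine obstacle here beyond matching the constants. The only substantive observation is that the quadratic gap between the SP threshold $1 - 1/\Deg(f)^{2}$ guaranteed by Theorem \ref{thm: Sufficient - Berenstein} and the trivial threshold $1 - 2\ln(2)/n$ from Proposition \ref{thm: Sufficient - no flips} is exactly what opens the window $\rho > 1 - \varepsilon$ with $\varepsilon > 2/n$ that is needed to trigger Proposition \ref{prop: distance 2 neighbour}; the assumption $\Deg(f) < \sqrt{n/2}$ is the precise quantitative form of this.
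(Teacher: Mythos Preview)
Your approach is exactly the one the paper intends: combine Theorem~\ref{thm: Sufficient - Berenstein} with Proposition~\ref{prop: distance 2 neighbour}, and check that $\Deg(f)<\sqrt{n/2}$ is precisely the condition $n>2\varepsilon^{-1}$ for $\varepsilon=1/\Deg(f)^2$. One small slip: you wrote ``$\Deg(f)\le\min\{\Deg(f),\sum_S|\hat f_S|\}$ by definition of the minimum,'' but the inequality goes the other way --- it is $\min\{\Deg(f),\sum_S|\hat f_S|\}\le\Deg(f)$ that holds by definition, and this is what makes $1-1/\Deg(f)^2$ lie above the threshold in Theorem~\ref{thm: Sufficient - Berenstein}; your conclusion is unaffected.
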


\section{Low Correlation Self Predicting (LCSP) Functions\label{sec:low-corr}}

In this section we discuss LCSP functions, i.e., functions that are
$\rho$-SP for any $\rho<\rho^{*}$ for some $\rho^{*}>0$. Note that
any USP function is trivially also LCSP, hence all our LCSP necessary
conditions will apply to USP functions verbatim.

\subsection{LCSP and Spectral Threshold Functions}

Let the \emph{minimal level} of a function $f$ be defined as 
\[
\Lev(f)\dfn\min\left\{ k\in[n]:W^{k}[f]>0\right\} ,
\]
and let 
\[
f_{\Lev}(x^{n})\dfn\sum_{S:|S|=\Lev(f)}\hat{f}_{S}x^{S}.
\]
We say that $f$ is \emph{weakly spectral threshold (WST)} if $f_{\Lev}(x^{n})\cdot f(x^{n})\geq0$
for all $x^{n}$, i.e., the sign of both functions agree whenever
$f_{\Lev}\neq0$. We say that $f$ is \emph{strongly spectral threshold
(SST)} if it is WST and $f_{\Lev}$ is never zero.

For an LTF $f=\sgn(a_{0}+\sum_{i=1}^{n}a_{i}x_{i})$, the Fourier
coefficients $(\hat{f}_{\phi},\hat{f_{1}},\ldots,\hat{f}_{n})$ are
called \emph{Chow parameters}, and, as is well-known \cite{chow1961characterization,tannenbaum1961establishment},
these parameters unambiguously determine the LTF. The \emph{Chow-parameters
problem} \cite{o2011chow} is to find coefficients $a_{0}^{n}$ defining
the LTF given the Chow parameters. It can be seen that in case of
balanced LTFs, SST functions are exactly the LTFs for which a solution
to the Chow-parameters problem is exactly the Chow parameters themselves. 
\begin{prop}
\label{prop: USP implies WST}SST implies LCSP. Conversely, LCSP implies
WST. 
\end{prop}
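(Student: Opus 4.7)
The plan rests on a low-$\rho$ expansion of the noise operator. From (\ref{eq: Fourier expansion of noise operator}) one factors
\[
T_\rho f(y^n) = \rho^{\Lev(f)}\left[f_{\Lev}(y^n) + R(\rho,y^n)\right], \qquad R(\rho,y^n) \dfn \sum_{k > \Lev(f)} \rho^{k-\Lev(f)}\sum_{|S|=k} \hat f_S\, y^S,
\]
where, for any $\rho\in[0,1]$, $|R(\rho,y^n)| \leq \rho M$ with $M \dfn \sum_{|S|>\Lev(f)} |\hat f_S|$. Thus the level-$\Lev(f)$ summand is the first non-vanishing contribution as $\rho \to 0^+$, and it dominates the bracket uniformly in $y^n$ once $\rho$ is small. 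This single identity drives both directions of the proof.

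For SST $\Rightarrow$ LCSP, set $m \dfn \min_{y^n} |f_{\Lev}(y^n)|$, which is strictly positive by SST. For any $\rho \in (0,\,m/(M+1))$, the bound on $R$ gives $|R(\rho,y^n)| < m \leq |f_{\Lev}(y^n)|$, so the bracket has the same sign as $f_{\Lev}(y^n)$; multiplying by the positive scalar $\rho^{\Lev(f)}$ preserves that sign, and invoking SST yields $\sgn T_\rho f(y^n) = \sgn f_{\Lev}(y^n) = f(y^n)$ for every $y^n$. The endpoint $\rho=0$ is immediate: if $\Lev(f)\geq 1$ then $T_0 f\equiv 0$ and the SP condition is vacuous, while $\Lev(f)=0$ under SST forces $f$ to be the nonzero constant $\sgn \hat f_\emptyset$, for which the claim is trivial. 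Hence $f$ is $\rho$-SP on the entire interval $[0,m/(M+1))$, establishing LCSP.

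For the converse LCSP $\Rightarrow$ WST, I argue by contradiction: suppose WST fails at some $y_0^n$, so that $f_{\Lev}(y_0^n)\neq 0$ and $\sgn f_{\Lev}(y_0^n) = -f(y_0^n)$. Since $R(0,y_0^n)=0$ and $R(\rho,y_0^n)$ is continuous in $\rho$, there is $\rho_0>0$ with $|R(\rho,y_0^n)|<|f_{\Lev}(y_0^n)|$ for every $\rho\in[0,\rho_0)$. Hence for each $\rho\in(0,\rho_0)$ the bracket, and therefore $T_\rho f(y_0^n)$ itself, is nonzero with sign $\sgn f_{\Lev}(y_0^n) = -f(y_0^n)$, violating $\rho$-SP at $y_0^n$ and contradicting LCSP. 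The argument presents no real obstacle; the only care needed is to separate the edge case $\Lev(f)=0$ in the first direction (where SST collapses to $f$ being constant) and to note that points where $f_{\Lev}(y^n)$ vanishes satisfy the WST inequality automatically, so the sign-domination argument need be invoked only where $f_{\Lev}(y^n)\neq 0$.
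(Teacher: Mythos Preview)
Your proof is correct and follows essentially the same approach as the paper: both factor $T_\rho f(y^n)=\rho^{\Lev(f)}\bigl(f_{\Lev}(y^n)+O(\rho)\bigr)$ and observe that for small $\rho$ the sign is governed by $f_{\Lev}$, from which both directions follow immediately. Your version is simply more explicit---supplying a quantitative threshold $\rho^*=m/(M+1)$ and treating the edge cases $\rho=0$ and $\Lev(f)=0$ separately---while the paper compresses the same reasoning into two lines.
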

\begin{proof}
The optimal predictor for $f$ satisfies 
\begin{align*}
\sgn T_{\rho}f(x^{n}) & =\sgn\left(\rho^{\Lev(f)}\cdot\sum_{s:|S|\geq\Lev(f)}\rho^{|S|-\Lev(f)}\hat{f}_{S}x^{S}\right)\\
 & =\sgn\left(f_{\Lev}(x^{n})+O(\rho)\right).
\end{align*}
Thus, $\sgn T_{\rho}f(x^{n})=\sgn f_{\Lev}(x^{n})$ for any $\rho$
small enough whenever $f_{\Lev}(x^{n})\neq0$. If $f$ is SST $f_{\Lev}(x^{n})$
never vanishes, and hence $f(x^{n})=\sgn f_{\Lev}(x^{n})=\sgn T_{\rho}f(x^{n})$,
implying LCSP. Conversely, if $f$ is LCSP, then $f(x^{n})=\sgn T_{\rho}f(x^{n})=\sgn f_{\Lev}(x^{n})$
unless $f_{\Lev}$ vanishes, implying WST. 
\end{proof}
An immediate consequence of Proposition \ref{prop: USP implies WST}
is: 
\begin{cor}
\label{cor: unblanaced functions are SP only constants} An LCSP function
is either balanced or constant. 
\end{cor}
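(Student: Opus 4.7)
The plan is to reduce the corollary to a clean dichotomy based on the minimal Fourier level $\Lev(f)$, using Proposition \ref{prop: USP implies WST} as the only nontrivial ingredient. By that proposition, any LCSP function $f$ is WST, which means $f_{\Lev}(x^{n})\cdot f(x^{n})\geq 0$ for every $x^{n}\in\{-1,1\}^{n}$. I would then split on whether $\Lev(f)=0$ or $\Lev(f)\geq 1$ and show that the first case forces $f$ to be constant while the second forces $f$ to be balanced.

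First, suppose $\Lev(f)\geq 1$. By the definition of $\Lev(f)$, this means $W^{0}[f]=\hat{f}_{\emptyset}^{2}=0$, hence $\hat{f}_{\emptyset}=\E f(X^{n})=0$. Since $f$ takes values in $\{-1,1\}$, this is exactly the statement that $\Pr(f(X^{n})=1)=1/2$, so $f$ is balanced.

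Next, suppose $\Lev(f)=0$. Then $f_{\Lev}$ is the nonzero constant function $\hat{f}_{\emptyset}$. The WST condition $\hat{f}_{\emptyset}\cdot f(x^{n})\geq 0$ for every $x^{n}$, together with $|f(x^{n})|=1$ and $\hat{f}_{\emptyset}\neq 0$, forces $f(x^{n})=\sgn\hat{f}_{\emptyset}$ for all $x^{n}$, so $f$ is constant.

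There is no real obstacle here; the only point requiring a moment's care is the case $\Lev(f)=0$, where one must observe that $f_{\Lev}$ is a nonzero constant (rather than a polynomial that could vanish somewhere) so that WST really does pin down the sign of $f$ at every single point. Everything else is bookkeeping about Fourier weights and the definition of balancedness.
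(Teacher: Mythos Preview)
Your proof is correct and essentially identical to the paper's argument: both invoke Proposition~\ref{prop: USP implies WST} to obtain WST, then observe that $\Lev(f)=0$ forces $f=\sgn\hat{f}_{\emptyset}$ (constant) while $\Lev(f)\geq 1$ means $\hat{f}_{\emptyset}=0$ (balanced). The paper phrases it as ``assume unbalanced, deduce constant,'' whereas you write out both cases of the dichotomy explicitly, but the content is the same.
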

\begin{proof}
Suppose $f$ is LCSP and unbalanced. Then $\Lev[f]=0$ and $\hat{f}_{\phi}\neq0$,
and by Proposition \ref{prop: USP implies WST} it must be WST. Hence
$f=\sgn\hat{f}_{\phi}\in\{-1,1\}$ must be constant. 
\end{proof}
It is also interesting to note the following dichotomy:
\begin{cor}
Let $f$ be an LCSP function. Then either $W^{1}[f]=0$ or $W^{1}[f]\geq1/2$.
\end{cor}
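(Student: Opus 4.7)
The plan is to combine the two cited results directly: LCSP forces the strong and ordinary stabilities to coincide at low correlation, and the lower bound of Proposition \ref{prop:Prediction gain bounds for low correlation} then forces $W^1[f]$ to be at least $1/2$ (unless it is zero outright).

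More concretely, first I would assume $W^1[f] > 0$ and aim to conclude $W^1[f] \geq 1/2$. By the definition of LCSP, there is some $\rho^{*} > 0$ such that $f(y^n) = \sgn T_\rho f(y^n)$ for every $y^n$ with $T_\rho f(y^n) \neq 0$ and every $\rho \in [0, \rho^{*})$. At tied points both $T_\rho f(y^n)$ and $\sgn T_\rho f(y^n)$ vanish, so in all cases $f(y^n) \cdot T_\rho f(y^n) = |T_\rho f(y^n)|$. Taking expectation over $Y^n$ and comparing (\ref{eq: Stab_star as 1-norm}) with the identity $\Stab_\rho[f] = \E(f(Y^n) T_\rho f(Y^n))$ then gives $\Stab_\rho^{*}[f] = \Stab_\rho[f]$ for every $\rho \in [0, \rho^{*})$.

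Second, I would invoke the lower bound of Proposition \ref{prop:Prediction gain bounds for low correlation}: since $W^1[f] > 0$, equation (\ref{eq: stability at low correlation}) guarantees that $\Stab_\rho[f] > 0$ for sufficiently small $\rho > 0$, so the ratio is well-defined and
\[
1 \;=\; \frac{\Stab_\rho^{*}[f]}{\Stab_\rho[f]} \;\geq\; \frac{1}{\sqrt{2\, W^1[f]}} + O(\rho^{2}).
\]
Letting $\rho \to 0^{+}$ yields $\sqrt{2\, W^1[f]} \geq 1$, that is, $W^1[f] \geq 1/2$, establishing the dichotomy.

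There is essentially no obstacle here; the only point requiring minor care is the handling of tied points (where $T_\rho f$ vanishes) when equating $\Stab_\rho^{*}[f]$ with $\Stab_\rho[f]$, but the convention $\sgn(0) = 0$ and the fact that $f(y^n) T_\rho f(y^n) = 0 = |T_\rho f(y^n)|$ at those points makes the identification automatic.
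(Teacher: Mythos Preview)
Your argument is correct and matches the paper's intended approach: the paper states the corollary as an immediate consequence of Proposition~\ref{prop:Prediction gain bounds for low correlation} together with the LCSP hypothesis, and your write-up simply spells out the details---equating $\Stab_\rho^{*}[f]$ with $\Stab_\rho[f]$ on the LCSP range (equivalently, the equality case of Proposition~\ref{prop: SP charctarization using stability and noise operator}) and then letting $\rho\to 0$ in the Khintchine lower bound. The only cosmetic difference is that the paper phrases the first ingredient as ``in light of Proposition~\ref{prop: USP implies WST}'' (whose corollary ensures $f$ is balanced when $W^1[f]>0$, so that the low-correlation expansions in Proposition~\ref{prop:Prediction gain bounds for low correlation} are valid), whereas you appeal directly to the SP equality $\Stab_\rho^{*}=\Stab_\rho$; both routes are equivalent here.
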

\begin{proof}
If $0<W^{1}[f]<1/2$ then Proposition \ref{prop:Prediction gain bounds for low correlation}
implies that
\[
\frac{\Stab_{\rho}^{*}[f]}{\Stab_{\rho}[f]}>1
\]
for all sufficiently small $\rho$, and so $f$ cannot be LCSP. 
\end{proof}
This result resembles the claim that $W^{1}[f]\geq1/2$ for LTFs \cite[Theorem 5.2]{Bool_book}.
Note however that the above claim holds for LCSP functions that are
not LTFs but do have energy on the first level. Next, recall that
Proposition \ref{prop: SP charctarization using stability and noise operator}
states that a function is $\rho$-SP if and only if $\Vert T_{\rho}f\Vert_{1}=\Stab_{\rho}^{*}[f]=\Stab_{\rho}[f]=\Vert T_{\sqrt{\rho}}f\Vert_{2}^{2}$.
A similar property holds for $f_{\Lev}$ if the function is LCSP. 
\begin{cor}
If $f$ is LCSP then $\Vert f_{\Lev}\Vert_{1}=\Vert f_{\Lev}\Vert_{2}^{2}$. 
\end{cor}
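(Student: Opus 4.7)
The plan is to combine the Proposition characterizing self-predictability via norms, namely $\|T_{\sqrt{\rho}}f\|_2^2=\|T_\rho f\|_1$ iff $f$ is $\rho$-SP, with an asymptotic expansion as $\rho\to 0^+$, and read off the level-$\Lev(f)$ behavior on both sides. Since $f$ is LCSP, this equality holds for all $\rho\in(0,\rho^*)$, giving an identity of two analytic functions of $\rho$ on an interval, not just a single relation. The idea is to normalize both sides by the natural scaling $\rho^{\Lev(f)}$ and take $\rho\downarrow 0$.

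Concretely, write $k\dfn\Lev(f)$ and use the Fourier expansion of the stability,
\[
\rho^{-k}\|T_{\sqrt{\rho}}f\|_2^2 \;=\; \rho^{-k}\sum_{S\subseteq[n]}\rho^{|S|}\hat f_S^2 \;=\; W^k[f]+O(\rho),
\]
where $W^k[f]=\|f_\Lev\|_2^2$ by definition of $f_\Lev$. On the other side, since $Y^n$ is uniform on $\{-1,1\}^n$ and $\rho^{-k}>0$,
\[
\rho^{-k}\|T_\rho f\|_1 \;=\; \E\bigl|\rho^{-k}T_\rho f(Y^n)\bigr| \;=\; \E\bigl|f_\Lev(Y^n)+O(\rho)\bigr|,
\]
using $\rho^{-k}T_\rho f(y^n)=\sum_{|S|\geq k}\rho^{|S|-k}\hat f_S y^S=f_\Lev(y^n)+O(\rho)$ (the $O(\rho)$ is bounded uniformly over $y^n$ since $\{-1,1\}^n$ is finite). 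As the expectation is just a finite average, the limit commutes trivially with the absolute value, and the right-hand side tends to $\E|f_\Lev(Y^n)|=\|f_\Lev\|_1$ as $\rho\downarrow 0$.

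Equating the two limits of the LCSP identity $\rho^{-k}\|T_{\sqrt{\rho}}f\|_2^2 = \rho^{-k}\|T_\rho f\|_1$ yields $\|f_\Lev\|_2^2=\|f_\Lev\|_1$, as claimed. There is essentially no obstacle: the only point requiring mild care is the interchange of the limit with the absolute value and expectation, which is immediate because the underlying probability space is finite and each of the finitely many terms depends continuously on $\rho$. The argument is really just extracting the leading coefficient on each side of the norm identity supplied by Proposition \ref{prop: SP charctarization using stability and noise operator}.
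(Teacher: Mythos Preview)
Your proof is correct, but it follows a different route from the paper's. The paper first invokes Proposition~\ref{prop: USP implies WST} to conclude that $f$ is WST, i.e., $f_{\Lev}(x^n)\cdot f(x^n)\geq 0$ pointwise; since $f(x^n)\in\{-1,1\}$ this means $|f_{\Lev}(x^n)|=f_{\Lev}(x^n)\cdot f(x^n)$ for every $x^n$. Taking expectations and applying Plancherel gives $\E|f_{\Lev}|=\langle f_{\Lev},f\rangle=W^{\Lev[f]}[f]=\E f_{\Lev}^2$ in one line. Your argument bypasses the WST structural fact and instead works directly with the norm identity $\|T_\rho f\|_1=\|T_{\sqrt{\rho}}f\|_2^2$ from Proposition~\ref{prop: SP charctarization using stability and noise operator}, normalizing by $\rho^{\Lev(f)}$ and passing to the limit $\rho\downarrow 0$. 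Both are short; the paper's approach is purely algebraic once WST is in hand, while yours is analytic and shows that the corollary is literally the leading-order term of the SP norm characterization---a nice observation in its own right, and one that does not require first isolating the WST property.
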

\begin{proof}
$f$ must be WST by Proposition \ref{prop: USP implies WST}, and
so Plancherel's identity implies that 
\begin{align*}
\E\left|f_{\Lev}(X^{n})\right| & =\E\left(f_{\Lev}(X^{n})\cdot f(X^{n})\right)\\
 & =\langle f_{\Lev},f\rangle\\
 & =\sum_{S:|S|=\Lev[f]}\hat{f}_{S}^{2}\\
 & =\E\left(f_{\Lev}^{2}(X^{n})\right).
\end{align*}
\end{proof}
The following two examples show that the distinction between WST and
SST in Proposition \ref{prop: USP implies WST} is necessary. 
\begin{example}[LCSP does not imply SST]
Consider the balanced LTF with $n=4$ and coefficients $a_{1}^{4}=(2,1,1,1)$.
This is a Majority function with a tie-breaking input. It can be verified
by direct computation that this function is USP, hence also LCSP.
However, its level-$1$ Fourier coefficients are $(\frac{3}{4},\frac{1}{4},\frac{1}{4},\frac{1}{4})$.
Hence, while it is clearly WST, it is not SST as there are $2$ inputs
for which $f_{\Lev}(x^{n})=0$. 
\end{example}
\begin{example}[WST does not imply LCSP]
The balanced LTF with $n=9$ and coefficients $a_{1}^{9}=(1,5,16,19,25,58,68,91,94)$
can be verified to be WST, but not LCSP. It is $\rho$-SP only for
$\rho>0.577$. This example was found by analyzing LTFs with randomly
drawn coefficients.
\end{example}
The following example shows that the SST property is limited to the
low-correlation regime only. 
\begin{example}[SST does not imply USP]
The LTF of Example \ref{exa: LTF example} is SST, but as was shown
there, is not USP. Thus, while an SST is always LCSP, it is not necessarily
USP. 
\end{example}
We note in passing that there are SST and WST functions outside Majority
that are USP. 
\begin{example}
The LTF in Example \ref{exa:USP not majority} is SST and USP, while
the balanced LTF with $n=9$ and coefficients $a_{1}^{9}=(1,1,1,3,3,3,5,5,7)$
is WST and USP ($f_{\Lev}=0$ for $30$ inputs), but not SST. 
\end{example}
Next, using Proposition \ref{prop: USP implies WST}, we can show
that the largest coefficients of an LCSP LTF cannot be too distinct. 
\begin{prop}
\label{prop:first to second ration for LTF}Let $f$ be an LTF that
depends on all its $n$ variables. Let $a$ and $b$ be its first
and second largest coefficients in absolute values, respectively,
in some representation of $f$. If $f$ is LCSP then $\left|\frac{a}{b}\right|<\sqrt{2n\ln(2n)}+1$.
\end{prop}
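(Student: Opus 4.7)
The plan is to derive a contradiction from the assumption $|a/b| \geq \sqrt{2n\ln(2n)}+1$: I will show the Chow representation of $f$ must look like a dictator on $x_1$, forcing $f = x_1$ by WST and contradicting dependence on all $n$ variables. Proposition~\ref{prop:Input flip preserves SP} allows me to flip inputs so that all $a_i>0$; relabeling, I assume $a = a_1 \geq a_2 \geq \cdots \geq a_n > 0$ with $b = a_2$. Since $f$ is LCSP and depends on all its variables, Corollary~\ref{cor: unblanaced functions are SP only constants} forces $f$ to be balanced, and Proposition~\ref{prop: USP implies WST} forces $f$ to be WST, so $f(x) = \sgn(\sum_i \hat f_i x_i)$ wherever the right-hand side is nonzero. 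Monotonicity plus dependence on every coordinate makes $\hat f_i>0$ for all $i$, so $\Lev(f) = 1$. A short computation using balance and the symmetry $x\mapsto -x$ shows that $\sum_i a_i x_i$ is never zero on $\{-1,1\}^n$ for a balanced LTF, so the bias $a_0$ may be reset to $0$ without altering $f$ or the pair $(a,b)$.

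Setting $H_j \dfn \sum_{k\neq 1,j} a_k X_k$, a direct conditional computation together with the symmetry of $H_j$ around zero gives, for $j\geq 2$,
\[
\hat f_j = \P\bigl(|a_1 X_1 + H_j|<a_j\bigr) = \P\bigl(H_j \in (a_1-a_j,\, a_1+a_j)\bigr).
\]
Since $a_1-a_j \geq a-b > \sqrt{2n\ln(2n)}\cdot b$ while $\sum_{k\neq 1,j} a_k^2 \leq (n-2)b^2$, Hoeffding's inequality yields $\hat f_j \leq \exp(-n\ln(2n)/(n-2)) \leq 1/(2n)$, so $\sum_{j\geq 2} \hat f_j < 1/2$. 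An analogous bound on $H_1 \dfn \sum_{k\geq 2} a_k X_k$ (using $a_1^2 > 2n\ln(2n)\cdot b^2$ and $\sum_{k\geq 2} a_k^2 \leq (n-1)b^2$) gives $\P(|H_1|\geq a_1) < 1/n$, hence $\hat f_1 = \P(|H_1|<a_1) > 1 - 1/n > 1/2$.

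Combining the two estimates, $\hat f_1 > \sum_{j\geq 2} \hat f_j$, so $\hat f_1 x_1 + \sum_{j\geq 2} \hat f_j x_j$ is never zero and has the sign of $\hat f_1 x_1 = x_1$ for every $x\in\{-1,1\}^n$. WST therefore forces $f(x) = x_1$ on the whole cube, contradicting the hypothesis that $f$ depends on all $n$ coordinates. The main obstacle is calibrating the threshold so that the two Hoeffding bounds act in tandem: the constant $\sqrt{2n\ln(2n)}+1$ is precisely what forces both exponents above $\ln(2n)$, simultaneously yielding $\hat f_j < 1/(2n)$ for $j\geq 2$ and $\hat f_1 > 1-1/n$. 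A minor but necessary bookkeeping step is the reduction to $a_0=0$, without which the symmetry-based simplification of $\hat f_j$ acquires a shift that need not be small.
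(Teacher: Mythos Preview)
Your argument is correct and follows essentially the same route as the paper's: reduce to a balanced monotone LTF with $a_0=0$, express the first-level Fourier coefficients as influences $\hat f_j=\P(|\sum_{i\neq j}a_iX_i|<a_j)$, apply Hoeffding's inequality to force $\hat f_1>\sum_{j\geq 2}\hat f_j$, and conclude via WST that $f=x_1$, contradicting dependence on all variables. The only cosmetic difference is that the paper bounds the single ratio $\hat f_1/\hat f_2$ (and then invokes the monotonicity $a_i\geq a_j\Rightarrow\hat f_i\geq\hat f_j$ to control the remaining coefficients), whereas you bound each $\hat f_j$ for $j\geq 2$ directly; your version is marginally cleaner in that it sidesteps that monotonicity step, and your explicit justification of the reduction to $a_0=0$ is a detail the paper glosses over.
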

\begin{proof}
Assume without loss of generality that $a_{1}\geq a_{2}\geq\cdots\geq a_{n}>0$.
Recall also that by Corollary \ref{cor: unblanaced functions are SP only constants}
we know that $a_{0}=0$. Since $f$ is monotone, its level-$1$ Fourier
coefficients equal influences \cite[Proposition 2.21]{Bool_book},
i.e.,
\begin{align}
\hat{f}_{k} & =\Inf_{k}[f]\label{eq: Fourier is influence for monotone}\\
 & \dfn\P\left(f(X^{n})\neq f(X_{1}^{k-1},-X_{k},X_{k+1}^{n})\right)\nonumber \\
 & =\P\left(\left|\sum_{i\neq k}a_{i}X_{i}\right|<a_{k}\right).\label{eq: Fourier level 1 with probability}
\end{align}
Assume without loss of generality that $a_{2}=1$, and write $a\dfn a_{1}$.
For brevity, also write $Z\dfn\sum_{i=3}^{n}a_{i}X_{i}$ and $X\dfn X_{1}$.
Then, from the symmetry of $Z$, 
\begin{align*}
\hat{f}_{1} & =\P(|X+Z|\leq a)\\
 & =\P(|1+Z|\leq a)\\
 & \geq\P(|Z|<a-1),
\end{align*}
and 
\begin{align*}
\hat{f}_{2} & =\P(|aX+Z|\leq1)\\
 & \leq\P(a-1\leq|Z|\leq a+1)\\
 & \leq\P(|Z|\geq a-1).
\end{align*}
Hence, 
\[
\frac{\hat{f}_{1}}{\hat{f}_{2}}\geq\frac{1-\P(|Z|\geq a-1)}{\P(|Z|\geq a-1)}.
\]
Since $|a_{i}|\leq1$ for $3\leq i\leq n$, and assuming toward contradiction
that $a>\sqrt{(2n-2)\ln2n}+1$, Hoeffding's inequality implies that
\[
\P(|Z|\geq a-1)<1/n,
\]
and so $\hat{f}_{1}/\hat{f}_{2}>n-1$. Noting that $a_{i}\geq a_{j}$
implies $\hat{f}_{i}\geq\hat{f}_{j}$, we also have that $\hat{f}_{1}/\hat{f}_{i}\geq n-1+\varepsilon$
for any $i>1$, for $\varepsilon>0$ small enough. By Proposition
\ref{prop: USP implies WST}, $f$ is WST, i.e., $f(x^{n})=\sgn\sum_{i=1}^{n}\hat{f}_{i}x_{i}$
whenever the right-hand side (r.h.s.) is nonzero. This representation
and the bounds on the ratios $\hat{f}_{1}/\hat{f}_{i}$ from above
imply that $f(x^{n})=x_{1}$ must hold. This, however, contradicts
the assumption that $f$ depends on all the variables. 
\end{proof}
For example, the enlightened dictator function $\edic(\cdot)$ (\ref{eq: E-dict def})
has first-to-second coefficient ratio of $n-2$, and thus cannot be
LCSP. It should be noted however, that $\edic(\cdot)$ can also be
written as an LTF with coefficients $\edic(\cdot)=(\sqrt{n},1,c,c,\ldots,c)$
where $c=\frac{\sqrt{n}-1+\varepsilon}{n-2}$ for some $\varepsilon>0$.
When given in this form, Proposition \ref{prop:first to second ration for LTF}
is incapable of ruling it out from being SP. Nonetheless, it is easy
to verify that LTFs of coefficients $(c,1,1,...,1)$ for $c<n-2$
and $c=\Omega(n)$, must have $a_{2}=a_{3}\cdots=a_{n}$ in any valid
representation, and thus the first-to-second-coefficient ratio is
always $\Omega(n)$.

\subsection{LTF Approximation}

The WST condition can be leveraged to show that a LCSP function can
typically be well approximated by an LTF. Specifically: 
\begin{thm}
\label{thm:LCSP_LTF_approx} An LCSP $f$ is $\sqrt{\frac{2}{\pi n_{f}}}$-close
to an LTF, where $n_{f}\dfn|\{i\in[n]:\hat{f}_{i}\neq0\}|$. 
\end{thm}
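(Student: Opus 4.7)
The plan is to build the approximating LTF directly from the first Fourier level of $f$ and to control the distance by an anti-concentration (Littlewood--Offord) estimate. The starting point is that, by Proposition \ref{prop: USP implies WST}, an LCSP $f$ is WST, so $f(x^n) = \sgn f_{\Lev}(x^n)$ at every $x^n$ where $f_{\Lev}(x^n) \neq 0$. The candidate LTF is then (a small perturbation of) $\sgn f_{\Lev}$, and the distance from $f$ will be at most $\Pr(f_{\Lev}(X^n) = 0)$.

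First I would dispose of the degenerate levels. If $\Lev(f) \geq 2$ then no singleton coefficient is nonzero, so $n_f = 0$ and the claimed bound $\sqrt{2/(\pi n_f)} = +\infty$ is vacuous. If $\Lev(f) = 0$, Corollary \ref{cor: unblanaced functions are SP only constants} gives that $f$ is constant, which is itself an LTF of zero distance. The substantive case is $\Lev(f) = 1$, where $f_{\Lev}(x^n) = \sum_{i\,:\,\hat{f}_i \neq 0} \hat{f}_i x_i$ is a nontrivial linear form depending on exactly $n_f$ variables. I would then define the approximating LTF as $g(x^n) \dfn \sgn\bigl(a_0 + \sum_i \hat{f}_i x_i\bigr)$, with $a_0 > 0$ chosen smaller than the minimum nonzero value of $|f_{\Lev}|$; this keeps the argument of the sign strictly nonzero (as required by the paper's convention for LTFs) while preserving $\sgn g = \sgn f_{\Lev}$ at every point where $f_{\Lev} \neq 0$. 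Combining this with the WST identity above, $f$ and $g$ can disagree only on the zero set $\{f_{\Lev}(x^n) = 0\}$, and so $\Pr(f(X^n) \neq g(X^n)) \leq \Pr(f_{\Lev}(X^n) = 0)$.

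The remaining step, and the only place where real work happens, is the anti-concentration estimate $\Pr(f_{\Lev}(X^n) = 0) \leq \sqrt{2/(\pi n_f)}$. I would invoke the Erd\H{o}s--Littlewood--Offord inequality: for any nonzero reals $b_1, \ldots, b_m$ and independent uniform $\pm 1$ signs $X_1, \ldots, X_m$,
\[
\Pr\Bigl(\textstyle\sum_{i=1}^{m} b_i X_i = 0\Bigr) \leq \binom{m}{\lfloor m/2 \rfloor} 2^{-m} \leq \sqrt{\tfrac{2}{\pi m}},
\]
where the second inequality is the standard Stirling estimate for the central binomial coefficient (easily checked for small $m$ and immediate for large $m$). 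Applying this with $m = n_f$ and $b_i = \hat{f}_i$ ranging over the support of the singleton coefficients yields the advertised bound. The main obstacle, such as it is, is identifying WST as the structural hook that reduces LTF-approximation to controlling the vanishing probability of a linear form; once that link is made, the whole argument is a one-line application of Littlewood--Offord.
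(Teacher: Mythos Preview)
Your proposal is correct and follows essentially the same route as the paper: reduce to $\Lev(f)=1$, use the WST property from Proposition~\ref{prop: USP implies WST} to build the approximating LTF from the first-level coefficients, and bound the zero set of $f_{\Lev}$ by the Erd\H{o}s--Littlewood--Offord / Sperner estimate (which the paper states and proves as Lemma~\ref{lem:sperner}). The only cosmetic difference is that the paper perturbs the coefficients generically whereas you add a small constant $a_0$, but both achieve the same end.
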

\begin{cor}
A monotone LCSP function that depends on all its coordinates is $\sqrt{\frac{2}{\pi n}}$-close
to an LTF. 
\end{cor}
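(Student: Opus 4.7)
The plan is to deduce this corollary directly from Theorem \ref{thm:LCSP_LTF_approx} by showing that under the monotonicity and essential-dependence hypotheses, the quantity $n_f$ appearing there equals $n$.

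First I would recall the key identity used in the proof of Proposition \ref{prop:first to second ration for LTF}: for any monotone Boolean function $f$, the level-$1$ Fourier coefficients coincide with the influences, i.e.\ $\hat{f}_i = \Inf_i[f]$ for every $i\in[n]$. This is standard (see \cite[Proposition 2.21]{Bool_book}) and was already invoked in the preceding section, so I would simply cite it.

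Next, I would use the elementary fact that $f$ depends on coordinate $i$ if and only if $\Inf_i[f]>0$. By hypothesis $f$ depends on all of its $n$ coordinates, so $\Inf_i[f]>0$ and therefore $\hat{f}_i\neq 0$ for every $i\in[n]$. Consequently
\[
n_f \;\dfn\; \bigl|\{i\in[n]:\hat{f}_i\neq 0\}\bigr| \;=\; n.
\]
Applying Theorem \ref{thm:LCSP_LTF_approx} to the LCSP function $f$ then yields that $f$ is $\sqrt{2/(\pi n_f)} = \sqrt{2/(\pi n)}$-close to an LTF, which is exactly the claim.

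There is no real obstacle here: the corollary is a one-line specialization of Theorem \ref{thm:LCSP_LTF_approx}, and the only substantive input is the equivalence between positive influence and essential dependence on a coordinate, combined with the monotone identity $\hat{f}_i = \Inf_i[f]$. The entire proof can be written in a few lines.
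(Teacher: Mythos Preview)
Your proposal is correct and matches the paper's intended (unstated) argument: the corollary is an immediate specialization of Theorem~\ref{thm:LCSP_LTF_approx} once one observes that for a monotone function $\hat{f}_i=\Inf_i[f]>0$ whenever $f$ depends on coordinate $i$, so $n_f=n$.
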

To prove Theorem \ref{thm:LCSP_LTF_approx} we first establish the
following technical lemma. We state it in a slightly more general
form than we actually need. 
\begin{lem}
\label{lem:sperner} Let $a^{n}\in\mathbb{R}^{n}$ be a vector of
nonzero coefficients. Then for any $b\in\mathbb{R}$ 
\[
\P\left(\left|\sum_{i=1}^{n}a_{i}X_{i}-b\right|<\min_{k\in[n]}|a_{k}|\right)\leq2^{-n}\binom{n}{\lfloor\nicefrac{n}{2}\rfloor}\leq\sqrt{\frac{2}{\pi n}}.
\]
\end{lem}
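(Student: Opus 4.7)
The plan is to recognize this as the classical \emph{Erd\H{o}s--Littlewood--Offord anti-concentration inequality}, which is standardly proved via Sperner's theorem on antichains. After rescaling, we will reduce to the case $\min_k|a_k|=1$, show that the ``bad'' sign patterns correspond to an antichain in the Boolean lattice $2^{[n]}$, invoke Sperner's theorem to bound this antichain by $\binom{n}{\lfloor n/2\rfloor}$, and finally bound this central binomial by $\sqrt{2/(\pi n)}\cdot 2^n$ via Stirling.

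First I would normalize. Since the distribution of $X^n$ is invariant under coordinate-wise sign flips, replacing $a_i$ by $|a_i|$ does not change the distribution of $\sum_i a_i X_i$; hence WLOG assume $a_i>0$ for all $i$. Rescaling by $m\dfn\min_k a_k>0$ reduces the event to $|\sum_i a'_i X_i-b'|<1$ with $a'_i\dfn a_i/m\geq 1$ and $b'\dfn b/m$. So it suffices to prove: for $a_i\geq 1$, the number $N$ of $x^n\in\{-1,1\}^n$ with $\sum_i a_i x_i\in(b'-1,b'+1)$ satisfies $N\leq\binom{n}{\lfloor n/2\rfloor}$.

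Next I would set up the antichain argument. Identify each $x^n\in\{-1,1\}^n$ with the subset $A(x^n)\dfn\{i:x_i=+1\}\subseteq[n]$, so that $\sum_i a_i x_i=2\sum_{i\in A(x^n)}a_i-\sum_i a_i$. Suppose $x^n,\tilde x^n$ are both ``bad'' and satisfy $A(x^n)\subsetneq A(\tilde x^n)$. Then
\[
\sum_i a_i(\tilde x_i-x_i)=2\sum_{i\in A(\tilde x^n)\setminus A(x^n)}a_i\geq 2,
\]
since each $a_i\geq 1$ and the index set is nonempty. But both sums lie in an open interval of length $2$, so their difference is strictly less than $2$, a contradiction. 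Hence the family $\{A(x^n):x^n\text{ is bad}\}$ is an antichain in $2^{[n]}$, and Sperner's theorem yields $N\leq\binom{n}{\lfloor n/2\rfloor}$.

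The final step is the Stirling estimate $\binom{n}{\lfloor n/2\rfloor}\leq 2^n\sqrt{2/(\pi n)}$, which is routine. There is no real obstacle: the only nontrivial ingredient is Sperner's theorem, and the main conceptual step is the normalization plus the observation that the strict interval length $2$ forces the associated subsets to be incomparable. The bound is tight when $a_i=1$ for all $i$ and $b$ equals zero (for odd $n$) or $\pm 1$ (for even $n$).
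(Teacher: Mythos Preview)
Your argument is correct and is essentially the paper's own proof: reduce to positive coefficients, observe that two ``bad'' sign patterns that are comparable in the Boolean lattice would force a gap of at least $2\min_k|a_k|$ in the linear form, contradicting the open interval of width $2\min_k|a_k|$, and then apply Sperner's theorem. The only slip is in your closing tightness remark, where the parity cases are swapped: for odd $n$ with all $a_i=1$ and $b=0$ the sum $\sum_i X_i$ is always odd, so the probability is zero; the bound is attained with $b=0$ for even $n$ and with $b=\pm 1$ for odd $n$.
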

\begin{proof}
Write $a=\min|a_{k}|$ and let 
\[
\mathcal{A}\dfn\left\{ x^{n}\in\{-1,1\}^{n}:\left|\sum_{i=1}^{n}a_{i}x_{i}-b\right|<a\right\} .
\]
It is easy to see that $\mathcal{A}$ forms an antichain w.r.t. the
partial order $\preceq$ on $\{-1,1\}^{n}$, i.e., that there are
no two distinct $x^{n},y^{n}\in\mathcal{A}$ such that $x^{n}\preceq y^{n}$.
This holds simply since for such a pair it must hold that 
\[
\left|\sum_{i=1}^{n}a_{i}y_{i}-\sum_{i=1}^{n}a_{i}x_{i}\right|\geq2a.
\]
An antichain w.r.t. $\preceq$ is called a \emph{Sperner family},
and \emph{Sperner's theorem} \cite[Maximal Antichains, Corollary 2]{alon2004probabilistic}
shows that 
\[
|\mathcal{A}|\leq\binom{n}{\lfloor\nicefrac{n}{2}\rfloor}
\]
concluding the proof. 
\end{proof}
\begin{proof}[Proof of Theorem \ref{thm:LCSP_LTF_approx}]
Assume $\Lev[f]=1$ (trivial otherwise), and define $g(x^{n})=\sgn(\sum_{i=1}^{n}\hat{f}_{i}x_{i})$.
Let $\mathcal{A}\dfn\{x^{n}\in\{-1,1\}^{n}:g(x^{n})=0\}$. Using Lemma
\ref{lem:sperner}, we have that 
\begin{align*}
\Pr(X^{n}\in\mathcal{A}) & \leq\sqrt{\frac{2}{\pi n_{f}}}.
\end{align*}
Since $f$ is LCSP then by Proposition \ref{prop: USP implies WST}
is it also WST, and hence $f(x^{n})=g(x^{n})$ for all $x^{n}\not\in\mathcal{A}$.
By slightly perturbing the coefficients of $g$, one can clearly obtain
a ``legal'' LTF $\tilde{g}$ that takes values only in $\{-1,1\}$
and still agrees with $f$ for all $x^{n}\not\in\mathcal{A}$. The
distance between $f$ and $\tilde{g}$ is therefore at most $|\mathcal{A}|/2^{n}$. 
\end{proof}

\subsection{Chow Distance}

The \emph{Chow distance} between two Boolean functions $f$ and $g$
is defined as 
\[
d_{\Chow}(f,g)\dfn\left(\sum_{i\in[n]}\left(\hat{f}_{i}-\hat{g}_{i}\right)^{2}\right)^{1/2}.
\]
It was shown in \cite[Prop. 1.5, Th. 1.6]{o2011chow} that for any
$f$ and $g$ 
\[
\frac{1}{4}d_{\mathrm{\Chow}}^{2}(f,g)\le\Dist(f,g)\leq\tilde{O}\left(\frac{1}{\sqrt{-\log d_{\Chow}(f,g)}}\right),
\]
where for $q<1$, $\tilde{O}(q)$ means $O(q\cdot\log^{c}(1/q))$
for some absolute constant $c$. 

For LCSP LTF functions, the upper bound can be generally improved.
We will state our result for the case where one of the functions is
SST, though it can be somewhat cumbersomely extended to the case where
none of them is. Let $\Gap[f]$ be the minimal positive value of $\sum_{i=1}^{n}\hat{f}_{i}x_{i}$
over the Hamming cube (with $\Gap[f]=0$ if all the $\hat{f}_{i}$'s
are zero).
\begin{thm}
\label{thm:lcsp_separation} Let $f$ and $g$ be two balanced LCSP
functions that depend on all $n$ variables, and assume that $f$
is SST. Then 
\[
\Dist(f,g)\leq\frac{d_{\mathrm{\Chow}}^{2}(f,g)}{2\Gap[f]}.
\]
\end{thm}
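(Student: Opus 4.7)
The plan is to introduce the level-one proxies $F(x^n) := \sum_{i=1}^n \hat{f}_i x_i$ and $G(x^n) := \sum_{i=1}^n \hat{g}_i x_i$ and to exploit the Plancharel identity
\[
\E\bigl[(f(X^n) - g(X^n))(F(X^n) - G(X^n))\bigr] \;=\; \sum_{i=1}^n (\hat{f}_i - \hat{g}_i)^2 \;=\; d_{\Chow}^2(f,g),
\]
which follows immediately from $\E[fF]=W^1[f]$, $\E[gG]=W^1[g]$, and $\E[fG]=\E[gF]=\sum_i \hat{f}_i\hat{g}_i$. The goal is then to bound the integrand on the left-hand side pointwise from below by $2\Gap[f]\cdot\I(f\ne g)$.

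The pointwise bound is where the SST and WST hypotheses enter. First I would note that the only non-vacuous case is $\Lev[f]=1$, since otherwise every $\hat{f}_i=0$, making $\Gap[f]=0$ and the claimed inequality trivial. Under $\Lev[f]=1$, the SST hypothesis on $f$ gives $f(x^n)=\sgn F(x^n)$ with $F(x^n)$ never vanishing, so $|F(x^n)|\geq\Gap[f]$ at \emph{every} point $x^n$. For $g$, Proposition \ref{prop: USP implies WST} gives that LCSP implies WST, whence $g(x^n)G(x^n)=|G(x^n)|\geq 0$ at every point (this holds trivially if $\Lev[g]\geq 2$, since then $G\equiv 0$). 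Consequently, at any $x^n$ where $f(x^n)=1$ and $g(x^n)=-1$ one has $F(x^n)\geq\Gap[f]$ and $G(x^n)\leq 0$, so $(f(x^n)-g(x^n))(F(x^n)-G(x^n))\geq 2\Gap[f]$; the opposite-sign case is symmetric, and the product vanishes on agreements.

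Combining the two steps yields
\[
d_{\Chow}^2(f,g) \;=\; \E\bigl[(f-g)(F-G)\bigr] \;\geq\; 2\Gap[f]\cdot\P(f(X^n)\neq g(X^n)) \;=\; 2\Gap[f]\cdot\Dist(f,g),
\]
and the theorem follows by rearranging. The main conceptual step is recognizing that $(f-g)(F-G)$ is the right quantity to examine; once that is in hand, the SST hypothesis on $f$ is used essentially to guarantee a \emph{uniform} lower bound $|F|\geq\Gap[f]$ rather than merely $F\neq 0$ on the support, and it is this uniformity that converts pointwise sign agreement into the quantitative $2\Gap[f]$ margin.
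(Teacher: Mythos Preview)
Your argument is correct and essentially identical to the paper's: both compute $\E[(f-g)(F-G)]=d_{\Chow}^2(f,g)$ via Plancharel, then use WST of $g$ and SST of $f$ to lower-bound the integrand by $2\Gap[f]$ on the disagreement set. The paper phrases the intermediate step as $2\E\bigl[|F-G|\cdot\I(f\neq g)\bigr]$ before invoking the $\Gap[f]$ bound, but this is only a cosmetic difference from your pointwise analysis.
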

\begin{proof}[Proof of Theorem \ref{thm:lcsp_separation}]
Let 
\[
\mathcal{B}\dfn\left\{ x^{n}\in\{-1,1\}^{n}:f(x^{n})\neq g(x^{n})\right\} .
\]
Then,
\begin{align}
d_{\Chow}^{2}(f,g) & =\E\left((f(X^{n})-g(X^{n}))\cdot\sum_{i\in[n]}\left(\hat{f}_{i}-\hat{g}_{i}\right)X_{i}\right)\label{eq:lcsp_dist1}\\
 & =2\E\left(\left|\sum_{i\in[n]}\left(\hat{f}_{i}-\hat{g}_{i}\right)X_{i}\right|\cdot\I(X^{n}\in\mathcal{B})\right)\label{eq:lcsp_dist2}\\
 & \geq2\Gap[f]\cdot\P\left(X^{n}\in\mathcal{B}\right),\label{eq:lcsp_dist3}
\end{align}
where (\ref{eq:lcsp_dist1}) follows from linearity of expectation
and the definition of the Fourier coefficients, (\ref{eq:lcsp_dist2})
holds since both $f$ and $g$ are WST by virtue of Proposition \ref{prop: USP implies WST}
and so for all $x^{n}\in\mathcal{B}$, $|f(X^{n})-g(X^{n})|=2$ and
$\sgn[\sum_{i\in[n]}(\hat{f}_{i}-\hat{g}_{i})X_{i}]=\sgn[f(X^{n})-g(X^{n})]$.
Finally, (\ref{eq:lcsp_dist3}) holds by noting that $f$ is SST and
$g$ is WST. Thus, whenever $f(x^{n})>g(x^{n})$ then $\sum_{i\in[n]}\hat{f}_{i}X_{i}>\Gap[f]$
and $\sum_{i\in[n]}\hat{g}_{i}X_{i}\leq0$ (and similarly for $f(x^{n})<g(x^{n})$). 
\end{proof}
Equations (\ref{eq: Fourier is influence for monotone})-(\ref{eq: Fourier level 1 with probability})
and Lemma \ref{lem:sperner} imply that $\Gap[\maj]\leq\sqrt{\frac{2}{\pi n}}$.
Since Majority is SST, we have:
\begin{cor}
For odd $n$ and any LCSP function $g$, 
\[
\frac{1}{4}\cdot d_{\Chow}^{2}(\maj,g)\leq\Dist(\maj,g)\leq\sqrt{\frac{\pi n}{8}}\cdot d_{\Chow}^{2}(\maj,g).
\]
\end{cor}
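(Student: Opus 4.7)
The plan is to treat the two inequalities in the displayed chain separately. The lower bound is immediate: the estimate $\frac{1}{4} d_{\Chow}^{2}(f,g)\leq\Dist(f,g)$ quoted above from \cite{o2011chow} holds for every pair of Boolean functions, so in particular for $f=\maj$ against any LCSP $g$.

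For the upper bound the plan is to invoke Theorem \ref{thm:lcsp_separation} with $f=\maj$ and then to compute $\Gap[\maj]$ explicitly. First I would check the hypotheses: by Theorem \ref{thm: Majority USP}, $\maj$ is USP and hence LCSP; for odd $n$ it is balanced and depends on all coordinates; and all of its level-$1$ Fourier coefficients are equal to some $c>0$, so $f_{\Lev}(x^{n})=c\sum_{i}x_{i}$ never vanishes when $n$ is odd, which exhibits $\maj$ as SST. For the function $g$, Corollary \ref{cor: unblanaced functions are SP only constants} forces balancedness (the constant case being trivial). Theorem \ref{thm:lcsp_separation} then yields
\[
\Dist(\maj,g)\leq\frac{d_{\Chow}^{2}(\maj,g)}{2\,\Gap[\maj]}.
\]

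What remains is to lower-bound $\Gap[\maj]$. Since all level-$1$ coefficients of $\maj$ are equal and the minimum positive value of $|\sum_{i}x_{i}|$ over $\{-1,1\}^{n}$ is $1$ for odd $n$, one has $\Gap[\maj]=\hat{\maj}_{1}$. By (\ref{eq: Fourier is influence for monotone})--(\ref{eq: Fourier level 1 with probability}) this influence equals the probability that $n-1$ i.i.d.\ signs sum to zero, namely $\binom{n-1}{(n-1)/2}/2^{n-1}$, and a standard Stirling estimate yields $\binom{n-1}{(n-1)/2}/2^{n-1}\geq\sqrt{2/(\pi n)}$. Substituting this into the previous display produces the claimed coefficient $\sqrt{\pi n/8}$.

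The only point that deserves care is the direction of the Stirling estimate. The discussion immediately preceding the corollary records an \emph{upper} bound on the same central binomial quantity (via Lemma \ref{lem:sperner}), whereas the corollary requires the matching \emph{lower} bound; this is a routine consequence of the two-sided form of Stirling's inequality, which I would either quote directly or verify by inspection for small odd $n$ combined with an asymptotic estimate for large $n$.
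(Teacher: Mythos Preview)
Your proposal is correct and follows exactly the paper's intended approach: quote the general lower bound from \cite{o2011chow}, apply Theorem~\ref{thm:lcsp_separation} with $f=\maj$ (verifying that $\maj$ is balanced, LCSP, and SST for odd $n$), and then identify $\Gap[\maj]=\hat{\maj}_{1}=\binom{n-1}{(n-1)/2}/2^{n-1}$. You are in fact more careful than the paper on one point: the sentence preceding the corollary records the \emph{upper} bound $\Gap[\maj]\leq\sqrt{2/(\pi n)}$ coming from Lemma~\ref{lem:sperner}, whereas, as you correctly observe, the corollary requires the matching \emph{lower} bound $\Gap[\maj]\geq\sqrt{2/(\pi n)}$, which follows from the other side of the Stirling estimate for the central binomial coefficient.
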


\section{Stability-based Conditions\label{sec:stab}}

In this section we provide simple necessary conditions for a function
to be $\rho$-SP, in terms of its stability and Fourier coefficients.
\begin{prop}
\label{prop: necessary condition basic}If $f$ is $\rho$-SP then
\begin{align*}
\Stab_{\rho}[f] & \geq\max_{S\subseteq[n]}\rho^{|S|}|\hat{f}_{S}|.
\end{align*}
\end{prop}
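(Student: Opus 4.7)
The plan is to leverage Proposition 3.14 (and the identification of $\Stab_\rho^*[f]$ with $\|T_\rho f\|_1$ from equation (\ref{eq: Stab_star as 1-norm})), which tells us that whenever $f$ is $\rho$-SP, the stability collapses to the $L^1$-norm of the noise operator applied to $f$, i.e., $\Stab_\rho[f] = \|T_\rho f\|_1$. So the task reduces to showing that $\|T_\rho f\|_1$ dominates $\rho^{|S|}|\hat{f}_S|$ for every $S \subseteq [n]$.

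The key observation is that each scaled Fourier coefficient $\rho^{|S|} \hat{f}_S$ is itself a Fourier coefficient of $T_\rho f$: by (\ref{eq: Fourier expansion of noise operator}), $\widehat{T_\rho f}_S = \rho^{|S|}\hat{f}_S$, and this equals $\E[T_\rho f(Y^n)\, Y^S]$. Since $|Y^S| = 1$, a one-line application of the triangle inequality gives
\[
\rho^{|S|}|\hat{f}_S| = \bigl|\E[T_\rho f(Y^n) Y^S]\bigr| \leq \E|T_\rho f(Y^n)| = \|T_\rho f\|_1.
\]
Combining this with $\Stab_\rho[f] = \|T_\rho f\|_1$ (valid because $f$ is $\rho$-SP) and taking the maximum over $S$ yields the claim.

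There is no real obstacle here; the proof is a direct combination of Proposition 3.14 and the bound $|\hat{g}_S| \leq \|g\|_1$ applied to $g = T_\rho f$. The only thing to double-check is the edge case $T_\rho f(y^n) = 0$, but then $f(y^n) \cdot T_\rho f(y^n) = 0 = |T_\rho f(y^n)|$ automatically, so the SP hypothesis still yields $\E[f(Y^n) T_\rho f(Y^n)] = \E|T_\rho f(Y^n)|$ pointwise, hence in expectation.
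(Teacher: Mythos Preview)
Your proof is correct and follows essentially the same route as the paper: both use that $\rho$-SP implies $\Stab_\rho[f]=\Stab_\rho^*[f]=\|T_\rho f\|_1$, and then bound $\|T_\rho f\|_1\ge |\E[T_\rho f(Y^n)Y^S]|=\rho^{|S|}|\hat f_S|$ via the triangle inequality together with $|Y^S|=1$. The paper writes out the multiplication by $|Y^T|$ explicitly, whereas you phrase it as the standard fact $|\hat g_S|\le\|g\|_1$ applied to $g=T_\rho f$, but the content is identical.
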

\begin{proof}
If $f$ is $\rho$-SP, then $\Stab_{\rho}[f]=\Stab_{\rho}^{*}[f]$.
Letting $T\subseteq[n]$, the strong stability can be lower bounded
as follows: 
\begin{align*}
\Stab_{\rho}^{*}[f] & =\E\left|\sum_{S\subseteq[n]}\rho^{|S|}\cdot\hat{f}_{S}\cdot Y^{S}\right|\\
 & =\E\left(\left|\sum_{S\subseteq[n]}\rho^{|S|}\cdot\hat{f}_{S}\cdot Y^{S}\right|\cdot\left|Y^{T}\right|\right)\\
 & =\E\left(\left|\sum_{S\subseteq[n]}\rho^{|S|}\cdot\hat{f}_{S}\cdot Y^{S}\cdot Y^{T}\right|\right)\\
 & \geq\left|\E\left(\sum_{S\subseteq[n]}\rho^{|S|}\cdot\hat{f}_{S}\cdot Y^{S}\cdot Y^{T}\right)\right|\\
 & =|\rho^{|T|}\cdot\hat{f}_{T}|.
\end{align*}
The proof is completed by optimizing over $T$. 
\end{proof}
\begin{example}
When $f$ is the OR function, we have 
\[
\max_{S\subseteq[n]}\rho^{|S|}|\hat{f}_{S}|=|\hat{f}_{\phi}|=1-2^{1-n}.
\]
It is easy to verify that 
\[
\P\left(f(X^{n})=f(Y^{n})\right)=1-2^{1-n}\cdot\left(1-\left(1-\delta\right)^{n}\right),
\]
and using $\rho=1-2\delta$ 
\begin{align*}
\Stab_{\rho}[f] & =2\cdot\P\left(f(X^{n})=f(Y^{n})\right)-1\\
 & =1-2^{2-n}\cdot\left(1-\left(\frac{1+\rho}{2}\right)^{n}\right).
\end{align*}
Then, $\OR$ is $\rho$-SP only when $\Stab_{\rho}[\OR]\geq1-2^{1-n}$,
which can be seen to be equivalent to $\rho\geq2^{(\nicefrac{n-1)}{n}}-1$.
This is the same result that can be obtained by direct computation
(see Proposition \ref{thm: Sufficient - no flips}), and so the bound
of Proposition \ref{prop: necessary condition basic} is tight in
this case. Furthermore, we may deduce again the result of Corollary
\ref{cor: unblanaced functions are SP only constants}: 
\end{example}
\begin{cor}
An LCSP function is either balanced or constant. 
\end{cor}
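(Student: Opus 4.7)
The plan is to recover the dichotomy directly from the necessary condition of Proposition \ref{prop: necessary condition basic}, specializing to the trivial character $T=\emptyset$. Suppose toward contradiction that $f$ is LCSP and unbalanced, i.e.\ $\hat{f}_{\emptyset}\neq 0$. By the definition of LCSP, there is some $\rho^{*}>0$ such that $f$ is $\rho$-SP for every $\rho\in[0,\rho^{*})$, so Proposition \ref{prop: necessary condition basic} applies for all such $\rho$, giving
\[
\Stab_{\rho}[f]\;\geq\;\rho^{0}\cdot|\hat{f}_{\emptyset}|\;=\;|\hat{f}_{\emptyset}|.
\]

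The next step is to examine the limit $\rho\to 0^{+}$. Using the Fourier expansion \eqref{eq: Stability as Fourier},
\[
\Stab_{\rho}[f]\;=\;\hat{f}_{\emptyset}^{2}+\sum_{S\neq\emptyset}\rho^{|S|}\hat{f}_{S}^{2}\;\xrightarrow[\rho\to 0^{+}]{}\;\hat{f}_{\emptyset}^{2},
\]
so the necessary condition forces $\hat{f}_{\emptyset}^{2}\geq|\hat{f}_{\emptyset}|$. Dividing by $|\hat{f}_{\emptyset}|>0$ yields $|\hat{f}_{\emptyset}|\geq 1$. Combined with the trivial bound $|\hat{f}_{\emptyset}|=|\E f(X^{n})|\leq 1$, this gives $|\hat{f}_{\emptyset}|=1$, which means $f\equiv\sgn(\hat{f}_{\emptyset})$ is constant, the desired contradiction (or, equivalently, the desired conclusion if one allows the constant case).

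There is no real obstacle here: the hardest part is simply noticing that the $S=\emptyset$ specialization of Proposition \ref{prop: necessary condition basic} is essentially the whole game, because as $\rho\to 0$ the stability tends to $\hat{f}_{\emptyset}^{2}$ while the lower bound tends to $|\hat{f}_{\emptyset}|$, and the only way these are compatible for a nonzero bias is if $|\hat{f}_{\emptyset}|=1$. Compared with the earlier derivation (Corollary \ref{cor: unblanaced functions are SP only constants}), which went through the WST property of Proposition \ref{prop: USP implies WST}, this argument is purely stability-theoretic and stays within the framework of Section \ref{sec:stab}; that is presumably why the author flags it as a second deduction of the same dichotomy.
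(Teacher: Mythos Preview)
Your proof is correct and is essentially identical to the paper's own argument: apply Proposition~\ref{prop: necessary condition basic} with $S=\emptyset$ to obtain $\Stab_{\rho}[f]\geq|\hat{f}_{\emptyset}|$, let $\rho\downarrow 0$ so that the left-hand side tends to $\hat{f}_{\emptyset}^{2}$, and conclude that $|\hat{f}_{\emptyset}|\in\{0,1\}$. Your closing remark correctly identifies why this appears here as a second derivation distinct from Corollary~\ref{cor: unblanaced functions are SP only constants}.
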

\begin{proof}
If $f$ is $\rho$-SP then 
\[
\Stab_{\rho}[f]=\sum_{S\subseteq[n]}\rho^{|S|}|\hat{f}_{S}|^{2}\geq|\hat{f}_{\phi}|.
\]
As $\rho\downarrow0$, this bound implies that $|\hat{f}_{\phi}|^{2}\geq|\hat{f}_{\phi}|$,
and as $|\hat{f}_{\phi}|\leq1$, this is only possible when either
$\hat{f}_{\phi}=0$ or $|\hat{f}_{\phi}|=1$. 
\end{proof}
More generally, we have the following: 
\begin{cor}
If $f$ is LCSP then 
\[
W^{\Lev[f]}[f]\geq\max_{S\subseteq[n]:\;|S|=\Lev(f)}|\hat{f}_{S}|.
\]
Specifically, if $f$ is also monotone, this bound reads 
\[
W^{1}[f]\geq\max_{i\in[n]}\hat{f}_{i}=\max_{i\in[n]}\Inf_{i}[f],
\]
where the r.h.s. is the so-called \emph{maximal influence} of $f$. 
\end{cor}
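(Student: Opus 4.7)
The plan is to invoke Proposition~\ref{prop: necessary condition basic} in the low-correlation limit. Since $f$ is LCSP, there is some $\rho^{*}>0$ such that $\Stab_\rho[f] \geq \max_{S \subseteq [n]} \rho^{|S|} |\hat{f}_S|$ for every $\rho \in (0,\rho^{*})$, and it will suffice to extract the leading-order asymptotic as $\rho \downarrow 0$.

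On the left-hand side, the Fourier expansion (\ref{eq: Stability as Fourier}) reads
\[
\Stab_\rho[f] \;=\; \sum_{k=0}^{n} \rho^{k}\, W^{k}[f] \;=\; \rho^{\Lev[f]}\, W^{\Lev[f]}[f] \;+\; O\!\left(\rho^{\Lev[f]+1}\right),
\]
since by the very definition of $\Lev[f]$ all strictly lower Fourier weights vanish. On the right-hand side, for any fixed $T$ with $|T|=\Lev[f]$ the summand indexed by $T$ equals $\rho^{\Lev[f]}|\hat{f}_{T}|$, while every term with $|S|>\Lev[f]$ is of strictly smaller order in $\rho$. Dividing both sides of the Proposition's inequality by $\rho^{\Lev[f]}$ and letting $\rho \downarrow 0$ then yields $W^{\Lev[f]}[f] \geq |\hat{f}_{T}|$, and maximizing over admissible $T$ gives the first displayed bound.

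For the monotone specialization I would invoke the identity $\hat{f}_{i}=\Inf_{i}[f]$ already recorded as (\ref{eq: Fourier is influence for monotone}). This both makes the level-one Fourier coefficients non-negative, so the absolute value can be dropped, and, whenever $f$ is non-constant, guarantees that at least one $\Inf_{i}[f]$ is strictly positive, forcing $\Lev[f]=1$. Substituting into the general bound then produces $W^{1}[f] \geq \max_{i\in[n]}\Inf_{i}[f]$, as claimed.

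I do not expect any substantive obstacle. The only mildly delicate point is justifying that the $O(\rho^{\Lev[f]+1})$ remainder on the left-hand side cannot interfere with the comparison at order $\rho^{\Lev[f]}$. This is handled by observing that, for $\rho$ sufficiently small, the maximum on the right-hand side of Proposition~\ref{prop: necessary condition basic} is attained at some $S$ of minimum level, after which the $\rho \downarrow 0$ argument is entirely routine.
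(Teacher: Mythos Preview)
Your proposal is correct and follows exactly the approach the paper intends: the corollary sits immediately after the $\rho\downarrow 0$ argument deriving Corollary~\ref{cor: unblanaced functions are SP only constants} from Proposition~\ref{prop: necessary condition basic}, and is meant to be obtained by the same limiting procedure. One small remark on the monotone specialization: positivity of some $\Inf_i[f]$ only gives $\Lev[f]\leq 1$; to conclude $\Lev[f]=1$ you also need $\hat{f}_\phi=0$, which follows since an LCSP non-constant function is balanced (the corollary just proved). With that sentence added, the argument is complete.
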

When $\Deg(f)<n$, another bound of the form of Proposition \ref{prop: necessary condition basic}
can be derived using the following implication of hypercontractivity
\cite{bonami1970etude,gross1975logarithmic}: When $f:\{-1,1\}^{n}\to\mathbb{R}$
has $\Deg(f)=k$ then $\left\Vert f\right\Vert _{2}\leq e^{k}\cdot\left\Vert f\right\Vert _{1}$
\cite[Theorem 9.22]{Bool_book}. 
\begin{prop}
\label{prop: necessary condition hyper}If $f$ is $\rho$-SP and
$\Deg(f)=k$ then 
\[
\Stab_{\rho}[f]\geq e^{-k}\cdot\sqrt{\Stab_{\rho^{2}}[f]}.
\]
\end{prop}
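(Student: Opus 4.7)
The plan is to apply the hypercontractive inequality stated just before the proposition to the function $T_\rho f$ rather than to $f$ itself. Since $T_\rho$ acts on the Fourier expansion by the multiplier $\rho^{|S|}$, it cannot increase the Fourier degree, so $\Deg(T_\rho f)\leq\Deg(f)=k$. The cited hypercontractive bound $\|g\|_2\leq e^k\|g\|_1$ applied to $g=T_\rho f$ therefore yields
\[
\left\Vert T_\rho f\right\Vert_2\;\leq\;e^k\cdot\left\Vert T_\rho f\right\Vert_1.
\]

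The next step is to identify both sides with stability quantities already introduced in the paper. By Plancherel,
\[
\left\Vert T_\rho f\right\Vert_2^2=\sum_{S\subseteq[n]}\rho^{2|S|}\hat{f}_S^2=\Stab_{\rho^2}[f],
\]
which is just the identity (\ref{eq: Stability as Fourier}) read at correlation $\rho^2$. For the $1$-norm, formula (\ref{eq: Stab_star as 1-norm}) gives $\|T_\rho f\|_1=\Stab_\rho^*[f]$, and because $f$ is assumed $\rho$-SP, Proposition \ref{prop: SP charctarization using stability and noise operator} ensures $\Stab_\rho^*[f]=\Stab_\rho[f]$.

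Substituting these identifications into the hypercontractive inequality produces
\[
\sqrt{\Stab_{\rho^2}[f]}\;\leq\;e^k\cdot\Stab_\rho[f],
\]
which is the desired bound after dividing by $e^k$. No step here is really an obstacle: the only subtlety worth double-checking is that $T_\rho$ does not raise the degree, which is immediate from (\ref{eq: Fourier expansion of noise operator}), and that the SP hypothesis is used precisely to convert the $1$-norm of $T_\rho f$ into the ordinary stability $\Stab_\rho[f]$. Thus the entire argument is essentially a one-line combination of hypercontractivity with the SP characterization $\|T_{\sqrt{\rho}}f\|_2^2=\|T_\rho f\|_1$.
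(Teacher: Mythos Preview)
Your argument is correct and is essentially identical to the paper's own proof: both apply the degree-$k$ hypercontractive bound $\|g\|_2\le e^{k}\|g\|_1$ to $g=T_\rho f$, identify $\|T_\rho f\|_2^2$ with $\Stab_{\rho^2}[f]$ (the paper does this via self-adjointness of $T_\rho$, you via Plancherel, which is the same thing), and use the $\rho$-SP assumption to turn $\|T_\rho f\|_1=\Stab_\rho^*[f]$ into $\Stab_\rho[f]$.
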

\begin{proof}
As in the proof of Proposition \ref{prop: necessary condition basic},
we lower bound 
\begin{align}
\E\left|\sum_{S\subseteq[n]}\rho^{|S|}\cdot\hat{f}_{S}\cdot Y^{S}\right| & =\left\Vert T_{\rho}f\right\Vert _{1}\nonumber \\
 & \geq e^{-k}\cdot\left\Vert T_{\rho}f\right\Vert _{2}\label{eq: hypercontractivity stability bound a}\\
 & =e^{-k}\cdot\sqrt{\langle T_{\rho}f,T_{\rho}f\rangle}\nonumber \\
 & =e^{-k}\cdot\sqrt{\langle T_{\rho^{2}}f,f\rangle}\label{eq: hypercontractivity stability bound b}\\
 & =e^{-k}\cdot\sqrt{\Stab_{\rho^{2}}[f]}\nonumber 
\end{align}
where (\ref{eq: hypercontractivity stability bound a}) is since $\Deg(f)=\Deg(T_{\rho}f)=k$,
and (\ref{eq: hypercontractivity stability bound b}) is since $T_{\rho}f$
is a self-adjoint operator. 
\end{proof}
The last proof implies for a degree $k$, $\rho$-SP function $f$
\[
e^{-k}\cdot\sqrt{\Stab_{\rho^{2}}[f]}\leq\Stab_{\rho}[f]\leq\sqrt{\Stab_{\rho^{2}}[f]}.
\]
It can be observed that even for a given degree $k$, neither of the
bounds in Propositions \ref{prop: necessary condition basic} and
\ref{prop: necessary condition hyper} subsumes the other.

\section{Sharp Threshold at High Correlation \label{sec:Sharp-Threshold}}

As we have seen, all functions are $\rho$-SP when $\rho>1-\frac{2\ln2}{n}+O(n^{-2})$.
In this section, we show that when the correlation is reduced ever
so slightly to $\rho\approx1-\frac{2}{n}$, the fraction of SP functions
becomes double-exponentially small.
\begin{thm}
\label{thm: sharp threshold delta=00003D1/n}For any $\alpha>1$,
the fraction of $\rho$-SP functions for $\rho=1-\frac{2\alpha}{n}$
is at most $\exp(-2^{n\cdot E(\alpha)+o(n)})$, where 
\[
E(\alpha)\dfn\min\left\{ \frac{1}{2},\binent\left(\frac{\alpha-1}{2\alpha}\right)\right\} 
\]
and $\binent(t)\dfn-t\log(t)-(1-t)\cdot\log(1-t)$ is the binary entropy
function. 
\end{thm}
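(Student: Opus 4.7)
The plan is to upper bound $\P_{f}(f \text{ is } \rho\text{-SP})$ for uniformly random $f : \{-1,1\}^n \to \{-1,1\}$. For fixed $y^n$, decompose $T_\rho f(y^n) = (1-\delta)^n f(y^n) + W(y^n)$ with
\[
W(y^n) := \sum_{x \neq y^n} \delta^{d_H(x,y^n)}(1-\delta)^{n-d_H(x,y^n)} f(x);
\]
then SP at $y^n$ is equivalent to $(1-\delta)^n + f(y^n) W(y^n) \geq 0$, and whenever $|W(y^n)| > (1-\delta)^n$ the SP requirement forces $f(y^n)$ to a specific sign, whereas otherwise SP holds automatically regardless of $f(y^n)$. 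The strategy is to identify many points whose forcing events are approximately independent with probability $\approx 2^{-n(1-h(p))+o(n)}$ each, so that a random $f$ is SP only when a large number of independent half-probability coin flips all come out right.

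I would fix $T \subseteq \{-1,1\}^n$ to be a maximal code of minimum Hamming distance $3$. A sphere-packing argument gives $|T| \geq 2^n/(n+1) = 2^{n-o(n)}$, and for distinct $y, y' \in T$ the shell-$1$ neighborhoods $B_1^*(y) := \{y \oplus e_i\}_{i=1}^n$ and $B_1^*(y')$ are disjoint (and disjoint from $T$). Decompose $W(y^n) = w_1 U_1(y^n) + W''(y^n)$ where $w_1 := \delta(1-\delta)^{n-1}$, $U_1(y^n) := \sum_{i=1}^n f(y^n \oplus e_i)$ is a Rademacher sum over the shell-$1$ values, and $W''(y^n)$ collects shells of radius $\geq 2$. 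For $\delta = \alpha/n$ one computes $\V[w_1 U_1] = n w_1^2 = \Theta(e^{-2\alpha}/n)$ while $\V[W''] = \sum_{k\geq 2}\binom{n}{k}w_k^2 = O(e^{-2\alpha}/n^2)$, so the shell-$\geq 2$ contribution is typically much smaller than the threshold $(1-\delta)^n \asymp e^{-\alpha}$. Crucially the $U_1(y^n)$ for $y \in T$ are sums over pairwise disjoint sets of independent Rademachers, hence mutually independent.

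The large deviation is driven by Cramer's theorem for the symmetric Binomial:
\[
q_n := \P\bigl(|U_1(y^n)| > n/\alpha + o(n)\bigr) = 2^{-n(1-h(p)) + o(n)}, \qquad p := \tfrac{\alpha-1}{2\alpha},
\]
using $h(1/2 + 1/(2\alpha)) = h(p)$. Conditioning on $f|_{\{-1,1\}^n \setminus T}$, each $f(y)$ for $y \in T$ is then an independent fair coin, and
\[
\P(f \text{ is SP}) \leq \P(\text{SP at all } y \in T) = \E\bigl[(1/2)^M\bigr], \qquad M := \bigl|\{y \in T : |W(y)| > (1-\delta)^n\}\bigr|.
\]
Under the reduction $W \approx w_1 U_1$, $M$ stochastically dominates a sum of $|T|$ independent $\mathrm{Bernoulli}(q_n(1-o(1)))$'s with mean $\mu := |T| q_n = 2^{nh(p)+o(n)}$, whence
\[
\E[(1/2)^M] \leq (1 - q_n/2)^{|T|} \leq \exp(-\mu/2) = \exp\bigl(-2^{nh(p)+o(n)}\bigr).
\]

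The main obstacle is to rigorously justify the reduction $W \approx w_1 U_1$ uniformly in $y \in T$ at a strength sufficient to preserve the doubly-exponential bound. The shell-$\geq 2$ contributions $W''(y)$ are correlated across $y \in T$ through shared function values, so a naive union bound on $\{\max_{y\in T}|W''(y)| = o((1-\delta)^n)\}$ yields probability $1-o(1)$ but not doubly-exponentially close to $1$, which is insufficient for the final estimate. Resolving this requires either a sharper Bernstein-type joint tail bound on the weighted Rademacher sums $W''(y)$, or a Janson-style correction that accounts for the dependence between the forcing indicators while preserving the leading $\exp(-2^{nh(p)+o(n)})$ rate.
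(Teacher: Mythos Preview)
Your overall strategy and the exponent computation are sound, and you have correctly isolated the obstruction. However, the gap you flag is not a technicality; it is structural, and your equality $\P(\text{SP at all } y\in T)=\E[(1/2)^{M}]$ is actually false as written. The random variable $M=\lvert\{y\in T:\lvert W(y)\rvert>(1-\delta)^{n}\}\rvert$ is \emph{not} measurable with respect to $f|_{\{-1,1\}^{n}\setminus T}$, because $W(y)$ contains shell-$\ge 3$ contributions from $f(y')$ with $y'\in T\setminus\{y\}$. Once the forcing indicator at $y$ depends on the very bits $f(y')$ you are trying to treat as fresh fair coins, the product structure collapses (a two-point toy example where the forced signs at $y_{1},y_{2}$ coincide already gives $\P=1/2$ while your formula predicts $1/4$). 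Your proposed fixes (uniform Bernstein control of $W''$ over $2^{n-o(n)}$ points, or a direct Janson bound on the global forcing events) do not obviously succeed: the union bound on $\{\max_{y\in T}\lvert W''(y)\rvert=o((1-\delta)^{n})\}$ blows up exponentially, and Janson needs the ``bad'' events to be increasing functions of independent indicators with bounded-range dependence, which the global condition $\lvert W(y)\rvert>(1-\delta)^{n}$ does not satisfy.

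The paper's proof closes this gap by \emph{replacing} the global SP-forcing condition with a purely local surrogate. For each $y^{n}$ it defines the ``bad'' event that the shell-$1$ bias against $f(y^{n})$ is at least $1-\eta$ \emph{and} the shell-$d$ bias is at least $1/2$ for every $d=2,\ldots,\ell$ with $\ell=\log n$. For $\delta=\alpha/n$ the mass outside radius $\log n$ is negligible, so this local event \emph{deterministically} implies non-SP at $y^{n}$ whenever $\eta<\eta_{\alpha}=(\alpha-1)/(2\alpha)$; and since shells $2,\ldots,\log n$ are unbiased under a random $f$, the extra constraints cost only a factor $2^{-O(\log n)}$, leaving $\P(\text{bad at }y^{n})=2^{-n(1-\binent(\eta))+o(n)}$. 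Crucially, the bad event now depends only on $f$ inside a ball of radius $\log n$, so Janson's inequality applies with $\Delta\le 2^{-n(1-2\binent(\eta))+o(n)}\to 0$, yielding $\P(\text{no bad }y^{n})\le\exp(-2^{n\binent(\eta)+o(n)})$. In short, the missing idea in your plan is not a sharper tail bound on $W''$, but to absorb shells $2,\ldots,\log n$ into the bad event itself so that it becomes finite-range, after which Janson (rather than a code plus conditioning) handles the dependence.
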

The fact that $\rho$-SP functions are rare is not limited to the
$\rho=1-O(\frac{1}{n})$ regime, yet a different technique is needed
in order to establish this in other regimes. We next demonstrate how
a similar phenomenon holds in a high correlation regime where $\rho$
is fixed. Let $\eta_{\delta}$ be the minimal $\eta>0$ such that
\[
\frac{1}{2}\log\frac{1}{\delta^{2}+(1-\delta)^{2}}<\min\left\{ \log\frac{1}{1-\delta},\bindiv(\eta||\delta)\right\} 
\]
holds, where $\bindiv(p||q)\dfn p\log\frac{p}{q}+(1-p)\log\frac{1-p}{1-q}$
is the binary divergence function. It can be verified that $\eta_{\delta}<1/4$
for any $\delta<\delta_{\max}\approx0.0974$. 
\begin{thm}
\label{thm: sharp threshold large delta}For any $\delta\in(0,\delta_{\max})$,
the fraction of $\rho$-SP functions for $\rho=1-2\delta$ is at most
$\exp\left(-2^{n[1-\binent(2\eta_{\delta})]-o(n)}\right)$. 
\end{thm}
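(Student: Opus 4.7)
The plan is to identify a ``packing'' subset $A\subseteq\{-1,1\}^{n}$ of roughly $2^{n[1-\binent(2\eta_{\delta})]}$ points at pairwise Hamming distance at least $2\eta_{\delta}n$, and to show that for a uniformly random Boolean $f$, the self-predictability events $E_{y}\dfn\{\sgn T_{\rho}f(y)=f(y)\}$ for $y\in A$ satisfy $\P(\bigcap_{y\in A}E_{y})\le(1-c(\delta))^{|A|}+o(1)$ for some constant $c(\delta)>0$. The doubly-exponential bound on the fraction of $\rho$-SP functions then follows, since every such $f$ must satisfy $\bigcap_{y\in A}E_{y}$.

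A set $A$ with the claimed cardinality and minimum distance exists by the Gilbert--Varshamov bound. The key structural fact is that the Hamming balls $B_{y}\dfn\{x\colon\Hamd(x,y)\le\eta_{\delta}n\}$ for $y\in A$ are pairwise disjoint, so the ``local'' sums $S_{y}\dfn\sum_{x\in B_{y}}p(x\mid y)f(x)$ depend on disjoint blocks of the random table $f$ and are therefore mutually independent. Writing $T_{\rho}f(y)=S_{y}+R_{y}$ with $R_{y}\dfn\sum_{x\notin B_{y}}p(x\mid y)f(x)$, I would treat $R_{y}$ as a small perturbation and reduce the problem to showing that $\sgn S_{y}\ne f(y)$ with probability bounded away from zero at each $y$.

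Let $\sigma^{2}\dfn\delta^{2}+(1-\delta)^{2}$. A direct second-moment computation gives $\E S_{y}^{2}\asymp\sigma^{2n}$, while the Chernoff tail bound $\sum_{x\notin B_{y}}p(x\mid y)\le 2^{-n\bindiv(\eta_{\delta}\|\delta)+o(n)}$ for $\mathrm{Bin}(n,\delta)$ (which applies since $\eta_{\delta}>\delta$) bounds both the sup-norm and the $L^{2}$ norm of $R_{y}$. The inequality $\tfrac{1}{2}\log(1/\sigma^{2})<\bindiv(\eta_{\delta}\|\delta)$ then forces $|R_{y}|$ to be exponentially smaller than $\sigma^{n}$, the typical magnitude of $S_{y}$, so $\sgn T_{\rho}f(y)=\sgn S_{y}$ with probability $1-o(1)$. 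A local CLT / Berry--Esseen argument applied to $S_{y}$ (a weighted sum of independent $\pm 1$ random variables with weights well-spread over $B_{y}$) then gives $\P(E_{y}\mid f|_{B_{y}^{c}})\le 1-c(\delta)$ for some $c(\delta)>0$, uniformly in the conditioning. The second inequality $\tfrac{1}{2}\log(1/\sigma^{2})<\log(1/(1-\delta))$ is needed separately to ensure that the single term $(1-\delta)^{n}f(y)$ appearing inside $S_{y}$ is itself exponentially smaller than $\sigma^{n}$, so that the known sign of $f(y)$ cannot by itself bias the anti-concentration estimate. Independence of $\{S_{y}\}_{y\in A}$ then yields
\[
\P\!\Bigl(\bigcap_{y\in A}E_{y}\Bigr)\le (1-c(\delta))^{|A|}+o(1)\le \exp\!\bigl(-2^{n[1-\binent(2\eta_{\delta})]-o(n)}\bigr).
\]

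The main obstacle is that $R_{y}$ contains contributions from $f|_{B_{y'}}$ for $y'\ne y$, so strictly speaking the events $E_{y}$ are not independent; simultaneously controlling these cross-terms across all $|A|\approx 2^{n[1-\binent(2\eta_{\delta})]}$ points demands that the slack $\bindiv(\eta_{\delta}\|\delta)-\tfrac{1}{2}\log(1/\sigma^{2})$ granted by the minimality of $\eta_{\delta}$ be large enough to absorb a union bound over $|A|$ events. The restriction $\delta<\delta_{\max}$ is precisely what permits choosing such an $\eta_{\delta}<1/4$, so that $2\eta_{\delta}<1/2$ and the packing rate $1-\binent(2\eta_{\delta})$ is strictly positive.
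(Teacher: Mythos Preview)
Your approach is essentially the paper's: Gilbert--Varshamov packing at radius $\eta_{\delta}n$, the decomposition $T_{\rho}f(y)=S_{y}+R_{y}$ into ball and tail, deterministic control of $|R_{y}|$ via the binomial Chernoff bound together with the inequality $\tfrac{1}{2}\log(1/\sigma^{2})<\bindiv(\eta_{\delta}\|\delta)$, separate control of the center term $(1-\delta)^{n}f(y)$ via $\tfrac{1}{2}\log(1/\sigma^{2})<\log(1/(1-\delta))$, and a Berry--Esseen estimate giving $\P(E_{y})\le\tfrac{1}{2}+o(1)$.

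The one place the paper is crisper is precisely your ``main obstacle.'' Your conditional bound $\P(E_{y}\mid f|_{B_{y}^{c}})\le 1-c(\delta)$ is correct, but it does not chain directly into $\P(\bigcap_{y}E_{y})\le(1-c)^{|A|}$, because the events $E_{y_{1}},\ldots,E_{y_{i-1}}$ themselves depend on $f|_{B_{y_{i}}}$ through their own remainders $R_{y_{j}}$. Your proposed remedy (absorb a union bound over $|A|$ events using slack in the defining inequality) is unnecessary and also unavailable, since $\eta_{\delta}$ is chosen minimally and there is essentially no slack. The paper instead defines, for each $y$, a \emph{deterministic superset} event
\[
\mathcal{A}_{\eta}(y)\dfn\Bigl\{\textstyle\sum_{x\in B_{y}\setminus\{y\}}p(x\mid y)f(x)\le (1-\delta)^{n}+2^{-n\bindiv(\eta\|\delta)-\Theta(\log n)}\Bigr\},
\]
which contains $E_{y}$ pointwise (because $|R_{y}|$ and $(1-\delta)^{n}$ are bounded deterministically) and depends only on $f|_{B_{y}\setminus\{y\}}$. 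The events $\{\mathcal{A}_{\eta}(y)\}_{y\in A}$ are then genuinely independent, and the product bound follows with no union step at all.
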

We begin with the proof of Theorem \ref{thm: sharp threshold delta=00003D1/n}.
\begin{proof}[Proof of Theorem \ref{thm: sharp threshold delta=00003D1/n}]
In this proof we find it more convenient to work with a $\delta$
and $\{0,1\}$ convention. We begin by deriving a sufficient condition
for a function to be non-$\rho$-SP at any fixed $y^{n}$. This condition
depends only on local values of the function, up to a Hamming distance
of $\log n$ from $y^{n}$, and is tailored to the regime of $\delta=\Theta(1/n)$.
Specifically, we show that for a random choice of function, the probability
that our condition is satisfied decays exponentially with $n$, and
we derive an upper bound on the associated exponent. Then, since the
resulting exponent is smaller than $1$, we conclude that the expected
number of non-$\rho$-SP points for a random function is exponentially
large. This fact in itself, however, is not sufficient since there
are statistical dependencies between different points in the Hamming
cube. Nonetheless, Janson\textquoteright s theorem \cite[Theorem 8.1.1]{alon2004probabilistic}
along with the aforementioned ``locality'' of the sufficient condition
allow us to prove that the probability that all points in the Hamming
cube are $\rho$-SP is only double-exponentially small. 

We proceed to prove the local condition for non-$\rho$-SP-ness. To
that end, let us denote the \emph{shell of radius $d$ around $x^{n}\in\{0,1\}^{n}$}
by 
\[
\mathcal{S}(x^{n},d)\dfn\left\{ \tilde{x}^{n}:\Hamd(x^{n},\tilde{x}^{n})=d\right\} .
\]
For any function $f:\{0,1\}^{n}\to\{0,1\}$, let the $d$\emph{-shell
bias of $f$ }be 
\[
\beta_{d,f}(x^{n})\dfn\frac{1}{\left|\mathcal{S}(x^{n},d)\right|}\sum_{\tilde{x}^{n}\in\mathcal{S}(x^{n},d)}f(\tilde{x}^{n}).
\]
Fix $\eta>0$  and some $y^{n}$. Without loss of generality, we assume
below that $f(y^{n})=0$. Define the set of functions 
\[
\mathcal{B}_{\eta}(y^{n},1)=\left\{ f:\beta_{1,f}(y^{n})\geq1-\eta\right\} ,
\]
and for $2\leq d\leq\ell$, the sets 
\[
\mathcal{B}_{\eta}(y^{n},d)=\left\{ f:\beta_{d,f}(y^{n})\geq1/2\right\} ,
\]
where $\ell\geq3$. We say that $y^{n}$ is \emph{bad} for $f$ if
$f\in\mathcal{B}_{\eta}(y^{n})$, where 
\[
\mathcal{B}_{\eta}(y^{n})\dfn\bigcap_{d=1}^{\ell}\mathcal{B}_{\eta}(y^{n},d).
\]
Now, for any $n>\ell$, setting $\delta=\frac{\alpha}{n}$, any $f\in\mathcal{B}_{\eta}(y^{n})$
satisfies: 
\begin{align}
 & \P\left(f(X^{n})\neq f(Y^{n})\mid Y^{n}=y^{n}\right)\nonumber \\
 & \geq\sum_{d=1}^{\ell}\beta_{d,f}(y^{n})\binom{n}{d}\delta^{d}(1-\delta)^{n-d}\nonumber \\
 & \geq(1-\delta)^{n}\cdot\sum_{d=1}^{\ell}\beta_{d,f}(y^{n})\binom{n}{d}\delta^{d}\nonumber \\
 & \geq\left(1-\frac{\alpha}{n}\right){}^{n}\cdot\left(1-\frac{\ell}{n}\right){}^{\ell}\cdot\left(\sum_{d=1}^{\ell}\frac{\beta_{d,f}(y^{n})}{d!}\cdot\alpha^{d}\right)\nonumber \\
 & \geq\left(1-\frac{\alpha}{n}\right){}^{n}\cdot\left(1-\frac{\ell}{n}\right){}^{\ell}\cdot\left((1-\eta)\cdot\alpha+\frac{1}{2}\sum_{d=2}^{\ell}\frac{\alpha^{d}}{d!}\right)\nonumber \\
 & =\left(1-\frac{\alpha}{n}\right){}^{n}\cdot\left(1-\frac{\ell}{n}\right){}^{\ell}\cdot\left((1-\eta)\cdot\alpha+\frac{1}{2}\left(e^{\alpha}-1-\alpha-\sum_{d=\ell+1}^{\infty}\frac{\alpha^{d}}{d!}\right)\right).\label{eq: lower bound on conditional prob}
\end{align}
Taking $\ell$ to be $\Omega(1)$ and $o(n)$, say $\ell=\log n$,
(\ref{eq: lower bound on conditional prob}) tends to 
\[
\frac{1}{2}+\left(\frac{1}{2}-\eta\right)\alpha e^{-\alpha}-\frac{1}{2}e^{-\alpha}
\]
as $n\to\infty$. Let 
\[
\eta_{\alpha}\dfn\frac{\alpha-1}{2\alpha}.
\]
Clearly, $\eta_{\alpha}$ is monotonically increasing for $\alpha>0$,
where $\lim_{\alpha\downarrow1}\eta_{\alpha}=0$, and $\lim_{\alpha\uparrow\infty}\eta_{\alpha}=1/2$.
Setting $\eta\in(0,\eta_{\alpha})$ guarantees that (\ref{eq: lower bound on conditional prob})
is larger than $1/2$ for all large enough $n$. Hence, for such a
choice, 
\[
\P\left(f(X^{n})\neq f(Y^{n})\mid Y=y^{n}\right)>1/2,
\]
and so
\[
\left\{ f\in\mathcal{B}_{\eta}(y^{n})\right\} \subseteq\left\{ f\text{ is not }\text{\ensuremath{\rho}}\text{-SP at }y^{n}\right\} .
\]
Let us now choose $f$ uniformly at random over all Boolean functions
on $\{0,1\}^{n}$, and lower bound the probability that $f$ is $\rho$-SP
at $y^{n}$. To that end, note that Chernoff's bound implies that
\begin{equation}
\P\left(\beta_{1,f}(y^{n})\geq1-\eta\right)=2^{-n\left(1-\binent(\eta)\right)+o(n)},\label{eq: M1 large value asymptotic probability}
\end{equation}
and symmetry implies that
\[
\P\left(\beta_{d,f}(y^{n})\geq1/2\right)\geq1/2,
\]
for $2\leq d\leq\ell=\log n$. By independence,
\begin{align*}
\P\left(f\in\mathcal{B}_{\eta}(y^{n})\right) & =\prod_{d=1}^{\log n}\P\left(f\in\mathcal{B}_{\eta}(y^{n},d)\right)\\
 & =2^{-n\left(1-\binent(\eta)\right)+o(n)}\cdot2^{-(\log n-1)}\\
 & =2^{-n\left(1-\binent(\eta)\right)+o(n)},
\end{align*}
and so
\[
\P\left(f\text{ is not }\text{\ensuremath{\rho}}\text{-SP at }y^{n}\right)\geq2^{-n\left(1-\binent(\eta)\right)+o(n)}.
\]
This completes the proof of the local bound.

We now proceed to the global behavior of the number of non-$\rho$-SP
points. Let us first upper bound the probability $\P(f\in\mathcal{E})$
where 
\[
\mathcal{E}\dfn\bigcap_{y^{n}\in\{0,1\}^{n}}\mathcal{B}_{\eta}^{c}(y^{n}).
\]
This in turn will serve as an upper bound for the probability that
the function we draw is $\rho$-SP for the aforementioned $\rho$.
To that end, note that if $f$ is $\rho$-SP for $\rho=1-2\alpha\cdot n^{-1}$
then it must be that $f$ has no bad inputs, i.e., $f\in\mathcal{E}$.
Furthermore, note that the expected number of ``bad'' inputs is
given by
\[
\mu\dfn2^{n}\cdot\P\left(f\in\mathcal{B}_{\eta}(y^{n})\right)=2^{n\binent(\eta)+o(n)}.
\]
If the number of bad inputs had been Poisson distributed with mean
$\mu$, then 
\[
\P\left(f\in\mathcal{E}\right)=e^{-\mu}=\exp\left(-2^{n\cdot\binent(\eta)+o(n)}\right).
\]
However, the events $\mathcal{B}_{\eta}(x^{n})$ and $\mathcal{B}_{\eta}(y^{n})$
are dependent whenever $\Hamd(x^{n},y^{n})\leq2\ell$. Nonetheless,
Janson's correction \cite[Theorem 8.1.1]{alon2004probabilistic} implies
that 
\[
\P\left(f\in\mathcal{E}\right)\leq e^{-\mu+\frac{\Delta}{2}},
\]
where $\Delta$ is a correction term that depends on joint probability
of dependent bad events $\P(f\in\mathcal{B}_{\eta}(x^{n})\cap\mathcal{B}_{\eta}(y^{n}))$.
We next show that $\Delta\to0$ as $n\to\infty$ exponentially fast,
as long as $\eta\in(0,\binent^{-1}(1/2))$. Once this is established,
one can set $\eta=\min\{\eta_{\alpha},\binent^{-1}(1/2)\}$ to obtain
\[
\P\left(f\in\mathcal{E}\right)\leq\exp\left(-2^{n\cdot\binent(\eta)+o(n)}\right),
\]
and the theorem follows. 

To complete the proof, it remains to show that $\Delta\to0$ exponentially
fast. Let us denote $x^{n}\sim y^{n}$ whenever the events $\{f\in\mathcal{B}_{\eta}(x^{n})\}$
and $\{f\in\mathcal{B}_{\eta}(y^{n})\}$ are statistically dependent.
The term required for Janson's theorem is then given by 
\begin{equation}
\Delta\dfn\sum_{x^{n}\sim y^{n}}\P\left(f\in\mathcal{B}_{\eta}(x^{n})\cap\mathcal{B}_{\eta}(y^{n})\right).\label{eq: Janson correction term}
\end{equation}
Let us analyze the probability in (\ref{eq: Janson correction term})
under the assumption that $f(x^{n})=f(y^{n})=0$. It will be evident
that all other three cases for $(f(x^{n}),f(y^{n}))$ can be analyzed
in the same way and lead to essentially the same result. Bayes rule
implies that 
\begin{align}
 & \P\left(f\in\mathcal{B}_{\eta}(x^{n})\cap\mathcal{B}_{\eta}(y^{n})\mid f(x^{n})=f(y^{n})=0\right)\nonumber \\
 & =\P\left(f\in\mathcal{B}_{\eta}(x^{n},1)\mid f(x^{n})=f(y^{n})=0\right)\nonumber \\
 & \hphantom{=}\times\P\left(f\in\mathcal{B}_{\eta}(y^{n},1)\mid f(x^{n})=f(y^{n})=0,f\in\mathcal{B}_{\eta}(x^{n},1)\right)\nonumber \\
 & \hphantom{=}\times\P\Bigg(\bigcap_{d=2}^{\ell}\left\{ \left\{ f\in\mathcal{B}_{\eta}(x^{n},d)\right\} \cap\left\{ f\in\mathcal{B}_{\eta}(y^{n},d)\right\} \right\} \mid\nonumber \\
 & \hphantom{=}\hphantom{\hphantom{=}\hphantom{=}\hphantom{=}\hphantom{=}=}f(x^{n})=f(y^{n})=0,f\in\mathcal{B}_{\eta}(x^{n},1)\cap\mathcal{B}_{\eta}(y^{n},1)\Bigg).\label{eq: Bayes rule first derivation}
\end{align}
For the first probability on the r.h.s. of (\ref{eq: Bayes rule first derivation}),
we note that if $\Hamd(x^{n},y^{n})\geq2$ then $\mathcal{S}(x^{n},1)\cap\{x^{n},y^{n}\}=\phi$
and (\ref{eq: M1 large value asymptotic probability}) holds. Otherwise,
if $\Hamd(x^{n},y^{n})=1$ then $\mathcal{S}(x^{n},1)\cap\{x^{n},y^{n}\}=y^{n}$.
In that case, 
\begin{align}
 & \P\left(f\in\mathcal{B}_{\eta}(x^{n},1)\mid f(x^{n})=f(y^{n})=0\right)\nonumber \\
 & =\P\left(\beta_{1,f}(x^{n})\geq1-\eta\mid f(x^{n})=f(y^{n})=0\right)\nonumber \\
 & =\P\left(\frac{1}{\binom{n}{1}}\sum_{\tilde{y}^{n}\in\mathcal{S}(x^{n},1)\backslash\{y^{n}\}}f(\tilde{y}^{n})+f(y^{n})\geq1-\eta\mid f(x^{n})=f(y^{n})=0\right)\nonumber \\
 & =\P\left(\frac{1}{n-1}\sum_{\tilde{y}^{n}\in\mathcal{S}(x^{n},1)\backslash\{y^{n}\}}f(\tilde{y}^{n})\geq\frac{n}{n-1}(1-\eta)\right)\nonumber \\
 & =2^{-(n-1)\left[1-\binent\left(\eta+O(n^{-1})\right)\right]+o(n)}\nonumber \\
 & =2^{-n(1-\binent(\eta))+o(n)},\label{eq: first layer A step - derivation}
\end{align}
where the last transition is since $\binent(\eta)$ is a smooth function,
with bounded derivatives around a neighborhood of any fixed $\eta\in(0,1)$.

For the second probability on the r.h.s. of (\ref{eq: Bayes rule first derivation}),
if $\Hamd(x^{n},y^{n})\geq3$ then $\mathcal{S}(y^{n},1)\cap\{\{x^{n},y^{n}\}\cup\mathcal{S}(x^{n},1)\}=\phi$
and (\ref{eq: M1 large value asymptotic probability}) holds. Next,
if $\Hamd(x^{n},y^{n})=1$ then $\mathcal{S}(y^{n},1)\cap\{\{x^{n},y^{n}\}\cup\mathcal{S}(x^{n},1)\}=x^{n}$.
A derivation similar to (\ref{eq: first layer A step - derivation})
shows that 
\begin{equation}
\P\left(f\in\mathcal{B}_{\eta}(y^{n},1)\mid f(x^{n})=f(y^{n})=0,f\in\mathcal{B}_{\eta}(x^{n},1)\right)=2^{-n(1-\binent(\eta))+o(n)}\label{eq: first layer B step}
\end{equation}
holds. If $\Hamd(x^{n},y^{n})=2$ then $\mathcal{S}(y^{n},1)\cap\{\{x^{n},y^{n}\}\cup\mathcal{S}(x^{n},1)\}$
contains exactly two points. Again, a derivation similar to (\ref{eq: first layer A step - derivation})
(with $n-2$ replacing $n-1$) shows that (\ref{eq: first layer B step})
holds. 

The third probability in the r.h.s. of (\ref{eq: Bayes rule first derivation})
can be trivially upper bounded by $1$. Thus,

\[
\P\left(f\in\mathcal{B}_{\eta}(x^{n})\cap\mathcal{B}_{\eta}(y^{n})\mid f(x^{n})=f(y^{n})=0\right)\leq2^{-2n(1-\binent(\eta))+o(n)}.
\]
Evidently, analogous analysis holds for all other three possibilities
of the pair $(f(x^{n}),f(y^{n}))$,\footnote{Note that the value of $f(x^{n})$ (resp. $f(y^{n})$) does not change
the asymptotics of the $\P(f\in\mathcal{B}_{\eta}(y^{n}))$ (resp.
$\P(f\in\mathcal{B}_{\eta}(x^{n}))$).} and so 
\begin{equation}
\P\left(f\in\mathcal{B}_{\eta}(x^{n})\cap\mathcal{B}_{\eta}(y^{n})\right)\leq2^{-2n(1-\binent(\eta))+o(n)}.\label{eq: bound on pairwise bad events}
\end{equation}
Now, the number of dependent pairs is upper bounded by $2^{n}\cdot\binom{n}{2\ell}$
since $x^{n}\sim y^{n}$ is possible only when $\Hamd(x^{n},y^{n})\leq2\ell$.
As $\ell=\log n$ was chosen, $\binom{n}{2\ell}\leq n^{\log n}=2^{\log^{2}n}$.
Then (\ref{eq: bound on pairwise bad events}) implies that 
\begin{align*}
\Delta & \leq2^{n+o(n)}\cdot2^{-2n(1-\binent(\eta))+o(n)}\\
 & =2^{-n\left(1-2\binent(\eta)\right)+o(n)},
\end{align*}
and so $\Delta\to0$ as $n\to\infty$ exponentially fast, as long
as $\eta\in(0,\binent^{-1}(1/2))$. This concludes the proof.
\end{proof}
We move on to the proof of Theorem \ref{thm: sharp threshold large delta}.
\begin{proof}[Proof of Theorem \ref{thm: sharp threshold large delta}]
In this proof we find it more convenient to work with a $\delta$
and $\{-1,1\}$ convention. As the proof of Theorem \ref{thm: sharp threshold delta=00003D1/n},
this proof also comprises of a local condition and global analysis.
We begin by deriving a necessary condition for a function to be $\rho$-SP,
which is now based only on the value of the function at points of
Hamming distance (slightly larger than) $2\eta n$, with $\eta<1/4$.
This condition is tailored to the regime of a fixed $\delta$. We
then use a central-limit theorem to show that the probability that
this condition is satisfied is close to $1/2$. For global analysis,
we consider a subset of the hamming cube of size about $2^{n[1-\binent(2\eta)]}$
whose minimal Hamming distance is at least $\eta n$. The existence
of such a set is guaranteed by the Gilbert-Varshamov bound \cite[Th. 4.10]{roth2006introduction}.
Since the points in this subset are sufficiently far apart, the event
that the local condition holds for one of the points is independent
of the corresponding events pertaining to all other points. Thus,
the probability of a function to be $\rho$-SP is not more than about
$2^{-2^{n[1-\binent(2\eta)]}}$. 

To prove the required local condition, let $\delta<\eta<1/4$ be given,
and let $\mathcal{D}_{\eta}(y^{n})$ be a punctured Hamming ball of
relative radius $\eta$ around $y^{n}$, i.e., 
\[
\mathcal{D}_{\eta}(y^{n})\dfn\left\{ z^{n}\in\{-1,1\}^{n}:0<\frac{1}{n}\Hamd(z^{n},y^{n})\leq\eta\right\} .
\]
Then, clearly 
\[
\left|p(x^{n}|y^{n})\cdot f(x^{n})\right|\leq p(x^{n}|y^{n})=2^{-n\log\frac{1}{1-\delta}},
\]
and by the Chernoff bound (or the method of types \cite{csiszar2011information})
\begin{align*}
\left|\sum_{x^{n}\in\mathcal{D}_{\eta}^{c}(y^{n})\backslash y^{n}}p(x^{n}|y^{n})\cdot f(x^{n})\right|\leq & \sum_{x^{n}\in\mathcal{D}_{\eta}^{c}(y^{n})\backslash y^{n}}p(x^{n}|y^{n})\\
 & \leq\P\left(X^{n}\not\in\mathcal{D}_{\eta}(y^{n})\mid Y^{n}=y^{n}\right)\\
 & \leq2^{-n\bindiv(\eta||\delta)-\Theta(\log n)}.
\end{align*}
Focusing on some $y^{n}$, we may assume without loss of generality
that $f(y^{n})=-1$. Then,
\begin{align*}
 & \E\left(f(X^{n})\mid Y^{n}=y^{n}\right)\\
 & =\sum_{x^{n}}p(x^{n}|y^{n})\cdot f(x^{n})\\
 & =p(y^{n}|y^{n})\cdot f(y^{n})+\sum_{x^{n}\in\mathcal{D}_{\eta}(y^{n})}p(x^{n}|y^{n})\cdot f(x^{n})+\sum_{x^{n}\in\mathcal{D}_{\eta}^{c}(y^{n})\backslash y^{n}}p(x^{n}|y^{n})\cdot f(x^{n})\\
 & \geq-2^{-n\log\frac{1}{1-\delta}}+\sum_{x^{n}\in\mathcal{D}_{\eta}(y^{n})}p(x^{n}|y^{n})\cdot f(x^{n})-2^{-n\bindiv(\eta||\delta)-\Theta(\log n)},
\end{align*}
and thus,
\begin{multline*}
\left\{ f\text{ is }\text{\ensuremath{\rho}}\text{-SP at }y^{n}\right\} \subseteq\\
\left\{ \sum_{x^{n}\in\mathcal{D}_{\eta}(y^{n})}p(x^{n}|y^{n})\cdot f(x^{n})\leq2^{-n\log\frac{1}{1-\delta}}+2^{-n\bindiv(\eta||\delta)-\Theta(\log n)}\right\} \dfn\mathcal{A}_{\eta}(y^{n}).
\end{multline*}
We next evaluate the probability that the necessary condition is satisfied
when $f$ is chosen uniformly at random over all Boolean functions
on $\{-1,1\}^{n}$. Specifically, we use the Berry-Esseen central-limit
theorem \cite[Chapter XVI.5, Theorem 2]{Feller} to bound $\P(\mathcal{A}_{\eta}(y^{n}))$.
To that end, we note that 
\[
\E\left(\sum_{x^{n}\in D_{\eta}(y^{n})}p(x^{n}|y^{n})\cdot f(x^{n})\right)=0,
\]
and that by the method of types \cite{csiszar2011information}
\begin{align}
\E\left(\sum_{x^{n}\in\mathcal{D}_{\eta}(y^{n})}p(x^{n}|y^{n})\cdot f(x^{n})\right)^{2} & =\sum_{x^{n}\in\mathcal{D}_{\eta}(y^{n})}p^{2}(x^{n}|y^{n})\nonumber \\
 & =\sum_{\ell=1}^{\lfloor\eta n\rfloor}\sum_{x^{n}\colon\Hamd(x^{n},y^{n})=\ell}2^{-2n[\binent(\ell/n)+\bindiv(\ell/n||\delta)]}\nonumber \\
 & =2^{-2n\cdot\min_{0\leq\zeta\leq\eta}[\binent(\zeta)+2\bindiv(\zeta||\delta)]-\Theta(\log n)}\label{eq: variance for random function}\\
 & =2^{-n\cdot\log\frac{1}{\delta^{2}+(1-\delta)^{2}}-\Theta(\log n)}\nonumber 
\end{align}
where $\zeta\dfn d/n$, and the minimum in (\ref{eq: variance for random function})
is attained for $\zeta=\frac{\delta^{2}}{\delta^{2}+(1-\delta)^{2}}$
(which satisfies $\zeta\leq\delta<\eta$). Similarly, we note that
\begin{align*}
\gamma_{n} & \dfn\sum_{x^{n}\in\mathcal{D}_{\eta}(y^{n})}\E\left|p(x^{n}|y^{n})\cdot f(x^{n})\right|^{3}\\
 & =2^{-n\cdot\min_{0\leq\zeta\leq\eta}[2\binent(\zeta)+3\bindiv(\zeta||\delta)]-\Theta(\log n)},
\end{align*}
where clearly $\gamma_{n}$ decreases exponentially for any $\delta\in(0,1/2)$.
Consequently, the Berry-Esseen central-limit theorem implies that
there exists a universal constant $c$ such that 
\begin{align}
\P\left(\mathcal{A}_{\eta}(y^{n})\right) & \leq1-Q\left(\frac{2^{-n\log\frac{1}{1-\delta}}+2^{-n\bindiv(\eta||\delta)-\Theta(\log n)}}{2^{-n\cdot\frac{1}{2}\log\frac{1}{\delta^{2}+(1-\delta)^{2}}-\Theta(\log n)}}\right)+c\gamma_{n}\label{eq: sharp threshold fixed delta Q function}\\
 & \leq\frac{1}{2}+o(1).\label{eq: sharp threshold fixed delta}
\end{align}
where in (\ref{eq: sharp threshold fixed delta Q function}) $Q(\cdot)$
is the tail distribution function of the standard normal distribution,
and (\ref{eq: sharp threshold fixed delta}) is satisfied whenever
$\eta>\eta_{\delta}$. This completes the analysis of the local necessary
condition. 

We next move on to global analysis. By the Gilbert-Varshamov bound
\cite[Th. 4.10]{roth2006introduction}, there exists a set (also known
as an\emph{ error-correcting code}) $\mathcal{C}_{n}\subset\{-1,1\}^{n}$
such that 
\[
|\mathcal{C}_{n}|\geq2^{n[1-\binent(2\eta)]-o(n)}
\]
and $\mathcal{\mathcal{D}}_{\eta}(x^{n})\cap\mathcal{\mathcal{D}}_{\eta}(y^{n})=\phi$
for all $x^{n},y^{n}\in\mathcal{C}_{n}$. Consequently, 
\begin{align*}
\P\left(f\text{ is }\text{\ensuremath{\rho}}\text{-SP}\right) & \leq\P\left(\bigcap_{y^{n}\in\mathcal{C}_{n}}f\text{ is }\text{\ensuremath{\rho}}\text{-SP at }y^{n}\right)\\
 & \leq\P\left(\bigcap_{y^{n}\in\mathcal{C}_{n}}\left\{ f\in\mathcal{A}_{\eta}^{c}(y^{n})\right\} \right)\\
 & =\prod_{y^{n}\in\mathcal{C}_{n}}\P\left(f\in\mathcal{A}_{\eta}(y^{n})\right)\\
 & \leq\left(\frac{1}{2}+o(1)\right)^{|\mathcal{C}_{n}|}.
\end{align*}
The proof is completed since $|\mathcal{C}_{n}|$ increases exponentially
for $\eta<1/4$. 
\end{proof}
\begin{rem}
It is evident that the proof of Theorem \ref{thm: sharp threshold large delta}
also holds for any sequence of $\{\delta_{n}\}$ such that $\delta_{n}=\omega(\frac{1}{\sqrt{n}})$
and $\overline{\delta}\dfn\limsup_{n\to\infty}\delta_{n}<\delta_{\max}$.
Indeed, (\ref{eq: sharp threshold fixed delta}) holds in this case
too, as long as we choose $\eta$ to be $\eta_{\overline{\delta}}$. 
\end{rem}

\section{Open Problems \label{sec:Open Problems}}

We have introduced the notion of self-predictability for Boolean functions.
There are many interesting questions left open; below is far from
an exhaustive list.

We know that the characters, Majority and a few other LTFs (found
numerically) are USP, and we can create many other USP functions from
them. However, we still lack a clear understanding of what makes a
function USP.
\begin{problem}
Characterize the family of USP functions. Specifically, how many USP
functions are there?
\end{problem}
More specifically, we ask: 
\begin{problem}
Is there a finite set of USP functions and a finite set of SP-preserving
operations that span all USP functions? 
\end{problem}
Adding symmetry to the mix, we conjecture the following. 
\begin{conjecture}
The only symmetric USP functions are Majority and the largest character.
In particular, Majority is the only monotone and symmetric USP function. 
\end{conjecture}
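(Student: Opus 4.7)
The plan is to combine the balance and WST conditions from Section~\ref{sec:low-corr} with the Fourier structure of symmetric functions via Krawtchouk polynomials, then case-split on the minimum Fourier level.

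For a symmetric $f$, write $h(w)$ for the common value on inputs of Hamming weight $w$; the Fourier coefficients satisfy $\hat{f}_S = c_{|S|}$, and using $e_k(x) = \sum_{|S|=k} x^S = K_k(w(x))$ with $K_k(w) = [z^k](1+z)^w(1-z)^{n-w}$ the $k$-th Krawtchouk polynomial, one has $f(x) = \sum_k c_k K_k(w(x))$. A non-constant USP function is LCSP, so by Corollary~\ref{cor: unblanaced functions are SP only constants} it is balanced ($c_0 = 0$), and by Proposition~\ref{prop: USP implies WST} it is WST, that is, $h(w) = \sgn\bigl(c_{k^*} K_{k^*}(w)\bigr)$ wherever $K_{k^*}(w)\neq 0$, where $k^* := \Lev[f]$. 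For $k^* = 1$, $K_1(w) = 2w-n$ is monotone and WST forces $h = \pm\sgn(2w-n)$; for odd $n$ this is $\pm\maj$, while for even $n$ the requirement $\sum_w\binom{n}{w}h(w)=0$ reduces to $\binom{n}{n/2}h(n/2)=0$, which cannot be met, eliminating this case. For $k^* = n$, $e_n(x) = \prod_i x_i$ has $|e_n|=1$, forcing $f = \pm\chi_{[n]}$, the largest character. The remaining work is to rule out $1 < k^* < n$.

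In this intermediate regime, WST pins down $h$ as $\pm\sgn K_{k^*}$ on the complement of the (at most $k^*$-element) integer zero set of $K_{k^*}$, with free sign choice on the zero set; one must show that no such $h$ satisfies $c_j = 0$ for every $j < k^*$. Half of these orthogonality conditions are automatic from the Krawtchouk symmetry $K_k(n-w) = (-1)^k K_k(w)$: when $j + k^*$ is odd the integrand $\binom{n}{w} K_j(w)\sgn K_{k^*}(w)$ is antisymmetric under $w \leftrightarrow n-w$ and sums to zero regardless of the choices at the zeros. The main obstacle is to prove that at least one of the surviving conditions (those with $j < k^*$ and $j \equiv k^* \pmod 2$) fails for every $n$ and every $1 < k^* < n$. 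For the simplest instance $k^*=2,\, j=0$, the condition $c_0 = 0$ reduces to $\sum_{|2w-n|<\sqrt{n}}\binom{n}{w} = 2^{n-1}$, which fails by the central limit theorem (the middle band has binomial mass $\approx 2\Phi(1)-1$, bounded away from $1/2$) and is easily checked for small $n$. I expect the hard part to be turning this into a clean uniform statement across all $(k^*,j)$: one would need either a closed-form identity for $\sum_w \binom{n}{w} K_{k^*-2}(w)\sgn K_{k^*}(w)$ (and analogues for $j < k^*-2$ when the zero set at $K_{k^*}$ admits several choices), or a structural argument exploiting that in the continuous analogue $\sgn K_{k^*}$ \emph{is} orthogonal to $K_0,\dots,K_{k^*-1}$, while the discrete correction is controllable and provably non-zero for $1 < k^* < n$.

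The ``in particular'' claim does not require settling the full conjecture. A monotone symmetric Boolean function has the threshold form $h(w)=\sgn(w-t)$, and the balance condition (guaranteed by USP and non-constancy via Corollary~\ref{cor: unblanaced functions are SP only constants}) forces $t=(n+1)/2$; this is attainable only for odd $n$, and yields $f = \maj$.
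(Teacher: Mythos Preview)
This statement appears in the paper as an open \emph{conjecture} (Section~\ref{sec:Open Problems}); the paper gives no proof, so there is nothing to compare your attempt against.

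On the merits: your reduction via Corollary~\ref{cor: unblanaced functions are SP only constants} (balance) and Proposition~\ref{prop: USP implies WST} (WST), together with the Krawtchouk parameterization of symmetric functions, is the natural framework. You correctly dispose of the endpoints $k^{*}=1$ (forcing $\pm\maj$ for odd $n$, and ruling out even $n$ via the residual $\binom{n}{n/2}h(n/2)$ term) and $k^{*}=n$ (forcing $\pm\chi_{[n]}$). However, as you yourself acknowledge, the intermediate range $1<k^{*}<n$ is not settled: your CLT heuristic for $k^{*}=2$ is persuasive for large $n$ but is not a proof, and for general $k^{*}$ you only outline what would need to be shown. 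So the main conjecture remains open in your treatment, just as it does in the paper.

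Your direct argument for the ``in particular'' clause, on the other hand, is complete and correct, and actually goes beyond what the paper establishes. A non-constant monotone symmetric Boolean function is a threshold $h(w)=\sgn(w-t)$; USP implies LCSP, hence balanced by Corollary~\ref{cor: unblanaced functions are SP only constants}; balance forces $\sum_{w\geq t}\binom{n}{w}=2^{n-1}$, which is solvable only for odd $n$ with $t=(n+1)/2$, yielding $\maj$. Combined with Theorem~\ref{thm: Majority USP}, this resolves the second sentence of the conjecture independently of the first. You should state explicitly that the constant functions are trivially monotone, symmetric, and USP, so the uniqueness claim is to be read modulo constants (and, for even $n$, it is vacuous).
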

We have seen that LCSP functions are WST, but not vice versa. 
\begin{problem}
Find a simple condition guaranteeing that a WST function is LCSP (resp.
USP).
\end{problem}
We say that a function $f$ is \emph{monotonically SP} if there exists
$\rho_{0}$ such that $f$ is $\rho$-SP for $\rho>\rho_{0}$ and
not $\rho$-SP for $\rho<\rho_{0}$. We have seen that there exist
(balanced) functions that are not monotonically SP. 
\begin{problem}
Characterize the family of monotonically SP functions. 
\end{problem}
We have bounded the ratio between the strongest and second strongest
coefficient of an LCSP LTF. This is quite weak: Let $r_{n}(\mathcal{F})$
be the minimum number such that any LTF in the family $\mathcal{F}$
admits a representation in which the ratio between the maximal coefficient
and minimal coefficient (in absolute values) is at most $r_{n}(\mathcal{F})$.
It is known in general, (see \cite[Theorem 2]{babai2010weights} and
references therein) that $2^{-n(2-o(1))}\cdot n^{n/2}\leq r_{n}(\mathcal{F})\leq2^{n-1}\cdot(n+1)^{(n+1)/2}$
if $\mathcal{F}$ is the family of all LTFs. It is interesting to
ask whether $r_{n}(\mathcal{F})$ becomes much smaller under self-predictability. 
\begin{problem}
Characterize $r_{n}(\mathcal{F})$ when $\mathcal{F}$ is the family
of LCSP LTFs (resp. USP LTFs). 
\end{problem}
Let $\mathcal{G}_{\rho,n}$ be a directed graph over the set of all
Boolean functions with $n$ variables, where we draw a directed edge
from every function $f$ to its optimal predictor $\sgn T_{\rho}f$
(unless they coincide). To avoid ambiguities, we can set $\sgn T_{\rho}f$
equal to $f$ whenever $T_{\rho}f$ is exactly zero. It is easy to
see that the number of $\rho$-SP functions is upper bounded by the
number of weakly connected components of $\mathcal{G}_{\rho,n}$,
namely the connected components of the associated undirected graph
obtained by removing the direction of the edges. In fact, we conjecture
that these quantities are exactly equal, or equivalently: 
\begin{conjecture}
$\mathcal{G}_{\rho,n}$ contains no cycles.
\end{conjecture}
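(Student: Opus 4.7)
The plan is to exhibit a strict Lyapunov function on $\mathcal{G}_{\rho,n}$: a real-valued functional on Boolean functions that is strictly monotone along every directed edge. Once such a potential is in hand, the absence of cycles is immediate. My candidate is the stability itself, $\Phi(f)\dfn\Stab_\rho[f]$.

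I would begin by fixing $\rho\in(0,1]$; the case $\rho=0$ is trivial, since $T_0 f\equiv\hat{f}_\emptyset$ is constant and the only possible target of an outgoing edge is then a constant function, which is a sink. Consider an arbitrary edge $f\to g$ in $\mathcal{G}_{\rho,n}$, where $g$ is obtained from $f$ by the tie-breaking rule of the conjecture ($g=\sgn T_\rho f$ off ties, $g=f$ at ties), with $g\ne f$. My aim is to prove the strict inequality $\Stab_\rho[g]>\Stab_\rho[f]$.

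The key estimate chains the strong stability with Cauchy--Schwarz. Since $g$ agrees with $\sgn T_\rho f$ wherever $T_\rho f\ne 0$, the identity $\Stab_\rho^*[f]=\E|T_\rho f(Y^n)|=\langle T_\rho f,g\rangle$ holds. Invoking the semigroup identity $T_\rho=T_{\sqrt\rho}\circ T_{\sqrt\rho}$ together with the self-adjointness of $T_{\sqrt\rho}$ (the same device used in deriving $\Stab_\rho^*[f]\leq\sqrt{\Stab_{\rho^2}[f]}$) gives $\langle T_\rho f,g\rangle=\langle T_{\sqrt\rho}f,T_{\sqrt\rho}g\rangle$, and Cauchy--Schwarz then yields $\Stab_\rho^*[f]\leq\sqrt{\Stab_\rho[f]\cdot\Stab_\rho[g]}$. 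Independently, at every $y^n$ on which $f$ and $g$ disagree one has $T_\rho f(y^n)\bigl(g(y^n)-f(y^n)\bigr)=2|T_\rho f(y^n)|>0$, which makes $\Stab_\rho^*[f]-\Stab_\rho[f]=\E\bigl[T_\rho f(Y^n)(g(Y^n)-f(Y^n))\bigr]$ strictly positive whenever $f\ne g$. Combining these with the fact that $\Stab_\rho[f]>0$ for every Boolean $f$ when $\rho>0$ (since $\sum_S\hat{f}_S^2=1$), dividing yields $\Stab_\rho[g]\geq(\Stab_\rho^*[f])^2/\Stab_\rho[f]>\Stab_\rho[f]$, as desired.

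The step I expect to be the main (though modest) obstacle is ensuring that strictness survives the combination of steps: the Cauchy--Schwarz bound alone is non-strict, and it is the pointwise comparison at disagreement points that supplies the strict gap between $\Stab_\rho^*[f]$ and $\Stab_\rho[f]$. One must also verify that the argument is insensitive to the tie-breaking convention: under the convention adopted in the conjecture, $g$ is a genuine Boolean function, and an edge exists precisely when the optimal predictor disagrees with $f$ at some $y^n$ where $T_\rho f(y^n)\ne 0$ --- exactly the situation in which the strict inequality above applies. Once strict monotonicity of $\Stab_\rho$ along every edge is established, a directed cycle $f_0\to f_1\to\cdots\to f_k=f_0$ would force $\Stab_\rho[f_0]<\Stab_\rho[f_1]<\cdots<\Stab_\rho[f_0]$, a contradiction; hence $\mathcal{G}_{\rho,n}$ is acyclic, each weakly connected component contains exactly one sink (i.e., one $\rho$-SP function), and the equivalence between the number of SP functions and the number of components follows.
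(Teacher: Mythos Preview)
The paper does not prove this statement: it appears in Section~\ref{sec:Open Problems} as an open conjecture, accompanied only by the remark that it would follow from the existence of a dissimilarity measure $\theta(f,\sgn T_\rho f)$ that strictly decreases under simultaneous application of $\sgn T_\rho$ to both arguments. There is thus no ``paper's own proof'' to compare against.

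Your argument, however, appears to be a complete and correct proof of the conjecture. The chain
\[
\Stab_\rho[f]\;<\;\langle T_\rho f,\,g\rangle\;=\;\langle T_{\sqrt\rho}f,\,T_{\sqrt\rho}g\rangle\;\le\;\sqrt{\Stab_\rho[f]\cdot\Stab_\rho[g]}
\]
is valid: the strict inequality on the left holds because $g\ne f$ forces a point $y^n$ with $T_\rho f(y^n)\ne 0$ and $f(y^n)\ne\sgn T_\rho f(y^n)$, contributing $2|T_\rho f(y^n)|>0$ to the difference; the middle equality is the semigroup/self-adjointness identity already used in the paper; the right inequality is Cauchy--Schwarz together with $\|T_{\sqrt\rho}h\|_2^2=\Stab_\rho[h]$. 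Since $\Stab_\rho[f]\ge\rho^{n}>0$ for $\rho\in(0,1]$, division gives $\Stab_\rho[g]>\Stab_\rho[f]$, so $\Stab_\rho$ is a strict Lyapunov function along every edge and no directed cycle can exist. The tie-breaking convention adopted in the conjecture is handled correctly: ties contribute zero to $\langle T_\rho f,g\rangle$ regardless of how $g$ is defined there, and an edge exists only when some non-tie disagreement occurs.

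Your approach is in fact simpler than the route the paper hints at: rather than a two-argument dissimilarity that contracts under the map, you produce a one-argument potential that strictly increases. This also yields the corollary the paper mentions just before the conjecture, namely that the number of $\rho$-SP functions equals the number of weakly connected components of $\mathcal{G}_{\rho,n}$, since out-degree is at most one and acyclicity forces each component to be an in-tree rooted at its unique sink.
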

Note that if the above conjecture holds, then a simple way to arrive
at a $\rho$-SP function is to start with some function $f$ and repeatedly
apply the $\sgn T_{\rho}$ operator; this procedure will terminate
at a $\rho$-SP function in finite time. In fact, simulations indicate
that this convergence happens very quickly, which may hint that the
weakly connected components of $\mathcal{G}_{\rho,n}$ have small
depth. 

\section*{Acknowledgments}

We are grateful to Or Ordentlich for coming up with the original (different)
argument that Majority is USP, based on May\textquoteright s Theorem
(Remark \ref{rem: proof of majority is USP via May's theorem}), and
to Lele Wang for coming up with Example \ref{exa:balanced function may not have balanced predictor}.
We would like to thank both, as well as Omri Weinstein, for providing
valuable insight during many stimulating discussions.

\bibliographystyle{siam}
\bibliography{Self_Predicting_Functions}

\end{document}